\newcommand{\quot}[1]{``#1''}
\newcommand{\myquot}[1]{``#1''}
\newcommand{\coloneq}{\mathop{:=}}
\newcommand{\cceq}{\mathop{::=}}
\renewcommand{\epsilon}{\varepsilon}
\newcommand{\pow}[1]{2^{#1}}
\newcommand{\nats}{\mathbb{N}}
\newcommand{\card}[1]{|#1|}
\newcommand{\size}[1]{\card{#1}}
\newcommand{\set}[1]{\{#1\}}
\newcommand{\aut}{\mathfrak{A}}
\newcommand{\arena}{\mathcal{A}}
\newcommand{\game}{\mathcal{G}}
\newcommand{\col}{\Omega}
\newcommand{\halfthinspace}{{\kern .08333em}}
\newcommand{\ttrue}{\texttt{tt}}
\newcommand{\ffalse}{\texttt{ff}}
\newcommand{\F}{\mathop{\mathbf{F}}}
\newcommand{\U}{\mathbin{\mathbf{U}}}
\newcommand{\conc}{\,;}
\newcommand{\Var}{\mathcal{V}}
\newcommand{\cl}{\mathrm{cl}}
\newcommand{\var}{\mathrm{var}}
\newcommand{\vardiamond}{\mathrm{var}_{\Diamond}}
\newcommand{\varbox}{\mathrm{var}_{\Box}}
\newcommand{\ddiamond}[1]{\langle\/ #1 \/\rangle\,}
\newcommand{\bbox}[1]{[\halfthinspace#1\halfthinspace]\,}
\newcommand{\ddiamondle}[2]{{\langle\/ #1 \/\rangle}_{\!\le #2}\,}
\newcommand{\bboxle}[2]{{[\halfthinspace#1\halfthinspace]}_{\le #2}\,}
\newcommand{\ddiamondcp}[1]{{\langle\/ #1 \/\rangle}_{\!\mathit{cp}}\,}
\newcommand{\bboxcp}[1]{{[\halfthinspace#1\halfthinspace]}_{\mathit{cp}}\,}
\newcommand{\Rexp}{\mathcal{R}}
\newcommand{\rel}[1]{\mathrm{rel}(#1)}
\newcommand{\dual}[1]{\overline{#1}}
\newcommand{\ltl}{\text{LTL}}
\newcommand{\pltl}{\text{PLTL}}
\newcommand{\prompt}{\text{PROMPT}$\textendash$\ltl}
\newcommand{\pldl}{\text{PLDL}}
\newcommand{\ldl}{\text{LDL}}
\newcommand{\ldlt}{\text{LDL}_{cp}}
\newcommand{\pldldiamond}{\text{PLDL}_{\Diamond}}
\newcommand{\pldlbox}{\text{PLDL}_{\Box}}
\newcommand{\nlogspace}{{\textsc{NLogSpace}}}
\newcommand{\pspace}{\textsc{{PSpace}}}
\newcommand{\twoexp}{\textsc{{2ExpTime}}}
\newcommand{\bplus}{\mathcal{B}^{+}}
\newcommand{\trace}{\mathrm{tr}}
\newcommand{\suc}[2]{\mathrm{Succ}_{#1}{#2}}
\newcommand{\outcome}{\mathrm{outcome}}
\newcommand{\trans}{\mathcal{T}}
\newcommand{\sys}{\mathcal{S}}
\newcommand{\marking}{m}
\newcommand{\update}[2]{\mathrm{upd}(#1,#2)}
\newcommand{\phiass}{\varphi_{\!A}}
\newcommand{\phigua}{\varphi_G}
\newcommand{\agmc}[3]{\langle #1 \rangle #2 \langle #3 \rangle}
\newcommand{\agmcdflt}{\agmc{\phiass}{\sys}{\phigua}}
\newcommand{\parcomp}{\!\parallel\!}
\title{Parametric Linear Dynamic Logic (full version)\thanks{A preliminary version of this work appeared in GandALF 2014. The research leading to this work was partially supported by the projects ``TriCS'' (ZI 1516/1-1) and ``AVACS'' (SFB/TR 14) of the German Research Foundation (DFG).}}
\author{Peter Faymonville and Martin Zimmermann}
\institute{Reactive Systems Group, Saarland University, 66123 Saarbrücken, Germany\\
 \email{\{faymonville, zimmermann\}@react.uni-saarland.de}}
\begin{document}

\maketitle

\begin{abstract}
We introduce Parametric Linear Dynamic Logic (PLDL), which extends Linear Dynamic Logic (LDL) by adding temporal operators equipped with parameters that bound their scope. LDL itself was proposed as an extension of Linear Temporal Logic (LTL) that is able to express all $\omega$-regular specifications while still maintaining many of LTL's desirable properties like intuitive syntax and semantics and a translation into non-deterministic Büchi automata of exponential size. However, LDL lacks capabilities to express timing constraints. By adding parameterized operators to LDL, we obtain a logic which is able to express all $\omega$-regular properties and which subsumes parameterized extensions of LTL like Parametric LTL and PROMPT-LTL.

Our main technical contribution is a translation of PLDL formulas into non-deterministic Büchi  automata of exponential size via alternating automata. This yields polynomial space algorithms for model checking and assume-guarantee model checking and a realizability algorithm with doubly-exponential running time. All three problems are also shown to be complete for these complexity classes. Moreover, we give tight upper and lower bounds on optimal parameter values for model checking and realizability. Using these bounds, we present a polynomial space procedure for model checking optimization and an algorithm with triply-exponential running time for realizability optimization. Our results show that PLDL model checking, assume-guarantee model checking, and realizability are no harder than their respective (parametric) LTL counterparts.
\end{abstract}

\section{Introduction}
\label{sec_intro}
Linear Temporal Logic ($\ltl$) \cite{Pnueli77} is a popular specification language for the verification and synthesis of reactive systems and provides semantic foundations for industrial logics like PSL~\cite{EisnerFismanPSL}. $\ltl$ has a number of desirable properties contributing to its ongoing popularity: it does not rely on the use of variables, it has an intuitive syntax and semantics and thus gives a way for practitioners to write declarative and concise specifications. Furthermore, it is expressively equivalent to first-order logic over the natural numbers with successor and order~\cite{Kamp68} and enjoys an exponential compilation property: one can efficiently construct a language-equivalent non-deterministic Büchi automaton of exponential size in the size of the specification.	
The exponential compilation property yields a polynomial space model checking algorithm and a doubly-exponential time algorithm for realizability. Both problems are complete for the respective classes.
	
	Model checking of properties described in $\ltl$ or its practical descendants is routinely applied in industrial-sized applications, especially for hardware systems \cite{EisnerFismanPSL,Forspec02}. Due to its complexity, realizability has not reached industrial acceptance (yet). First approaches relied on determinization of $\omega$-automata, which is notoriously hard to implement efficiently \cite{AlthoffThomasWallmeier2006}. More recent algorithms for realizability follow a safraless construction \cite{FiliotJinRaskin2011,FinkbeinerSchewe2013}, which avoids explicitly constructing the deterministic automaton, and show promise on small examples. 

Despite the desirable properties, two drawbacks of $\ltl$ remain and are tackled by different approaches in the literature: first, $\ltl$ is not able to express all $\omega$-regular properties. For example, the property ``$p$ holds on every even step'' (but may or may not hold on odd steps) is not expressible in $\ltl$, but is easily expressible by an $\omega$-regular expression. This drawback is a serious one, since the combination of regular properties and linear-time operators is common in hardware verification languages to express modular verification properties, as in ForSpec~\cite{Forspec02}.
 Several extensions of $\ltl$ with regular expressions, finite automata, or grammar operators~\cite{LeuckerSanchez07, VardiWolper94, Wolper1983} have been proposed as a remedy.  

A second drawback of classic temporal logics like $\ltl$ is the inability to natively express timing constraints. The standard semantics are unable to enforce the fulfillment of eventualities within finite time bounds, e.g., it is impossible to require that requests are granted within a fixed, but arbitrary, amount of time. While it is possible to unroll an a-priori fixed bound for an eventuality into $\ltl$, this requires prior knowledge of the system's granularity and incurs a blow-up when translated to automata, and is thus considered impractical.
 A more practical way of fixing this drawback is the purpose of a long line of work in parametric temporal logics, e.g., parametric $\ltl$~($\pltl$)~\cite{AlurEtessamiLaTorrePeled01}, $\prompt$~\cite{KupfermanPitermanVardi09} and parametric MITL~\cite{GiampaoloTorreNapoli15}. These logics feature parameterized temporal operators to express time bounds, and either test the existence of a global bound, like $\prompt$, or of individual bounds on the parameters, like $\pltl$.  

Recently, the first drawback was revisited by De Giacomo and Vardi ~\cite{GiacomoVardi13, Vardi11} by introducing an extension of $\ltl$ called linear dynamic logic ($\ldl$), which is as expressive as $\omega$-regular languages. The syntax of $\ldl$ is inspired by propositional dynamic logic (PDL)~\cite{FischerLadner1979}, but the semantics follow linear-time logics. In PDL and $\ldl$, systems are expressed by regular expressions~$r$ with tests, and temporal requirements are specified by two basic modalities: 
\begin{itemize}
\item	 $\ddiamond{r}\varphi$, stating that $\varphi$ should hold at some position where $r$ matches, and
\item $\bbox{r}\varphi$, stating that $\varphi$ should hold at all positions where $r$ matches.
\end{itemize}

The operators to build regular expressions from propositional formulas are as follows: sequential composition ($r_1 \conc r_2$), non-deterministic choice ($r_1 + r_2$), repetition ($r^*$), and test $(\varphi?)$ of a temporal formula. On the level of the temporal operators, conjunction and disjunction are allowed. The tests allow to check temporal properties within regular expressions, and are used to encode $\ltl$ into $\ldl$. 

For example, the program ``{\bf while} $q$ {\bf do} $a$'' with property $p$ holding after termination of the loop is expressed in PDL/$\ldl$ as follows: 
\begin{equation*}
	\bbox{(q?\conc a)^*\conc \neg q? } p \, .
\end{equation*}
Intuitively, the loop condition $q$ is tested on every loop entry, the loop body $a$ is executed/consumed until $\neg q$ holds, and then the post-condition $p$ has to hold. 

A request-response property (every request should eventually be responded to) can be formalized as follows:
\begin{equation*}
	 \bbox{\ttrue^*} (\mathit{req} \rightarrow \ddiamond{\ttrue^*} \mathit{resp})\, .
\end{equation*}

Both aforementioned drawbacks of $\ltl$, the inability to express all $\omega$-regular properties and the missing capability to specify timing constraints, have been tackled individually in a successful way in previous work, but not at the same time. Here, we propose a logic called $\pldl$ that combines the expressivity of $\ldl$ with the parametricity of P$\ltl$.

	In $\pldl$, we are for example able to parameterize the eventuality of the  request-response condition, denoted as 
	\begin{equation*}
			 {\bbox{\ttrue^*}(\mathit{req}\rightarrow\ddiamondle{\ttrue^*}{ x}\mathit{resp})} \, , 
	\end{equation*}
	 which states that every request has to be followed by a response within $x$ steps. 
	 
	Finally, the aforementioned property that is not expressible in $\ltl$, (``p holds on every even step'') can be expressed in $\pldl$ as
	\begin{equation*}
		\bbox{(\ttrue \conc \ttrue)^*} p \, .
	\end{equation*}
	 
Using the parameterized request-response property as the specification for a model checking problem entails determining whether there exists a valuation $\alpha(x)$ for $x$ such that all paths of a given system respond to requests within $\alpha(x)$ steps.
	
If we take the property as a specification for the $\pldl$ realizability problem, and define $\mathit{req}$ as input, $\mathit{resp}$ as output, we compute whether there exists a winning strategy that adheres to a valuation $\alpha(x)$ and therefore ensures the delivery of responses to requests in a timely manner.

The main result of this paper is the translation of $\pldl$ into alternating Büchi automata of linear size. Using these automata and a  generalization of the alternating color technique of \cite{KupfermanPitermanVardi09}, we obtain the following results.

First, we prove that $\pldl$ model checking is $\pspace$-complete by constructing a non-deterministic Büchi automaton of exponential size and using a modified on-the-fly non-emptiness test to obtain membership in $\pspace$. $\pspace$-hardness follows from the conversion of $\ltl$ to $\pldl$. Furthermore, we give a tight exponential bound on the satisfying valuation for model checking.

Second, we consider the $\pldl$ assume-guarantee model checking problem and show it to be $\pspace$-complete as well by extending the techniques used to show the similar result for model checking.

Third, we prove that $\pldl$ realizability is $\twoexp$-complete. Hardness again follows from the ability to express $\ltl$, while membership is proven by solving a parity game constructed from a deterministic parity automaton of doubly-exponential size. Additionally, we give a tight doubly-exponential bound on the satisfying valuation for realizability.

Thus, the model checking, the assume-guarantee model checking, and the realizability problem are no harder than their corresponding variants for $\ltl$. All three solutions to these problems are extensions of the ones for $\prompt$~\cite{KupfermanPitermanVardi09}.

Fourth, we investigate optimization problems for $\pldl$, i.e., determining the optimal valuation for a formula and a system or the tightest guarantee for realizability. While the model checking optimization problem is still solvable in polynomial space, we provide a triply-exponential time algorithm for the realizability optimization problem. This leaves an exponential gap to the decision variant, as for $\pltl$~\cite{Zimmermann13}. Both algorithms are based on exhaustive search through the bounded solution space induced by the upper bounds mentioned above. 

Our translation into alternating automata is also applicable to $\ldl$ on infinite traces, while De~Giacomo and Vardi~\cite{GiacomoVardi13} only considered $\ldl$ on finite traces.  Furthermore, our construction differs conceptually, since we present a bottom-up procedure, while they gave a top-down construction.

\section{PLDL}
\label{sec_defs}
Let $\Var$ be a set of variables and let us fix a finite set~$P$ of atomic propositions which we use to build formulas and to label transition systems. For a subset $A \in \pow{P}$ and a propositional formula~$\phi$ over $P$, we write $A \models \phi$, if the variable valuation mapping elements in $A$ to true and elements not in $A$ to false satisfies $\phi$. The formulas of $\pldl$ are given by the grammar
\begin{align*}
\varphi &\cceq p \mid \neg p \mid \varphi \wedge \varphi \mid \varphi \vee \varphi
  \mid \ddiamond{r} \varphi 
  \mid \bbox{r} \varphi 
  \mid \ddiamondle{r}{z} \varphi 
  \mid \bboxle{r}{z} \varphi\\
  r & \cceq \phi \mid \varphi? \mid r+r \mid r \conc r \mid r^*
\end{align*}
where $p \in P$, $z \in \Var$, and where $\phi$ stands for arbitrary propositional formulas over $P$. We use the abbreviations~$\ttrue = p \vee \neg p$ and $\ffalse = p \wedge \neg p$ for some atomic proposition~$p$. The regular expressions have two types of atoms: propositional formulas~$\phi$ over the atomic propositions and tests~$\varphi?$, where $\varphi$ is again a $\pldl$ formula. Note that the semantics of the propositional atom~$\phi$ differ from the semantics of the test~$\phi?$: the former consumes an input letter, while tests do not. This is why both types of atoms are allowed.

The set of subformulas of $\varphi$ is denoted by $\cl(\varphi)$. Regular expressions are not subformulas, but the formulas appearing in the tests are, e.g., we have $\cl(\ddiamondle{\halfthinspace p?\conc q}{x}p') = \set{\halfthinspace p, p', \ddiamondle{\halfthinspace p?\conc q}{x}p'}$. The size~$\card{\varphi}$ of $\varphi$ is the sum of $\card{\cl(\varphi)}$ and the sum of the lengths of the regular expressions appearing in $\varphi$ (counted with multiplicity). 

We define $\vardiamond( \varphi ) = \{ z\in \Var \mid \ddiamondle{r}{z}\psi \in
\cl( \varphi) \}$ to be the set of variables parameterizing diamond-operators in
$\varphi$,  $\varbox( \varphi ) = \{ z\in \Var \mid \bboxle{r}{z} \psi \in \cl( \varphi)  \} $
to be the set of variables parameterizing box-operators in $\varphi$, and set
$\var( \varphi ) = \vardiamond( \varphi ) \cup \varbox( \varphi )$. Usually, we will denote variables in $\vardiamond( \varphi )$ by $x$ and variables in $\varbox( \varphi )$ by $y$, if $\varphi$ is clear from the context. A
formula~$\varphi$ is variable-free, if $\var( \varphi ) = \emptyset$.

The semantics of $\pldl$ is defined inductively with respect to $w = w_0 w_1 w_2 \cdots \in (\pow{P})^\omega$, a position~$n \in \nats$, and a variable valuation~$\alpha \colon \Var \rightarrow \nats$ via
\begin{itemize}

\item $(w, n, \alpha) \models p$ if $p \in w_n$,

\item $(w, n, \alpha) \models \neg p$ if $p \notin w_n$,

\item $(w, n, \alpha) \models \varphi_0 \wedge \varphi_1$ if $(w, n, \alpha) \models \varphi_0$ and $(w, n, \alpha) \models \varphi_1$,

\item $(w, n, \alpha) \models \varphi_0 \vee \varphi_1$ if $(w, n, \alpha) \models \varphi_0$ or $(w, n, \alpha) \models \varphi_1$,

\item $(w, n, \alpha) \models \ddiamond{r}\varphi$ if there exists $j \in \nats$ s.t.\ $(n, n+j) \in \Rexp(r, w, \alpha)$ and $(w, n+j, \alpha) \models \varphi$,

\item $(w, n, \alpha) \models \bbox{r}\varphi$ if for all $j \in \nats$ with $(n, n+j) \in \Rexp(r, w, \alpha)$ we have $(w, n+j, \alpha) \models \varphi$,

\item $(w, n, \alpha) \models \ddiamondle{r}{z}\varphi$ if there exists $0 \le j \le \alpha(z)$ s.t.\ $(n, n+j) \in \Rexp(r, w, \alpha)$ and $(w, n+j, \alpha) \models \varphi$, and

\item $(w, n, \alpha) \models \bboxle{r}{z}\varphi$ if for all $0 \le j \le \alpha(z)$ with $(n, n+j) \in \Rexp(r, w, \alpha)$ we have $(w, n+j, \alpha) \models \varphi$.

\end{itemize}
The relation~$\Rexp(r,w,\alpha) \subseteq \nats\times\nats$ contains all pairs~$(m,n)$ such that $w_m \cdots w_{n-1}$ matches $r$ and is defined inductively by 
\begin{itemize}
\item $\Rexp(\phi,w,\alpha) = \set{(n, n+1) \mid w_n \models \phi}$ for propositional~$\phi$,
\item $\Rexp(\theta?,w,\alpha) = \set{(n, n) \mid (w, n, \alpha) \models \theta}$,
\item $\Rexp(r_0 + r_1, w, \alpha) = \Rexp(r_0, w, \alpha) \cup \Rexp(r_1, w, \alpha)$,
\item $\Rexp(r_0 \conc r_1, w, \alpha) = \set{(n_0, n_2) \mid \exists n_1 \text{ s.t. }(n_0,n_1)\in \Rexp(r_0, w, \alpha) \text{ and } (n_1, n_2) \in \Rexp(r_1, w, \alpha)}$, and 
\item $\Rexp(r^*, w, \alpha) = \set{(n,n) \mid n\in\nats} \cup \set{(n_0, n_{k+1}) \mid \exists n_1, \ldots, n_{k} \text{ s.t. } (n_j, n_{j+1}) \in \Rexp(r, w, \alpha) \text{ for all } 0 \le j \le k}$.
\end{itemize}
We write $(w,\alpha) \models \varphi$ for $(w, 0, \alpha) \models \varphi$ and say that $w$ is a model of $\varphi$ with respect to $\alpha$.

\begin{example}\label{ex_pldl} \hfill
\begin{itemize}

\item The formula~$\chi_{\infty p} \coloneq \bbox{\ttrue^*}\ddiamond{\ttrue^*}p$ expresses that $p$ holds true infinitely often. 

\item In general, every $\pltl$ formula~\cite{AlurEtessamiLaTorrePeled01} (and thus every $\ltl$ formula) can be translated into $\pldl$, e.g., $\F_{\le x} p$ is expressible as $\ddiamondle{\ttrue^*}{x} p$ and $ p \U q$ as  $\ddiamond{\halfthinspace p^*}q$ or $\ddiamond{\halfthinspace p^*q}\ttrue$.

\item  The formula~$\bbox{\ttrue^*}(\mathit{req} \rightarrow \ddiamond{(\ttrue \conc\ttrue)^*}\mathit{resp})$ requires that every request (a position where $\mathit{req}$ holds) is followed by a response (a position where $\mathit{resp}$ holds) after an even number of steps. 
Note that the implication is not part of $\pldl$, but it can (here) be replaced by a disjunction.

\end{itemize}
\end{example}

As usual for parameterized temporal logics, the use of variables has to be
restricted: bounding diamond- and box-operators by the same variable leads
to an undecidable satisfiability problem (cf.~\cite{AlurEtessamiLaTorrePeled01}).

\begin{definition}
\label{def_wellformedformula}
A $\pldl$ formula~$\varphi$ is well-formed, if $\vardiamond( \varphi ) \cap \varbox( \varphi ) =
\emptyset$.
\end{definition}
In the following, we only consider well-formed formulas and drop the qualifier \quot{well-formed} whenever possible.

Note that we define $\pldl$ formulas to be in negation normal form.
Nevertheless, we can define the negation of a formula using dualities.

\begin{lemma}
\label{lemma_pldlnegation}
For every $\pldl$ formula~$\varphi$ there exists an efficiently constructible (not necessarily well-formed)
$\pldl$
formula~$\neg \varphi$ s.t.\
\begin{enumerate}
\item $(w,n,\alpha)\models \varphi$ if and only if $(w,n,
\alpha) \not\models \neg \varphi$, and
\item $\card{\neg \varphi} =  \card{\varphi}$.
\end{enumerate}
\end{lemma}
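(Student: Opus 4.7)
The plan is to define $\neg\varphi$ recursively by pushing the negation inward via the standard de~Morgan and modal-duality rules, and then to verify both claims by structural induction on $\varphi$. Concretely, I set $\neg p$ equal to the negated literal already provided by the grammar, $\neg(\neg p) \coloneq p$, $\neg(\varphi_0 \wedge \varphi_1) \coloneq \neg\varphi_0 \vee \neg\varphi_1$ and dually for $\vee$. For the modalities I swap diamond and box and recurse on the body, leaving the regular expression $r$ (including the tests occurring in it) \emph{unchanged}: $\neg \ddiamond{r}\varphi \coloneq \bbox{r}\neg\varphi$, $\neg\bbox{r}\varphi \coloneq \ddiamond{r}\neg\varphi$, and analogously for $\ddiamondle{r}{z}$ and $\bboxle{r}{z}$. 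The key observation justifying that $r$ need not be modified is that $\Rexp(r,w,\alpha)$ is a relation that depends only on $w$ and $\alpha$; the ambient diamond/box quantifier does not affect it.

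For correctness, the atomic and Boolean cases are immediate from the semantics. For the diamond case, unfolding the definitions yields that $(w,n,\alpha) \not\models \ddiamond{r}\varphi$ iff for all $j$ with $(n,n+j)\in\Rexp(r,w,\alpha)$ we have $(w,n+j,\alpha) \not\models \varphi$. Applying the induction hypothesis to $\varphi$ replaces the latter by $(w,n+j,\alpha) \models \neg\varphi$, which is exactly the semantics of $(w,n,\alpha) \models \bbox{r}\neg\varphi$. The box case is symmetric, and the bounded cases are identical, with the quantification restricted to $0\le j\le \alpha(z)$.

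For the size bound, the construction defines a bijection between $\cl(\varphi)$ and $\cl(\neg\varphi)$ sending each subformula to its dual, so $|\cl(\neg\varphi)| = |\cl(\varphi)|$; since the regular expressions appearing in $\neg\varphi$ are literally the ones appearing in $\varphi$, their total length is preserved as well, yielding $|\neg\varphi|=|\varphi|$. Efficient constructibility (in fact, linear time) is immediate from the purely syntactic one-pass nature of the transformation.

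I do not expect any genuine obstacle; the only \emph{caveat} worth flagging is that $\neg\varphi$ need not be well-formed, because swapping diamonds and boxes sends $\vardiamond(\varphi)$ to $\varbox(\neg\varphi)$ and vice versa, so a variable bounding only diamonds in $\varphi$ will bound only boxes in $\neg\varphi$. This is precisely why the statement of the lemma drops the well-formedness requirement for the output.
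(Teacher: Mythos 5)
Your construction is exactly the paper's: the same duality rules for literals, Boolean connectives, and all four modalities, leaving the regular expressions (and the tests inside them) untouched, with correctness by structural induction and the size claim read off from the definition. The only tiny imprecision is that the map on $\cl(\varphi)$ is not literally ``each subformula to its dual'', since subformulas occurring inside tests are kept verbatim rather than dualized, but this does not change the argument and the paper is equally terse on that point.
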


\begin{proof}
We construct $\neg \varphi$ by structural induction over $\varphi$ using the dualities of the operators:\medskip

\hspace{-.6cm}\begin{minipage}[b]{.44\textwidth}
	\begin{itemize}
		\item $\neg (p) = \neg p$
		\item $\neg (\varphi_0 \wedge \varphi_1) = (\neg \varphi_0) \vee (\neg \varphi_1)$
		\item $\neg (\ddiamond{r}\varphi) = \bbox{r}\neg \varphi$
		\item $\neg (\ddiamondle{r}{x}\varphi) = \bboxle{r}{x}\neg \varphi$
	\end{itemize}
	\end{minipage}
\begin{minipage}[b]{.44\textwidth}
	\begin{itemize}
		\item $\neg (\neg p) = p$
		\item $\neg (\varphi_0 \vee \varphi_1) = (\neg \varphi_0) \wedge (\neg \varphi_1)$
		\item $\neg (\bbox{r}\varphi) = \ddiamond{r}\neg \varphi$
		\item $\neg (\bboxle{r}{y}\varphi) = \ddiamondle{r}{y}\neg \varphi$
	\end{itemize}
	\end{minipage}\medskip

The latter claim of Lemma~\ref{lemma_pldlnegation} follows from the definition of $\neg \varphi$ while the first one can be shown by a straightforward structural induction over $\varphi$.
\end{proof}

Note that negation does not necessarily preserve well-formedness, e.g., the negation of the well-formed formula~$\varphi_{\boxbox} = \bboxle{(\bboxle{p}{x}p)?}{x}p$ is $\ddiamondle{(\bboxle{p}{x}p)?}{x}\neg p$, which is not well-formed. 

We consider the following fragments of $\pldl$. Let $\varphi$ be a $\pldl$ formula: 
\begin{itemize}
	\item $\varphi$ is an $\ldl$ formula~\cite{GiacomoVardi13}, if $\varphi$ is variable-free,
	\item $\varphi$ is a $\pldldiamond$ formula, if $\varbox(\varphi) = \emptyset$, and
	\item $\varphi$ is a $\pldlbox$ formula, if $\vardiamond(\varphi) = \emptyset$ and if $\neg \varphi$ is a $\pldldiamond$ formula\footnote{The definition of $\pldlbox$ in the conference version~\cite{FaymonvilleZimmermann14} is slightly too inclusive, because it contains the formula $\varphi_{\boxbox}$. This is problematic, as we have to require the negation of a $\pldlbox$ formula to be a $\pldldiamond$ formula.}. Note that this implies that a $\pldlbox$ formula cannot have parameterized subformulas in a test.	
\end{itemize}

Every $\ldl$, $\pldldiamond$, and $\pldlbox$ formula is well-formed by definition. As satisfaction of $\ldl$ formulas is independent of valuations, we write $(w, n) \models \varphi$ and $w \models \varphi$ instead of $(w, n, \alpha) \models \varphi$ and $(w, \alpha) \models \varphi$, respectively, if $\varphi$ is an $\ldl$ formula. 

$\ldl$ is as expressive as $\omega$-regular languages, which can be proven by a straightforward translation of ETL$_f$~\cite{VardiWolper94}, which expresses exactly the $\omega$-regular languages, into $\ldl$, and by a translation of $\ldl$ into Büchi automata. 

\begin{theorem}[\cite{Vardi11}]
Let $L \subseteq (\pow{P})^\omega$. The following are effectively equivalent:
\begin{enumerate}
	\item $L$ is $\omega$-regular.
	\item There exists an $\ldl$ formula $\varphi$ such that $L = \set{w \in (\pow{P})^\omega \mid w \models \varphi }$.
\end{enumerate}
\end{theorem}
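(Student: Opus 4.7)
The plan is to establish the two directions separately, via the standard route through alternating automata.

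For $(2) \Rightarrow (1)$: I translate each $\ldl$ formula $\varphi$ into a language-equivalent alternating Büchi automaton (ABA) of linear size in $\card{\varphi}$ by structural induction on $\varphi$. Atoms $p, \neg p$ give single-state ABAs inspecting the current letter; Boolean connectives combine sub-ABAs via conjunctive/disjunctive initial transitions. For $\ddiamond{r}\varphi$ and $\bbox{r}\varphi$, first compile $r$ into a position-based (Glushkov-style) NFA over $\pow{P}$ whose $\epsilon$-moves corresponding to tests $\psi?$ are guarded by calls into the recursively-built ABA for $\psi$; then $\ddiamond{r}\varphi$ becomes an ABA that existentially guesses a run of this NFA reaching an accepting state and hands control to the ABA for $\varphi$ at that position, while $\bbox{r}\varphi$ uses universal branching over all such continuations. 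The Miyano--Hayashi construction then converts the resulting ABA into an equivalent NBA of exponential size, whose language is $\omega$-regular by definition.

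For $(1) \Rightarrow (2)$: By the $\omega$-Kleene theorem, every $\omega$-regular $L \subseteq (\pow{P})^\omega$ has the form $L = \bigcup_{i=1}^k U_i \cdot V_i^\omega$ for regular $U_i, V_i \subseteq (\pow{P})^*$. Each letter $A \in \pow{P}$ is captured by the propositional formula $\bigwedge_{p \in A} p \wedge \bigwedge_{p \notin A} \neg p$, so both $U_i$ and $V_i$ are describable by ordinary regular expressions $r_{U_i}, r_{V_i}$ over $\ldl$ atoms, and we may assume $\epsilon \notin V_i$ (since $V_i^\omega = (V_i \setminus \set{\epsilon})^\omega$ whenever $V_i$ has a nonempty word). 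Then $L$ is the language of
\[
\varphi \;=\; \bigvee_{i=1}^k \ddiamond{r_{U_i}} \bigl( \bbox{r_{V_i}^{\ast}}\, \ddiamond{r_{V_i}}\ttrue \bigr),
\]
since the inner conjunct states that from every position reachable by iterating $r_{V_i}$, another nonempty $r_{V_i}$-match begins, yielding by induction a strictly increasing infinite chain of match boundaries that decomposes the suffix into $V_i$-factors.

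The main obstacle is the $\ldl$-to-ABA step, specifically the interaction between tests and the inductive compilation: a test $\psi?$ inside $r$ may carry arbitrary $\ldl$ subformulas, so the NFA for $r$ cannot be built in isolation but must hand off guards to the ABAs produced for tests, and the linear-size bookkeeping has to account for all such hand-offs. On infinite traces, one further has to place Büchi acceptance marks carefully on starred subexpressions to prevent accepting spurious empty iterations; marking the accepting states of the sub-NFAs (so that progress in matching is enforced infinitely often) is the standard fix. The full parametric version of this construction is precisely what the paper carries out in the subsequent technical sections, and the present theorem drops out as the variable-free special case.
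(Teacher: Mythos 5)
Your direction $(2)\Rightarrow(1)$ is fine and is essentially the route the paper takes (the paper cites the theorem and points to its own Section~3 construction of alternating Büchi automata for the automaton direction, and to a translation of ETL$_f$ into $\ldl$ for the converse). The problem is in your direction $(1)\Rightarrow(2)$: the formula $\bbox{r_{V}^{*}}\ddiamond{r_{V}}\ttrue$ does \emph{not} define $V^\omega$, even when $\epsilon \notin V$. Your inductive argument only shows soundness (if the formula holds, one can peel off $V$-factors forever, so the word is in $V^\omega$); completeness fails because the box quantifies over \emph{every} position reachable by a $V^*$-match, including positions that lie on dead-end branches of the factorization tree, whereas membership in $V^\omega$ only requires \emph{one} successful infinite factorization. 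Concretely, take $P=\set{p}$, write $a=\set{p}$, $b=\emptyset$, and let $V=\set{a,ab}$, so $r_V = p + (p \conc \neg p)$ and $\epsilon\notin V$. The word $w=(ab)^\omega$ is in $V^\omega$, but $(0,1)\in\Rexp(r_V^{*},w)$ via the match $a\in V$, and no $V$-match starts at position $1$ (no word of $V$ begins with $b$), so $(w,0)\not\models\bbox{r_V^{*}}\ddiamond{r_V}\ttrue$. Hence your $\varphi$ defines a proper subset of $L$ in general, and with $U=\set{\epsilon}$, $L=V^\omega$ this is already a counterexample to the whole construction.

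The gap is repairable, but it needs an extra idea you did not supply: the $\forall\exists$ pattern is correct exactly when every prefix in $V^*$ extends to \emph{the} factorization, e.g., when $V$ is a prefix-free code (then factorizations in $V^*$ are unique and every position reachable by a $V^*$-match is a genuine cut point of every $V^\omega$-factorization). One standard way to arrange this is to start not from an arbitrary $\omega$-Kleene decomposition but from a deterministic Muller or Rabin automaton (McNaughton), take $V_q$ to be the regular, prefix-free language of \emph{first-return} words from state $q$ to $q$, and express the acceptance condition as a Boolean combination of formulas $\ddiamond{r_{q_0\to q}}\bbox{r_{V_q}^{*}}\ddiamond{r_{V_q}}\ttrue$, using closure of $\ldl$ under negation (Lemma~\ref{lemma_pldlnegation}) for the ``finitely often'' parts. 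Alternatively, follow the paper's cited route and translate ETL$_f$ (which is known to capture exactly the $\omega$-regular languages) into $\ldl$ via the test operator. As written, however, your proof of $(1)\Rightarrow(2)$ does not go through.
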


A simple, but very useful property of $\pldl$ is the monotonicity of the
parameterized operators: increasing (decreasing) the values of parameters bounding diamond-operators (box-operators) preserves satisfaction.  

\begin{lemma}
\label{lemma_monotonicity}
Let $\varphi$ be a $\pldl$ formula and let $\alpha$ and
$\beta$ be variable valuations satisfying $\beta ( x) \ge \alpha ( x )$ for
every $x \in \vardiamond( \varphi)$ and $\beta ( y) \le \alpha ( y )$ for every $y \in \varbox( \varphi)$. If $(w, \alpha) \models \varphi$, then $(w, \beta) \models
\varphi$.
\end{lemma}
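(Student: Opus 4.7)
My plan is a structural induction on $\varphi$, carried out in tandem with an auxiliary claim about the match relation: namely, for every regular expression $r$ occurring in $\varphi$, the sets $\Rexp(r,w,\alpha)$ and $\Rexp(r,w,\beta)$ coincide. This companion statement is necessary because the semantics of the modal operators quantify over pairs in $\Rexp$, and $\Rexp$ itself depends on the satisfaction of test subformulas; so the two inductions must be interleaved.

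The base cases $\varphi \in \{p, \neg p\}$ are immediate, since satisfaction depends only on $w_n$. For the Boolean connectives I apply the IH to the subformulas, using $\vardiamond(\varphi_i) \subseteq \vardiamond(\varphi)$ and $\varbox(\varphi_i) \subseteq \varbox(\varphi)$ so that the monotonicity premises on $\alpha$ and $\beta$ still hold for the smaller formulas. For $\ddiamond{r}\varphi'$ and $\ddiamondle{r}{x}\varphi'$ I take the witness $j$ provided by the $\alpha$-satisfaction, observe that $(n, n+j)$ lies in $\Rexp(r, w, \beta)$ by the auxiliary claim, note that the bound $j \le \alpha(x) \le \beta(x)$ is preserved in the parameterized case, and invoke the IH on $\varphi'$ to obtain $(w, n+j, \beta) \models \varphi'$. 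For $\bbox{r}\varphi'$ and $\bboxle{r}{y}\varphi'$ I fix an arbitrary $j$ with $(n, n+j) \in \Rexp(r, w, \beta)$ (restricted to $j \le \beta(y) \le \alpha(y)$ in the parameterized case), transfer it to $\Rexp(r, w, \alpha)$ via the auxiliary claim, use the $\alpha$-satisfaction to obtain $(w, n+j, \alpha) \models \varphi'$, and finish with the IH on $\varphi'$.

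The auxiliary claim on $\Rexp$ is shown by a secondary structural induction on $r$: propositional atoms $\phi$ are evaluated independently of the valuation; the compositional cases $r_0 + r_1$, $r_0 \conc r_1$, and $r^*$ follow straight from their subregex hypotheses; and the test case $\theta?$ is the only one that calls back to the main induction, applied to the proper subformula $\theta$ of $\varphi$.

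The delicate step will be the test case of the auxiliary claim in combination with the box modalities: a parameterized subformula appearing inside a test can a priori alter the match set $\Rexp$ when the valuation changes, and for a box operator a larger match set adds constraints rather than witnesses. Ensuring that the auxiliary claim is applied in the appropriate direction at each invocation is what makes the well-formedness premise $\vardiamond(\varphi) \cap \varbox(\varphi) = \emptyset$ essential — it prevents a single variable from inducing monotonicity in opposite directions inside the same subformula, and this is where I expect most of the technical care to be required.
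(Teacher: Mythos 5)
The paper states Lemma~\ref{lemma_monotonicity} without proof (it is introduced as a \quot{simple} property), so the only question is whether your argument goes through --- and it does not. The gap is exactly where you anticipate trouble, but it is fatal rather than merely delicate: your auxiliary claim that $\Rexp(r,w,\alpha)$ and $\Rexp(r,w,\beta)$ \emph{coincide} is false whenever a test~$\theta?$ in $r$ contains a parameterized subformula, since the truth of $\theta$ at a position then genuinely depends on the valuation. What the mutual induction actually delivers in the test case is only the implication $(w,n,\alpha)\models\theta \Rightarrow (w,n,\beta)\models\theta$, hence only the inclusion $\Rexp(r,w,\alpha)\subseteq\Rexp(r,w,\beta)$. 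That is the right direction for $\ddiamond{r}\varphi'$ and $\ddiamondle{r}{x}\varphi'$, but the wrong direction for $\bbox{r}\varphi'$ and $\bboxle{r}{y}\varphi'$: there you need every $\beta$-match to already be an $\alpha$-match, i.e., the reverse inclusion, which would require the tests in $r$ to be \emph{anti}-monotone under the passage from $\alpha$ to $\beta$. Well-formedness, i.e., $\vardiamond(\varphi)\cap\varbox(\varphi)=\emptyset$, does not provide this: it only forbids a single variable from parameterizing both kinds of operators and says nothing about a diamond-parameterized test sitting inside the regular expression of a box-operator.

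To see that this step genuinely fails (and cannot be repaired by more care alone), consider the well-formed $\pldldiamond$ formula $\varphi = \bbox{(\ddiamondle{\ttrue^*}{x}q)?\conc\ttrue}p$ and the word $w$ with $w_1 = \set{q}$ and $w_n = \emptyset$ for $n \neq 1$. With $\alpha(x)=0$ the test matches only at position~$1$, so $\Rexp((\ddiamondle{\ttrue^*}{x}q)?\conc\ttrue,w,\alpha)=\set{(1,2)}$ contains no pair starting at~$0$ and $(w,\alpha)\models\varphi$ holds vacuously; with $\beta(x)=1$ the test additionally matches at position~$0$, so $(0,1)$ becomes a match and forces $(w,1,\beta)\models p$, which fails. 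Thus $(w,\alpha)\models\varphi$ but $(w,\beta)\not\models\varphi$ even though $\beta(x)\ge\alpha(x)$. Your induction therefore cannot be completed for arbitrary well-formed formulas; it does go through if one additionally assumes that tests occurring inside the regular expression of a box-operator contain no parameterized subformulas (in the spirit of the paper's own restriction on $\pldlbox$), and any correct proof has to make some such restriction explicit.
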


The previous lemma allows us to eliminate parameterized box-operators when asking for the existence of a variable valuation satisfying a formula. 

\begin{lemma}
	\label{lemma_removeboxes}
For every $\pldl$ formula~$\varphi$ there is an efficiently constructible $\pldldiamond$ formula~$\varphi'$ whose size is at most the size of $\varphi$ such that
\begin{enumerate}
	
	\item for every $\alpha$ there is an $\alpha'$ such that for all $w$:  $(w, \alpha) \models \varphi$ implies $(w, \alpha') \models \varphi'$, and
	
	\item for every $\beta'$ there is a $\beta$ such that for all $w$: $(w, \beta') \models \varphi'$ implies $(w, \beta) \models \varphi$.	

\end{enumerate}
\end{lemma}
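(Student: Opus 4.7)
The plan is to construct $\varphi'$ by recursion on $\varphi$, using monotonicity (Lemma~\ref{lemma_monotonicity}) to collapse every parameterized box to its weakest instance and thereby eliminate its parameter. Indeed, if every box parameter $y$ is set to $0$, then $\bboxle{r}{y}\psi$ at position $n$ reduces to the implication \quot{if $(n,n) \in \Rexp(r, w, \alpha)$, then $\psi$ holds at $n$}, since $j$ is forced to $0$. Crucially, $(n,n) \in \Rexp(r, w, \alpha)$ depends only on the tests occurring in $r$ and can therefore be captured by a PLDL formula built out of them.

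Mutually recursively, define the transformation $T$ on formulas together with an auxiliary map $E$ on regular expressions by
\begin{align*}
E(\phi) &\coloneq \ffalse, & E(\theta?) &\coloneq T(\theta),\\
E(r_0 + r_1) &\coloneq E(r_0) \vee E(r_1), & E(r_0 \conc r_1) &\coloneq E(r_0) \wedge E(r_1),\\
E(r^*) &\coloneq \ttrue,
\end{align*}
and let $T$ act homomorphically on all connectives and operators of $\varphi$, except for the critical clause
\[
T(\bboxle{r}{y}\psi) \coloneq \bbox{E(r)?} T(\psi).
\]
Since the introduced box is unparameterized and $E(r)$ is built only from $\ttrue$, $\ffalse$, $\wedge$, $\vee$, and transformed tests $T(\theta) \in \pldldiamond$, the result $\varphi' \coloneq T(\varphi)$ lies in $\pldldiamond$. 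Each regex operator or atom contributes only a constant amount to $|\cl(E(r))|$, while the regex length of $E(r)?$ is a constant, so the total size does not grow.

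For correctness, establish by simultaneous structural induction the invariant
\[
(w, n, \alpha_0) \models \theta \iff (w, n, \alpha_0) \models T(\theta),
\]
\[
(n,n) \in \Rexp(r, w, \alpha_0) \iff (w, n, \alpha_0) \models E(r),
\]
for every subformula $\theta$ and every subexpression $r$ of $\varphi$, every $w$ and $n$, and every valuation $\alpha_0$ satisfying $\alpha_0(y) = 0$ for all $y \in \varbox(\varphi)$. Direction~1 of the lemma then follows: given $(w, \alpha) \models \varphi$, define $\alpha_0$ by $\alpha_0(x) \coloneq \alpha(x)$ for $x \in \vardiamond(\varphi)$ and $\alpha_0(y) \coloneq 0$ for $y \in \varbox(\varphi)$; Lemma~\ref{lemma_monotonicity} gives $(w, \alpha_0) \models \varphi$, and the invariant yields $(w, \alpha_0) \models T(\varphi)$, so $\alpha' \coloneq \alpha_0$ works. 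Direction~2 is symmetric: given $(w, \beta') \models T(\varphi)$, extend $\beta'$ to $\beta$ by setting every box variable to $0$ and read the invariant backwards.

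I expect the main obstacle to be the regex half of the invariant, particularly the clause $E(r_0 \conc r_1) = E(r_0) \wedge E(r_1)$. Its correctness relies on the observation that any intermediate position $n_1$ witnessing $(n,n) \in \Rexp(r_0 \conc r_1, w, \alpha_0)$ must satisfy $n \le n_1 \le n$, so $n_1 = n$, which reduces existence of a splitting to $(n,n)$ lying in both $\Rexp(r_0)$ and $\Rexp(r_1)$. The box clause then unfolds exactly to the reduced semantics described at the outset, using that $\Rexp(E(r)?, w, \alpha_0) = \{(m,m) \mid (w,m,\alpha_0) \models E(r)\}$ forces $j = 0$; all remaining cases reduce to routine unfoldings of the definitions.
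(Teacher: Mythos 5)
Your proposal is correct and follows essentially the same route as the paper: the paper likewise builds, for each $r$, a single test~$\hat{r}$ capturing the empty-match relation~$\Rexp(r,w,\alpha) \cap \set{(n,n) \mid n \in \nats}$ (via rewriting rules that are exactly your clauses for $E$, e.g., $r'^* \mapsto \ttrue?$, $\phi \mapsto \ffalse?$, concatenation of tests to conjunction), replaces $\bboxle{r}{y}\psi$ by $\bbox{\hat{r}}\psi$, and then invokes monotonicity to set all box variables to $0$ in both directions. Your structural recursion for $E$ together with the explicit mutual-induction invariant is just a cleaner packaging of the same argument, including the key observation that an empty match of $r_0 \conc r_1$ forces the splitting point to coincide with the endpoints.
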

 
\begin{proof}
For each $r$, we construct a test~$\hat{r}$ such that $\Rexp(r, w, \alpha) \cap \set{ (n,n) \mid n\in\nats} = \Rexp(\hat{r}, w, \alpha)$ for every $w$ and every $\alpha$. Then, $\bboxle{r}{y}\psi$ and $\bbox{\hat{r}}\psi$ are equivalent,  provided we have $\alpha(y) = 0$, which in combination with monotonicity is sufficient to prove our claim. We apply the following rewriting rules (in the given order) to $r$:
\begin{enumerate}
	\item Replace every subexpression $r'^*$ by $\ttrue?$, until no longer applicable.
	\item Replace every subexpression $\phi \conc r'$ or $r' \conc \phi$ by $\ffalse?$ and replace every subexpression $\phi + r'$ or $r' + \phi$ by $r'$, where $\phi$ is a propositional formula, until no longer applicable.
	\item Replace every subexpression $\theta_0? + \theta_1?$ by $(\theta_0 \vee \theta_1)?$ and replace every subexpression $\theta_0?\conc \theta_1?$ by $(\theta_0 \wedge \theta_1)?$, until no longer applicable.
\end{enumerate}
After step~2, $r$ contains no  iterations and no propositional atoms unless the expression itself is one. In the former case, applying the last two rules yields a regular expression, which is a single test, denoted by $\hat{r}$. In the latter case, we define $\hat{r} = \ffalse?$. 

Each rewriting step preserves the intersection~$\Rexp(r, w, \alpha) \cap \set{ (n,n) \mid n\in\nats}$. As $\hat{r}$ is a test, we conclude $\Rexp(r, w, \alpha) \cap \set{ (n,n) \mid n\in\nats} = \Rexp(\hat{r}, w, \alpha)$ for every $w$ and every $\alpha$. Note that $\hat{r}$ can be efficiently computed from $r$ and its size is at most the size of $r$.
Now, replace every subformula~$\bboxle{r}{y}\psi$ of $\varphi$ by $\bbox{\hat{r}}\psi$ and denote the formula obtained by $\varphi'$, which is a $\pldldiamond$ formula that is efficiently constructible and has the desired size. 

Given an $\alpha$, we define $\alpha'$ by $\alpha'(z) = 0$ if $z \in \varbox(\varphi)$, and $\alpha'(z) = \alpha(z)$ otherwise. If $(w, \alpha) \models \varphi$, then $(w, \alpha') \models \varphi$ due to monotonicity. By construction of $\varphi'$, we also have $(w, \alpha')\models \varphi'$.
On the other hand, if $(w, \beta') \models \varphi'$ for some $\beta'$, then $(w, \beta) \models \varphi'$ as well, where $\beta(z) = 0$, if $z \in \varbox(\varphi)$, and $\beta(z) = \beta'(z)$ otherwise. By construction of $\varphi'$, we conclude $(w, \beta) \models \varphi$. 
\end{proof}

\subsection{The Alternating Color Technique and LDL${_{cp}}$}
\label{subsec_altcolor}
In this subsection, we repeat the alternating color technique~\cite{KupfermanPitermanVardi09}, which was introduced by Kupferman et al.\ to solve the model checking and the realizability problem for $\prompt$, amongst others. Let $p \notin P$ be a fresh proposition and define $P' = P\cup \set{p}$. We think of words in $(\pow{P'})^\omega$ as colorings of words in $(\pow{P})^\omega$, i.e., $w' \in (\pow{P'})^\omega$ is a coloring of $w \in (\pow{P})^\omega$, if we have ${w_n}' \cap P = w_n$ for every position~$n$. Furthermore, $n$ is a changepoint, if $n= 0$ or if the truth value of $p$ differs at positions~$n-1$ and $n$. A block is a maximal infix that has exactly one changepoint, which is at the first position of the infix. By maximality, this implies that the first position after a block is a changepoint. Let $k\ge 1$. We say that $w'$ is $k$-bounded, if every block has length at most $k$, which implies that $w'$ has infinitely many changepoints. Dually, $w'$ is $k$-spaced, if it has infinitely many changepoints and every block has length at least $k$. 

The alternating color technique replaces every parameterized diamond-oper\-ator~$\ddiamondle{r}{x}\psi$ by an unparameterized one that requires the formula~$\psi$ to be satisfied within at most one color change. To this end, we introduce a changepoint-bounded variant~$\ddiamondcp{\cdot}\!$ of the diamond-operator. Since we need the dual operator~$\bboxcp{\!\cdot\!}\!$ to allow for negation via dualization, we introduce it here as well:
\begin{itemize}
	\item $(w, n, \alpha) \models \ddiamondcp{r}\psi$ if there exists a $j \in \nats$ s.t.\ $(n, n+j) \in \Rexp(r, w, \alpha)$, $w_n \cdots w_{n+j-1}$ contains at most one changepoint, and $(w, n + j, \alpha) \models \psi$, and

	\item $(w, n, \alpha) \models \bboxcp{r}\psi$ if for all $j \in \nats$ with $(n, n+j) \in \Rexp(r, w, \alpha)$ and where $w_n \cdots w_{n+j-1}$ contains at most one changepoint we have $(w, n + j, \alpha) \models \psi$.

\end{itemize}

We denote the logic obtained by disallowing parameterized operators, but allowing changepoint-bounded operators, by $\ldlt$. Note that the semantics of $\ldlt$ formulas are independent of variable valuations. Hence, we drop them from our notation for the satisfaction relation~$\models$ and the relation $\Rexp$. Also, Lemma~\ref{lemma_pldlnegation} can be extended to $\ldlt$ by adding the rules $\neg(\ddiamondcp{r}\psi) = \bboxcp{r}\neg \psi$ and $\neg(\bboxcp{r}\psi) = \ddiamondcp{r}\neg \psi$ to the proof.

Now, we are ready to introduce the alternating color technique. Given a $\pldldiamond$ formula~$\varphi$, let $\rel{\varphi}$ be the formula obtained by inductively replacing every subformula~$\ddiamondle{r}{x}\psi$ by $\ddiamondcp{\rel{r}}\rel{\psi}$, i.e., we replace the parameterized diamond-operator by a changepoint-bounded one. Note that this replacement is also performed in the regular expressions, i.e., $\rel{r}$ is the regular expression obtained by applying the replacement to every test~$\theta?$ in $r$. 

Given a $\pldldiamond$ formula~$\varphi$ let $c(\varphi) = \rel{\varphi} \wedge \chi_{\infty p} \wedge \chi_{\infty \neg p}$ (cf.~Example~\ref{ex_pldl}),
which is an $\ldlt$ formula and only linearly larger than $\varphi$. On $k$-bounded and $k$-spaced colorings of $w$ (for a suitable $k$) there is an equivalence between $\varphi$ and $c(\varphi)$. The proof is similar to the original one for $\prompt$~\cite{KupfermanPitermanVardi09}.

\begin{lemma}[cf. Lemma~2.1 of \cite{KupfermanPitermanVardi09}]
\label{lemma_altcolor}
Let $\varphi$ be a $\pldldiamond$ formula and let $w \in (\pow{P})^\omega$.
\begin{enumerate}
\item\label{lemma_alternatingcolor_pldltoldl} 
If $(w, \alpha) \models \varphi$, then $w' \models c(\varphi)$ for every $k$-spaced coloring $w'$ of $w$, where $k = \max_{x \in \var(\varphi)}\alpha(x)$.

\item\label{lemma_alternatingcolor_ldltopldl}
Let $k \in \nats$. If $w'$ is a $k$-bounded coloring of $w$ with $w' \models c(\varphi)$, then $(w, \alpha) \models \varphi$, where $\alpha(x) = 2k$ for every $x$.
\end{enumerate}
\end{lemma}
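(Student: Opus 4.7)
The plan is to prove both parts simultaneously by structural induction on $\varphi$, paired with a companion induction on regular expressions~$r$ asserting that $(n,n+j) \in \Rexp(r,w,\alpha)$ iff $(n,n+j) \in \Rexp(\rel{r},w')$ under the respective hypotheses. The two inductions must run in parallel because tests re-embed formulas inside regular expressions. I would formulate the formula-level inductive statement at an arbitrary position~$n$ (rather than only at $n=0$), so that the cases for $\ddiamond{r}$, $\bbox{r}$, and $\ddiamondle{r}{x}$ can apply the hypothesis directly at the witness point~$n+j$.

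For Part~1, the propositional, Boolean, unbounded-modality, and regular-expression cases are immediate from the respective induction hypotheses. The only substantive case is $\ddiamondle{r}{x}\psi$: assuming $(w,n,\alpha)\models \ddiamondle{r}{x}\psi$ is witnessed by some $j \le \alpha(x) \le k$, the infix $w'_n\cdots w'_{n+j-1}$ has length at most $k$, and since $w'$ is $k$-spaced any two distinct changepoints are at least $k$ positions apart, so the infix contains at most one changepoint. Combining this with the inductive hypotheses for $r$ and $\psi$ yields $(w',n)\models \ddiamondcp{\rel{r}}\rel{\psi}$, which is $\rel{\ddiamondle{r}{x}\psi}$. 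The remaining conjuncts $\chi_{\infty p}$ and $\chi_{\infty \neg p}$ of $c(\varphi)$ hold on $w'$ because $k$-spaced colorings have infinitely many changepoints and the color of consecutive blocks alternates.

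For Part~2, again only $\ddiamondle{r}{x}\psi$ is nontrivial. If $(w',n)\models \ddiamondcp{\rel{r}}\rel{\psi}$ is witnessed by some $j$ such that $w'_n\cdots w'_{n+j-1}$ contains at most one changepoint, then this infix decomposes into at most two fragments: a suffix $w'_n\cdots w'_{p-1}$ of the block containing position~$n$ and, if a changepoint~$p$ occurs, a prefix $w'_p\cdots w'_{n+j-1}$ of the block beginning at~$p$. Each fragment has length at most~$k$ by $k$-boundedness, so $j \le 2k = \alpha(x)$. The inductive hypothesis for $r$ and $\psi$ then supplies the witness for $\ddiamondle{r}{x}\psi$.

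The main obstacle is keeping the simultaneous induction honest across the $\rel{\cdot}$ rewriting, which descends through tests and must preserve $\Rexp$ in both directions, and verifying the bound $j\le 2k$ uniformly regardless of whether $n$ itself is a changepoint or lies strictly inside a block. Once this bookkeeping is in place, the proof closely mirrors the original alternating-color argument for $\prompt$.
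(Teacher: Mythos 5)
Your handling of the parameterized diamond is correct in both directions, and the block-length arithmetic (a witness $j\le k$ forces at most one changepoint in a $k$-spaced word; at most one changepoint forces $j\le 2k$ in a $k$-bounded word, whether or not $n$ starts a block) is exactly the intended core of the argument. The gap lies in the cases you declare immediate. You state the companion claim for regular expressions as an equivalence between $(n,n+j)\in\Rexp(r,w,\alpha)$ and $(n,n+j)\in\Rexp(\rel{r},w')$, but each part of the lemma carries a hypothesis ($k$-spaced, resp.\ $k$-bounded) that supports only \emph{one} direction of the formula-level induction, and hence only one direction for the tests occurring inside $r$. This bites as soon as a test containing a parameterized operator guards a box (which $\pldldiamond$ permits, since only $\varbox(\varphi)=\emptyset$ is required): to prove Part~1 for $\bbox{r}\psi$ you must take a match $(n,n+j)\in\Rexp(\rel{r},w')$ and pull it back to a match of $r$ on $(w,\alpha)$, which needs $(w',m)\models\rel{\theta}\Rightarrow(w,m,\alpha)\models\theta$ for the tests $\theta?$ in $r$ --- the \emph{converse} of what Part~1's induction establishes. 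That converse is false on $k$-spaced colorings: blocks may be arbitrarily long, so $\ddiamondcp{\cdot}\!$ can be witnessed far beyond $\alpha(x)$ without crossing a changepoint. Symmetrically, Part~2's box case needs $(w,m,\alpha)\models\theta\Rightarrow(w',m)\models\rel{\theta}$ with $\alpha(x)=2k$, and an infix of length up to $2k$ in a merely $k$-bounded word can straddle two changepoints.

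This is not a presentational gap: take $\varphi=\bbox{(\ddiamondle{\ttrue^*}{x}b)?\conc\ttrue}a$, a word $w$ in which $b$ first holds at position $100$ and $a$ never holds, and $\alpha(x)=5$. Then $(w,\alpha)\models\varphi$ vacuously, but any $5$-spaced coloring $w'$ whose first block has length $200$ satisfies $\ddiamondcp{\ttrue^*}b$ at position $0$ (the witness $j=100$ stays inside one block), so $w'\not\models\rel{\varphi}$. Hence the step you would need for the box case actually fails, and the induction as you set it up cannot close. The argument you are mirroring (and to which the paper itself defers) is sound when parameterized operators do not occur inside tests, since then $\rel{r}$ and $r$ define the same relation and the companion equivalence is trivial; to make your proof work you must either impose that restriction, or supply a separate argument for box-guarded tests that does not route through the converse implication.
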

 
\section{From LDL$\mathbf{_{cp}}$ to Alternating Büchi Automata}
\label{sec_automata}
In this section, we show how to translate $\ldlt$ formulas into alternating Büchi word automata with linearly many states, but possibly with an exponential number of transitions, using an inductive bottom-up approach. These automata allow us to use automata-based constructions to solve the model checking and the realizability problem for $\pldl$ via the alternating color technique which links $\pldl$ and $\ldlt$. Since these problems are shown to be complete for the complexity classes $\pspace$ and $\twoexp$, which allow us to construct the automata (on-the-fly), the potentially exponential number of transitions is not an issue. 

An alternating Büchi automaton~$\aut = (Q,\Sigma,q_0,\delta, F)$ consists of a finite set~$Q$ of states, an alphabet~$\Sigma$, an initial state~$q_0 \in Q$, a transition function~$\delta \colon Q \times \Sigma \to \bplus(Q)$, and a set~$F \subseteq Q$ of accepting states. 
Here, $\bplus(Q)$ denotes the set of positive boolean combinations over $Q$, which contains in particular the formulas $\ttrue$ (true) and $\ffalse$ (false).  

A run of $\aut$ on $w = w_0 w_1 w_2 \cdots \in \Sigma^\omega$ is a directed graph~$\rho = (V, E)$ where $V \subseteq Q \times \nats$ and $((q,n),(q',n')) \in E$ implies $n' = n +1$ such that the following two conditions are satisfied: $(q_0, 0) \in V$ and for all $(q, n) \in V$: $\suc{
\rho}{(q,n)} \models \delta(q, w_n)$. Here $\suc{\rho}{(q,n)}$ denotes the set of successors of $(q,n)$ in $\rho$ projected to $Q$. A run~$\rho$ is accepting if all infinite paths (projected to $Q$) through $\rho$ visit $F$ infinitely often. The language~$L(\aut)$ contains all $w \in \Sigma^\omega$ that have an accepting run of $\aut$.

\begin{theorem}
	\label{theorem_autconstruction}
For every $\ldlt$ formula~$\varphi$, there is an alternating Büchi automaton~$\aut_\varphi$ with linearly many states (in $\card{\varphi}$) and $L(\aut_\varphi) = \set{w \in (\pow{P'})^\omega \mid w \models \varphi }$.
\end{theorem}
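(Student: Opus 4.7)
The plan is to prove the theorem by structural induction on $\varphi$, building $\aut_\varphi$ bottom-up. I would fix the state set $Q$ to contain one state $q_\psi$ per subformula $\psi\in\cl(\varphi)$ together with one auxiliary state per subexpression occurrence of a regular expression appearing in $\varphi$, indexed by the unique continuation formula determined by the position of that subexpression in the regex's parse tree. Since the sum of regex lengths is part of $\card{\varphi}$, we get $\card{Q}=O(\card{\varphi})$. The initial state of $\aut_\varphi$ is $q_\varphi$ and the alphabet is $\pow{P'}$.

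The transition function $\delta$ is then defined by induction on the subformula. For literals, $\delta(q_p,\sigma)=\ttrue$ if $p\in\sigma$ and $\ffalse$ otherwise (dually for $\neg p$). Boolean combinations are folded directly into $\delta$ via its positive Boolean formulas: $\delta(q_{\psi_0\wedge\psi_1},\sigma)=\delta(q_{\psi_0},\sigma)\wedge\delta(q_{\psi_1},\sigma)$, and analogously with $\vee$. For modal subformulas $\ddiamond{r}\psi$, I exploit the standard LDL unfolding equivalences: $\ddiamond{\phi}\psi$ moves to $q_\psi$ precisely when $\sigma\models\phi$; $\ddiamond{\theta?}\psi$ combines $\delta(q_\theta,\sigma)$ and $\delta(q_\psi,\sigma)$ conjunctively; $\ddiamond{r_0+r_1}\psi$ branches disjunctively into its two subexpression states; $\ddiamond{r_0\conc r_1}\psi$ passes control to the state tracking $r_0$ with continuation $\ddiamond{r_1}\psi$; and $\ddiamond{r^*}\psi$ is unfolded to $\psi\vee\ddiamond{r}\ddiamond{r^*}\psi$, giving a self-referential transition. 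Box subformulas are handled dually (e.g.\ $\bbox{r^*}\psi\equiv\psi\wedge\bbox{r}\bbox{r^*}\psi$).

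For the changepoint-bounded operators $\ddiamondcp{r}\psi$ and $\bboxcp{r}\psi$, I would duplicate each regex-tracking state into two flavors, marking whether a $p$-changepoint has already been crossed within the current match; a letter that toggles $p$ relative to the previously read letter moves the run from the ``zero'' to the ``one'' flavor, and any further toggle resolves to $\ffalse$ for the diamond variant (the match window is closed) and to $\ttrue$ for the box variant (the universal obligation is vacuous). Awareness of the previous $p$-value can likewise be folded into the state as a single bit, so this only multiplies the regex-tracking state count by a constant, preserving linearity. The Büchi accepting set $F$ is then fixed by the standard fixpoint encoding: states of the form $q_{\ddiamond{r^*}\mu}$ (and their $cp$-counterparts) are excluded from $F$, forcing any infinite path stuck in a diamond-star unfolding to reject; states of the form $q_{\bbox{r^*}\mu}$ (and $cp$-counterparts) are included in $F$ so that infinite unfolding of a box-star is vacuously accepted; all remaining states go into $F$.

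Correctness, $L(\aut_\varphi)=\set{w\in(\pow{P'})^\omega\mid w\models\varphi}$, then follows from a routine structural induction showing that $q_\psi$ admits an accepting run on $w$ starting at position $n$ iff $(w,n)\models\psi$. The main obstacle is maintaining a linear state count through the concatenation case, where the naive unfolding $\ddiamond{r_0\conc r_1}\psi\to\ddiamond{r_0}\ddiamond{r_1}\psi$ appears to create ever deeper nested formulas; this is resolved by indexing states by \emph{occurrences} of subexpressions in the fixed parse tree of each regex, so that each node contributes exactly one state with a fixed continuation determined by its path to the root. A secondary subtlety is justifying the Büchi acceptance pattern when starred diamonds and starred boxes are nested through tests, where the least/greatest-fixpoint semantics of the two operators must be carefully threaded through the induction on $\psi$.
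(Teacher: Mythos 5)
Your overall architecture (one state per subformula plus one per regex-subexpression occurrence, transitions defined via the LDL unfolding equivalences) is a legitimate alternative to the paper's route, which instead compiles each $r$ into an $\epsilon$-NFA via the Thompson construction and then glues it to $\aut_\psi$ and the test automata while eliminating $\epsilon$-paths. However, your Büchi acceptance set is wrong, and this is a genuine gap rather than a presentational one. You place only the states $q_{\ddiamond{r^*}\mu}$ outside $F$ and put \emph{all} remaining states, including the auxiliary regex-tracking states, into $F$. Büchi acceptance only requires each infinite path to visit $F$ infinitely often; it does not penalize visiting non-accepting states infinitely often. Since a cycle created by a star unfolding generally contains auxiliary occurrence states besides the star-head (e.g.\ for $\ddiamond{(a\conc a)^*}\psi$ the cycle alternates between $q_{\ddiamond{(a\conc a)^*}\psi}$ and the occurrence state with continuation $\ddiamond{a}\ddiamond{(a\conc a)^*}\psi$), an infinite path can loop through the star forever while hitting an accepting auxiliary state at every other step, and is therefore accepted even though the match of $r^*$ never completes and $\psi$ is never verified. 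Concretely, your automaton for $\ddiamond{(a\conc a)^*}(p\wedge\neg p)$ would accept $\set{a}^\omega$, but the formula is unsatisfiable. The paper avoids this by excluding the \emph{entire} regex-tracking component $Q^r$ of a diamond from the accepting set (and, dually, including all of $Q^r$ for a box), so that any accepting run must leave $Q^r$ after finitely many steps; your construction needs the same repair. (This also explains the paper's remark that these automata are weak but not very weak: the regex components have non-trivial cycles, so one cannot reason as if only self-loops mattered.)

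A second, smaller issue you only gesture at: the unfolding $\ddiamond{r^*}\psi\equiv\psi\vee\ddiamond{r}\ddiamond{r^*}\psi$ does not terminate syntactically when $r$ admits an empty match through tests alone (e.g.\ $r=\theta?$ or nested stars), because the recursion returns to $\ddiamond{r^*}\psi$ at the same input position. You must either prune such empty-match unfoldings or define $\delta$ as the least (for diamond) respectively greatest (for box) solution of the resulting finite equation system; the paper sidesteps this by ranging only over (simple) $\epsilon$-paths of the Thompson automaton. With these two points repaired, your occurrence-indexed state set does give the required linear bound, and your treatment of changepoint counting by a constant-factor state duplication matches the paper's product with $\aut_{cp}$.
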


To prove the theorem, we inductively construct automata~$\aut_\psi$ for every subformula~$\psi \in \cl(\varphi)$ satisfying $L(\aut_\psi) = \set{w \in (\pow{P'})^\omega \mid w \models \psi}$. 

The automata for atomic formulas are straightforward and depicted in Figures~\ref{fig_atomicaut}(a) and (b). To improve readability, we allow propositional formulas over $P'$ as transition labels: a formula~$\phi$ stands for all sets $A \in \pow{P'}$ with $A \models \phi$. 

Furthermore, given automata~$\aut_{\psi_0}$ and $\aut_{\psi_1}$, using a standard construction, we can build the automaton~$\aut_{\psi_0 \vee \psi_1}$ by taking the disjoint union of the two automata, adding a new initial state~$q_0$ with $\delta(q_0, A) = \delta^0(q_0^0, A) \vee \delta^1(q_0^1, A)$. Here, $q_0^i$ is the initial state and $\delta^i$ is the transition function of $\aut_{\psi_i}$. The automaton~$\aut_{\psi_0 \wedge \psi_1}$ is defined similarly, the only difference being $\delta(q_0, A) = \delta^0(q_0^0, A) \wedge \delta^1(q_0^1, A)$.

\begin{figure}[h]
	\begin{center}
\begin{tikzpicture}[scale = .72,
	every initial by arrow/.style={initial text=,-stealth, thick},
	every state/.style={thick}]

\node			at (-.8,1.75)	{(a)};
\node			at (3.3,1.75)	{(b)};
\node			at (7.5,1.75)	{(c)};

\node[state, initial] (q) at (0,.0)	{};
\node		(a) at (1.7,-.9)	{$\ffalse$};
\node		(b) at (1.7, .9)	{$\ttrue$};

\path[thick,-stealth]
(q) edge[bend right] node[below, yshift =-.15cm]{$\neg p$} (a)
(q) edge[bend left] node[above, yshift =.15cm]{$p$} (b);

\node[state, initial] (qp) at (4,0)	{};
\node						(ap) at (5.7,-.9)	{$\ffalse$};
\node						(bp) at (5.7, .9)	{$\ttrue$};

\path[thick,-stealth]
(qp) edge[bend right] node[below, yshift =-.15cm]{$p$} (ap)
(qp) edge[bend left] node[above, yshift =.15cm]{$\neg p$} (bp);

\node[state, initial, accepting] (e) at (8, 0) {};
\node[state, accepting] (b) at (10.5, .9) {};
\node[state, accepting] (y) at (10, -.9) {};
\node[state, accepting] (yb) at (13, -.9) {};
\node[state, accepting] (by) at (13.5, .9) {};
\node[state] (s) at (15.5, 0) {};

\path[thick, -stealth]
(e) edge[bend left] node[above]{$p$} (b)
(e) edge[bend right] node[below]{$\neg p$} (y)
(b) edge[in=210,out=240,loop] node[left]{$\,\, p$} ()
(y) edge[in=30,out=60,loop] node[right,xshift=-.2cm]{$\,\, \neg p$} ()
(b) edge			node[above]{$\neg p$} (by)
(y) edge			node[below]{$p$} (yb)
(yb) edge[in=30,out=60,loop] node[right]{$\,\, p$} ()
(by) edge[in=210,out=240,loop] node[left]{$\,\, \neg p$} ()
(yb) edge[bend right] node[below]{$\neg p$} (s)
(by) edge[bend left] node[above]{$p$} (s)
(s) edge[loop above] node[above] {$\ttrue$} ()
;

\end{tikzpicture}
\caption{The automata~$\aut_{p}$ (a), $\aut_{\neg p}$ (b), and $\aut_{cp}$ (c), which tracks changepoints.}
\label{fig_atomicaut}
	\end{center}
\end{figure}
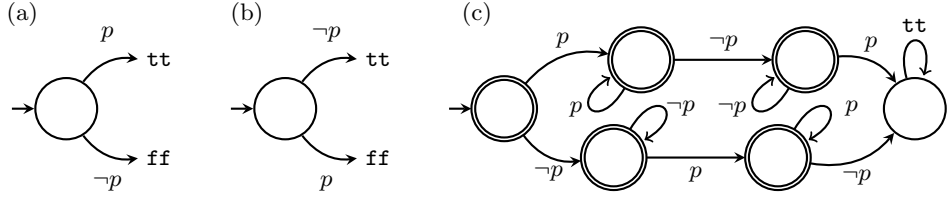

It remains to consider temporal formulas, e.g., $\ddiamond{r}\psi$. First, we turn the regular expression~$r$ into an automaton~$\aut_r$. Recall that tests do not process input letters. Hence, we disregard the tests when defining the transition function, but we label states at which the test has to be executed, by this test. We adapt the Thompson construction~\cite{Thompson68} to turn~$r$ into $\aut_r$, i.e., we obtain an $\epsilon$-NFA. Then, we show how to combine $\aut_r$ with the automaton~$\aut_\psi$ and the automata~$\aut_{\theta_1}, \ldots, \aut_{\theta_k}$, where $\theta_1?, \ldots, \theta_k?$ are the tests occurring in $r$. The $\epsilon$-transitions introduced by the Thompson construction are removed during the construction, since alternating automata do not allow them. During this construction, we also ensure that the transition relation takes tests into account by introducing universal transitions that lead from a state marked with $\theta_j?$ into the corresponding automaton~$\aut_{\theta_j}$.

An $\epsilon$-NFA with markings $\aut = (Q, \Sigma, q_0, \delta, C, \marking)$ consists of a finite set~$Q$ of states, an alphabet~$\Sigma$, an initial state~$q_0 \in Q$, a transition function~$\delta \colon Q \times \Sigma\cup\set{\epsilon} \rightarrow \pow{Q}$, a set~$C$ of final states ($C$, since we use them to concatenate automata), and a partial marking function~$\marking$, which assigns to some states~$q \in Q$ an $\ldlt$ formula~$\marking(q)$.
We write $q \xrightarrow{a} q'$, if $q' \in \delta(q, a)$ for $a \in \Sigma \cup \set{\epsilon}$. An $\epsilon$-path~$\pi$ from $q$ to $q'$ in $\aut$ is a sequence~$\pi = q_1 \cdots q_k$ of $k \ge 1$ states with $q =q_1 \xrightarrow{\epsilon} \cdots \xrightarrow{\epsilon} q_k = q'$. The set of all $\epsilon$-paths from $q$ to $q'$ is denoted by $\Pi(q, q')$ and $\marking(\pi) = \set{\marking(q_i) \mid 1 \le i \le k}$  is the set of markings visited by $\pi$.

A run of $\aut$ on $w_0 \cdots w_{n-1} \in \Sigma^*$ is a sequence~$q_0 q_1 \cdots q_n$ of states such that for every $i$ in the range~$0 \le i \le n-1$ there is a state~$q_{i}'$ reachable from $q_i$ via an $\epsilon$-path~$\pi_{i}$ and with $q_{i+1} \in \delta(q_{i}', w_{i})$. The run is accepting if there is a $q_{n}' \in C$ reachable from $q_n$ via an $\epsilon$-path~$\pi_n$.  This slightly unusual definition (but equivalent to the standard one) simplifies our reasoning below. Also, the definition is oblivious to the marking.

We begin by defining the automaton~$\aut_r$ by induction over the structure of $r$ as depicted in Figure~\ref{fig_autr}. Note that the automata we construct have no outgoing edges leaving the unique final state and that we mark some states with tests~$\theta_j?$ (denoted by labeling states with the test).

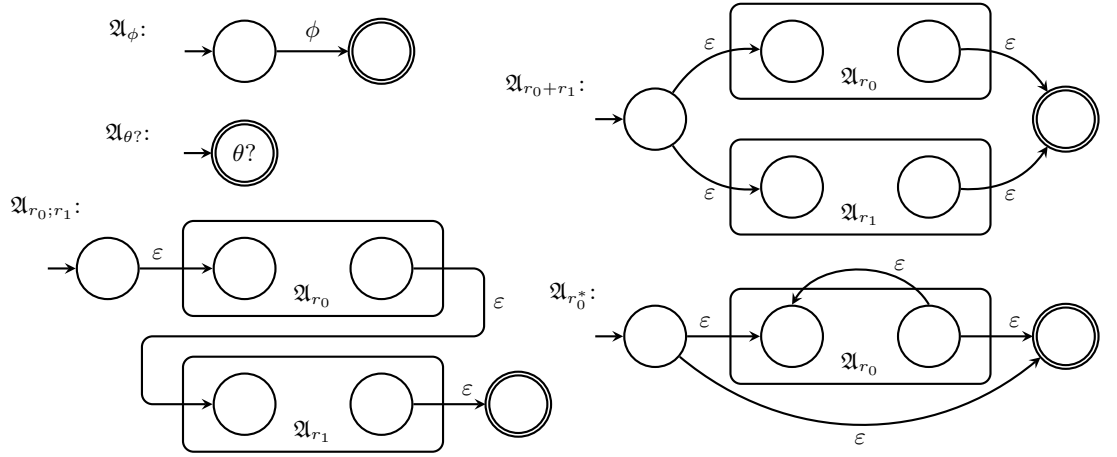
\begin{figure}[h!]
	\begin{center}
\begin{tikzpicture}[
	scale=.9,
	every initial by arrow/.style={initial text=,-stealth, thick},
	every state/.style={thick}]

\node at (.3,.3) {$\aut_\phi$:};
\node at (.3, -1.2) {$\aut_{\theta?}$:};
\node at (6.4,-.5) {$\aut_{r_0 + r_1}$:};
\node at (-0.9,-2.3) {$\aut_{r_0; r_1}$:};
\node at (6.8, -3.6) {$\aut_{r_0^*}$:};

\node[state, initial] (q) at (2,0)	{};
\node[state, accepting]		(a) at (4,0)	{};

\path[thick,-stealth]
(q) edge node[above]{$\phi$} (a);


\node[state, initial, accepting] (qp) at (2,-1.5)	{$\theta?$};


\draw[rounded corners, thick]  (9.1,.7) rectangle (12.9,-.7);
\draw[rounded corners, thick]  (9.1,-1.3) rectangle (12.9,-2.7);

\node		at (11,-.4) {$\aut_{r_0}$};
\node		at (11,-2.4) {$\aut_{r_1}$};

\node[state, initial] (u0) at (8,-1) {};
\node[state] (u1) at (10,0) {};
\node[state] (u2) at (10,-2) {};
\node[state] (u3) at (12,0) {};
\node[state] (u4) at (12,-2) {};
\node[state, accepting] (u5) at (14,-1) {};

\path[thick, -stealth]
(u0) edge[bend left] node[above] {$\epsilon$} (u1)
(u0) edge[bend right] node[below] {$\epsilon$} (u2)
(u3) edge[bend left] node[above] {$\epsilon$} (u5)
(u4) edge[bend right] node[below] {$\epsilon$} (u5);


\draw[rounded corners, thick]  (1.1,-2.5) rectangle (4.9,-3.9);
\draw[rounded corners, thick]  (1.1,-4.5) rectangle (4.9,-5.9);

\node		at (3,-3.6) {$\aut_{r_0}$};
\node		at (3,-5.6) {$\aut_{r_1}$};

\node[state, initial] (u0) at (0,-3.2) {};
\node[state] (u1) at (2,-3.2) {};
\node[state] (u2) at (2,-5.2) {};
\node[state] (u3) at (4,-3.2) {};
\node[state] (u4) at (4,-5.2) {};
\node[state, accepting] (u5) at (6,-5.2) {};

\path[thick, -stealth]
(u0) edge[] node[above, near start] {$\epsilon$} (u1)
(u4) edge[] node[above, near end] {$\epsilon$} (u5);

\draw[thick, rounded corners, -stealth]
(u3.east) -- (5.5,-3.2) -- node[right] {$\epsilon$}  (5.5, -4.2) -- (.5, -4.2) -- (.5, -5.2) -- (u2.west);


\draw[rounded corners, thick]  (9.1,-3.5) rectangle (12.9,-4.9);
\node		at (11,-4.6) {$\aut_{r_0}$};

\node[state, initial] (c0) at (8,-4.2) {};
\node[state] (c1) at (10,-4.2) {};
\node[state] (c2) at (12,-4.2) {};
\node[state, accepting] (c3) at (14,-4.2) {};

\path[thick, -stealth]
(c0) edge[bend right = 40] node[below] {$\epsilon$} (c3)
(c2.north) edge[bend right = 60] node[above, near start] {$\epsilon$} (c1.north)
(c0) edge node[above, near start] {$\epsilon$} (c1)
(c2) edge node[above, near end] {$\epsilon$} (c3);

\end{tikzpicture}
\caption{The inductive definition of $\aut_r$ via the Thompson construction.}
\label{fig_autr}
	\end{center}
\end{figure}

\begin{lemma}
\label{lemma_autrcorrectness}
Let $w = w_0 w_1 w_2 \cdots \in (\pow{P'})^\omega$ and let $w_0 \cdots w_{n-1}$ be a (possibly empty, if $n=0$) prefix of $w$. The following two statements are equivalent:
\begin{enumerate}
	\item $\aut_r$ has an accepting run on $w_0 \cdots w_{n-1}$ with $\epsilon$-paths $\pi_0, \ldots, \pi_n$ such that $w_{i}w_{i+1}w_{i+2} \cdots \models \bigwedge \marking(\pi_i)$ for every $i$ in the range~$0 \le i \le n$ .

\item $(0,n) \in \Rexp(r,w)$.

\end{enumerate}
\end{lemma}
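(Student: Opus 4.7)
The plan is to proceed by structural induction on the regular expression $r$. For the base cases, $r=\phi$ (propositional) and $r=\theta?$, one reads off the claim directly: $\aut_\phi$ admits an accepting run only on a single-letter prefix $w_0$ with $w_0 \models \phi$, matching $\Rexp(\phi,w)=\{(k,k+1)\mid w_k\models\phi\}$, and all $\epsilon$-paths are trivial and carry no markings. For $r=\theta?$, the automaton is a single state which is both initial and accepting, marked by $\theta?$; the only accepting run is on the empty prefix, with the single $\epsilon$-path $\pi_0$ consisting of that marked state, so the marking-satisfaction condition reduces to $(w,0)\models\theta$, matching $\Rexp(\theta?,w)=\{(k,k)\mid(w,k)\models\theta\}$.

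Before tackling the inductive cases, I would strengthen the statement to cover arbitrary start positions: for every $m\le n$, $\aut_r$ has an accepting run on $w_m\cdots w_{n-1}$ with $\epsilon$-paths $\pi_0,\ldots,\pi_{n-m}$ such that $w_{m+i}w_{m+i+1}\cdots\models\bigwedge\marking(\pi_i)$ for all $i$, iff $(m,n)\in\Rexp(r,w)$. This is essentially the lemma applied to the shifted word, and is needed so that the inductive hypothesis is usable at intermediate positions arising from splitting a concatenation or iterating a star.

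For the inductive cases I would argue as follows. For $r=r_0+r_1$, the fresh initial and final states introduced by the Thompson construction carry no markings and only contribute $\epsilon$-transitions at the two endpoints, so an accepting run of $\aut_r$ is (after stripping those $\epsilon$-transitions) exactly an accepting run of $\aut_{r_0}$ or of $\aut_{r_1}$ on the same word, with the same markings along $\epsilon$-paths; applying IH to whichever branch was used gives $(m,n)\in\Rexp(r_0,w)\cup\Rexp(r_1,w)$. For $r=r_0\conc r_1$, every accepting run of $\aut_r$ decomposes uniquely at the $\epsilon$-transition from the former final state of $\aut_{r_0}$ to the former initial state of $\aut_{r_1}$; letting $k$ be the position of $w$ at which this handoff occurs, the first portion is an accepting run of $\aut_{r_0}$ on $w_m\cdots w_{k-1}$ and the second is an accepting run of $\aut_{r_1}$ on $w_k\cdots w_{n-1}$. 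Applying the strengthened IH to each half gives $(m,k)\in\Rexp(r_0,w)$ and $(k,n)\in\Rexp(r_1,w)$, so $(m,n)\in\Rexp(r_0\conc r_1,w)$, and conversely any such split can be assembled into an accepting run. For $r=r_0^*$, I would do a secondary induction on the number of loop traversals: the empty match corresponds to the direct $\epsilon$-edge from the initial to the final state, giving $(m,m)\in\Rexp(r_0^*,w)$, and a $(k{+}1)$-fold match is obtained by concatenating one traversal of $\aut_{r_0}$ (handled by the outer IH) with a $k$-fold match (handled by the secondary IH).

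The main obstacle is the bookkeeping of markings at the seam positions where one sub-automaton hands off to another, since the single $\epsilon$-path $\pi_k$ at such a seam traverses the tail end of one sub-automaton's $\epsilon$-structure, the joining Thompson edge, and the head of the next sub-automaton's $\epsilon$-structure. I would argue that the set of markings $\marking(\pi_k)$ is the disjoint union of the markings on the two constituent $\epsilon$-paths, that both halves refer to suffixes starting at the same position $w_k w_{k+1}\cdots$, and therefore $w_k w_{k+1}\cdots\models\bigwedge\marking(\pi_k)$ holds iff both halves are individually satisfied. Once this decomposition is justified, gluing and splitting runs becomes straightforward and all cases close cleanly by the inductive hypotheses.
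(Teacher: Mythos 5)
Your proof is correct. Note that the paper itself gives no proof of Lemma~\ref{lemma_autrcorrectness} at all --- it is stated and then used immediately in the construction of $\aut_{\ddiamond{r}\psi}$, the authors evidently regarding it as a routine correctness property of the Thompson construction extended with markings --- so there is nothing to compare against except the construction itself, and your structural induction on $r$ is the natural way to discharge it. You correctly identify the two points that actually require care: first, that the statement must be generalized to arbitrary start positions $(m,n)$ (equivalently, applied to the suffix $w_m w_{m+1} \cdots$ together with the shift-invariance of $\Rexp$) so that the inductive hypothesis is usable at the split points of $r_0 \conc r_1$ and at the iteration boundaries of $r_0^*$; and second, that at a seam the single $\epsilon$-path $\pi_k$ of the composite run is the concatenation of an $\epsilon$-path leaving one sub-automaton, a Thompson edge, and an $\epsilon$-path entering the next, so that $\marking(\pi_k)$ is the union of the constituent marking sets and all of them are evaluated against the same suffix $w_k w_{k+1} \cdots$, which is exactly what makes gluing and splitting compatible with the marking condition. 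The base cases, the handling of the unmarked fresh states in the union and star gadgets, and the secondary induction on the number of loop traversals for $r_0^*$ (including empty traversals, which collapse into a single $\epsilon$-path) are all handled correctly, so the argument closes.
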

 

Fix $\psi$ and $r$ (with tests~$\theta_1?, \ldots, \theta_k?$) and let
$\aut_r = (Q^r, \pow{P'}, q_0^r, \delta^r, C^r, \marking)$, $\aut_{\psi} = (Q', \pow{P'}, q_0', \delta', F')$, and $\aut_{\theta_j} = (Q^j, \pow{P'}, q_0^j, \delta^j, F^j)$ for $j = 1, \ldots, k$ be the corresponding automata, which we assume to have pairwise disjoint sets of states. Next, we show how to construct  $\aut_{\ddiamond{r}\psi}$, $ \aut_{\bbox{r}\psi}$, $ \aut_{\ddiamondcp{r}\psi}$, and $\aut_{\bboxcp{r}\psi}$.

We begin with $\ddiamond{r}\psi$ and define 
\[\aut_{\ddiamond{r}\psi} = (Q^r \cup Q' \cup Q^1 \cup \cdots \cup Q^k, \pow{P'}, q_0^r, \delta, F' \cup F^1 \cup \cdots \cup F^k)\] with
\[\delta(q, A) = \begin{cases}
	\delta'(q, A) 		&\text{if $q \in Q'$},\\
	\delta^j(q, A) 		&\text{if $q \in Q^j$},\\
	\bigvee_{q' \in Q^r \setminus C^r}\bigvee_{\pi \in \Pi(q,q')} \bigvee_{p \in \delta^r (q', A)} (p \wedge \bigwedge_{\theta_j \in \marking(\pi)} \delta^j(q_0^j, A))&\\
	 \hspace{4.cm}\vee &\text{if $q \in Q^r$}.\\
	\bigvee_{q' \in C^r}\bigvee_{\pi \in \Pi(q,q')} (\delta'(q_0', A) \wedge \bigwedge_{\theta_j \in \marking(\pi)} \delta^j(q_0^j, A))  &\\
	\end{cases}\]
So, $\aut_{\ddiamond{r}\psi}$ is the union of the automata for the regular expression, the tests, and for $\psi$ with a modified transition function. The transitions of the automata~$\aut_{\psi}$ and $\aut_{\theta_j}$ are left unchanged and the transition function for states in $Q^r$ is obtained by removing $\epsilon$-transitions. First consider the upper disjunct: it ranges disjunctively over all states~$p$ that are reachable via an initial $\epsilon$-path and an $A$-transition in the end. To account for the tests visited during the $\epsilon$-path (but not the test at $p$), we conjunctively add transitions that lead into the corresponding automata. The lower disjunct is similar, but ranges over $\epsilon$-paths that end in a final state, which requires the $A$ to be processed in $\aut_\psi$. Since we concatenate the automaton~$\aut_r$ with the automaton $\aut_\psi$, all edges leading into final states of $\aut_r$ are rerouted to the successors of the initial state of $\aut_\psi$. The tests along the $\epsilon$-paths are accounted for as in the first case. Finally, note that $Q^r$ does not contain any (Büchi) accepting states, i.e., every accepting run on $w$ has to leave $Q^r$ after a finite number of transitions. Since this requires transitions that would lead $\aut_r$ into a final state, we ensure the existence of a position $n$ such that $(0,n) \in \Rexp(r,w)$.

The definition of $\aut_{\bbox{r}\psi}$ is dual, which requires us to consider the negation of the tests: let $\aut_{\neg \theta_j}  = (Q^j, \pow{P'}, q_0^j, \delta^j, F^j)$ for $j = 1, \ldots, k$ be automata for the negation of the tests~$\theta_1?, \ldots, \theta_k?$ appearing in $r$. Recall that $\neg \theta_j$ always refers to the formula obtained by propagating the negation according to Lemma~\ref{lemma_pldlnegation}, and thus the automata for the negated tests can be obtained without using automata complementation. Furthermore, to construct $\aut_{\bbox{r}\psi}$, we remove $\epsilon$-paths of $\aut_r$ in a universal manner to account for the fact that the box-operator quantifies over all matches with $r$.  

Formally, we define 
\[\aut_{\bbox{r}\psi} = (Q^r \cup Q' \cup Q^1 \cup \cdots \cup Q^k, \pow{P'}, q_0^r, \delta, Q^r \cup F' \cup F^1 \cup \cdots \cup F^k)\] where
\[\delta(q, A) = \begin{cases}
	\delta'(q, A) 		&\text{if $q \in Q'$},\\
	\delta^j(q, A) 		&\text{if $q \in Q^j$},\\
	\bigwedge_{q' \in Q^r \setminus C^r}\bigwedge_{\pi \in \Pi(q,q')} \bigwedge_{p \in \delta^r (q', A)} (p \vee \bigvee_{\theta_j \in \marking(\pi)} \delta^j(q_0^j, A)) &\\
	\hspace{4.2cm}\wedge &\text{if $q \in Q^r$}.\\
	\bigwedge_{q' \in C^r}\bigwedge_{\pi \in \Pi(q,q')} (\delta'(q_0', A) \vee \bigvee_{\theta_j \in \marking(\pi)} \delta^j(q_0^j, A)) &\\
	\end{cases}\]
Note that we add $Q^r$ to the (Büchi) accepting states, since a path of a run on $w$ might stay in $Q^r$ forever, as it has to consider all $n$ with $(0,n) \in \Rexp(r,w)$.

For the changepoint-bounded operators, we have to modify $\aut_r$ to make it count color changes. Let $\aut_{cp} = (Q^{cp}, \pow{P'}, q_0^{cp}, \delta^{cp}, C^{cp})$ be the DFA depicted in Figure~\ref{fig_atomicaut}(c). We define the product of $\aut_r$ and $\aut_{cp}$ as 
\[\hat{\aut}_r = (\hat{Q}^r, \pow{P'}, \hat{q}_0^r, \hat{\delta}^r, \hat{C}^r, \hat{\marking}) \]
 where 
 \begin{itemize}
 	\item $\hat{Q}^r = Q^r \times Q^{cp}$, 
 	\item $\hat{q}_0^r = (q_0^r, q_0^{cp})$,
 	\item $\hat{\delta^r}((q,q'), A) = \set{(p,\delta^{cp}(q',A)) \mid p \in \delta^r(q, A)}$ if $A \not= \epsilon$, and
	$\hat{\delta^r}((q,q'), \epsilon) = \set{(p,q') \mid p \in \delta^r(q, A)}$,
	\item $\hat{C}^r = C^r \times C^{cp}$, and
	\item $\hat{\marking}(q,q') = \marking(q)$. 
 \end{itemize}
 Using this, we define $ \aut_{\ddiamondcp{r}\psi}$ as we defined $\aut_{\ddiamond{r}\psi}$, but using $\hat{\aut}_r$ instead of $\aut_r$. Similarly, $\aut_{\bboxcp{r}\psi}$ is defined as $\aut_{\bbox{r}\psi}$, but using $\hat{\aut}_r$ instead of $\aut_r$, which restricts the matches with $r$ recognized by $\aut_r$ to those that are within at most one changepoint.

It remains to prove that the construction is correct.

\begin{proof}[Proof of Theorem~\ref{theorem_autconstruction}]
First, we determine the size of $\aut_\varphi$. Boolean operations add one state while a temporal operator with regular expression~$r$ adds a number of states that is linear in the size of $r$ (which is its length), even when we take the intersection with the automaton checking for color changes. Note that we do not need to complement the automata~$\aut_{\theta_j}$ to obtain the $\aut_{\neg \theta_j}$, instead we rely on Lemma~\ref{lemma_pldlnegation}. Hence, the size of $\aut_\varphi$ is linear in the size of $\varphi$.

Thus, it remains to prove that $\aut_\varphi$ recognizes the models of $\varphi$. We proceed by induction over the structure of $\varphi$. The induction starts for atomic formulas and the induction steps for disjunction and conjunction are trivial, hence it suffices to consider the temporal operators. 

First, consider a subformula of the form~$\ddiamond{r}\psi$. If $w \models \ddiamond{r}\psi$, then there exists a position~$n$ such that $w_n w_{n+1} w_{n+2} \cdots \models \psi$ and $(0,n) \in \Rexp(r,w)$. Hence, due to Lemma~\ref{lemma_autrcorrectness}, there is an accepting run of $\aut_r$ on $w_0 \cdots w_{n-1}$ such that the tests visited during the run are satisfied by the appropriate suffixes of $w$. Thus, applying the induction hypothesis yields accepting runs of the test automata on these suffixes. Also, there is an accepting run of $\aut_\psi$ on $w_n w_{n+1} w_{n+2} \cdots$, again by induction hypothesis. These runs can be \quot{glued} together to obtain an accepting run of $\aut_{\ddiamond{r}\psi}$ on $w$.

For the other direction, let $\rho$ be an accepting run of $\aut_{\ddiamond{r}\psi}$ on $w$. Let $n \ge 0$ be the last level of $\rho$ that contains a state from $Q^r$. Such a level has to exist since states in $Q^r$ are not accepting and they have no incoming edges from states of the automata~$\aut_\psi$ and $\aut_{\theta_j}$ (the $\theta_j$ are the tests in $r$), but the initial state of $\aut_{\ddiamond{r}\psi}$ is in $Q^r$. Furthermore, $\aut_{\ddiamond{r}\psi}$ is non-deterministic when restricted to states in $Q^r \setminus C^r$. Hence, we can extract an accepting run of $\aut_r$ from $\rho$ on $w_0 \cdots w_{n-1}$ that additionally satisfies the requirements formulated in Statement~1 of Lemma~\ref{lemma_autrcorrectness}, due to the transitions into the test automata and applications of the induction hypothesis. Hence, we have $(0,n) \in \Rexp(r,w) $. Also, from the remainder of $\rho$ (levels greater or equal to $n$) we can extract an accepting run of $\aut_\psi$ on $w_n w_{n+1} w_{n+2} \cdots$. Hence, $w_n w_{n+1} w_{n+2} \cdots \models \psi$ by induction hypothesis. So, we conclude $w \models \ddiamond{r}\psi$.

The case for $\bbox{r}\psi$ is dual to the one for $\ddiamond{r}\psi$, while the cases for the change\-point-bounded operators~$\ddiamondcp{r}\psi$ and $\bboxcp{r}\psi$ are analogous, using the fact that $\aut_{cp}$ accepts words which have at most one changepoint. 
\end{proof}

The number of states of $\aut_\varphi$ is linear in $\card{\varphi}$, but it is not clear that $\aut_\varphi$ can be computed in polynomial time in $\card{\varphi}$, since, e.g., the transition functions of sub-automata of the form~$\aut_{\ddiamond{r}\psi}$ contain disjunctions that range over the set of $\epsilon$-paths. Here, it suffices to consider simple  paths, but even this restriction still allows for an exponential number of different paths. Fortunately, we do not need to compute $\aut_\varphi$ in polynomial time. It suffices to construct it on-the-fly in polynomial space, as this is sufficient for our  applications, which is clearly possible.

Furthermore, using standard constructions (e.g., \cite{MiyanoH84, Schewe09}), we can turn the alternating Büchi automaton~$\aut_\varphi$ into a non-deterministic Büchi automaton of exponential size and a deterministic parity automaton\footnote{The states of a parity automaton are colored by $\col\colon Q \rightarrow \nats$. It accepts a word~$w$, if it has a run~$q_0 q_1 q_2 \cdots$ on $w$ such that $\max\set{\col(q) \mid q_i = q \text{ for infinitely many i}} $ is even.} of doubly-exponential size with exponentially many colors.

Finally, the automata we construct are weak~\cite{MullerSS92}, i.e., every strongly connected component either has only accepting or only non-accepting states, which allows for improved translations into non-deterministic automata: the automata for the atomic formulas are weak and taking the union or intersection of two weak automata preserves weakness. Thus, consider the automata constructed for the temporal operators: the states of the automaton  for $r$ are either all accepting or all rejecting, and once this set of states is left to some automaton checking a subformula, it is never reentered. Hence, as these sub-automata are weak, the whole automaton is weak as well. However, our automata are not very weak~\cite{Rohde:1997:AAT:925874, GastinO01} (also known as linear), i.e., the automaton only has self-loops, but no non-trivial cycles, as the automata checking matches with $r$ might have cycles of arbitrary length. 

\section{Model Checking}
\label{sec_mc}
In this section, we consider the $\pldl$ model checking problem. A ($P$-labeled) transition system $\sys = (S, s_0, E, \ell)$ consists of a finite set~$S$ of states, an initial state~$s_0$, a left-total edge relation~$E \subseteq S \times S$, and a labeling~$\ell \colon S \rightarrow \pow{P}$. An initial path through $\sys$ is a sequence~$\pi = s_0 s_1 s_2\cdots$ of states satisfying $(s_n, s_{n+1}) \in E$ for every $n$. Its trace is defined as $\trace(\pi) = \ell(s_0) \ell(s_1) \ell(s_2) \cdots$. We say that $\sys$ satisfies a $\pldl$ formula~$\varphi$ with respect to a variable valuation~$\alpha$, if we have $(\trace(\pi), \alpha) \models \varphi$ for every initial path~$\pi$ of $\sys$.
The model checking problem asks, given a transition system~$\sys$ and a formula~$\varphi$, to determine whether there exists a variable valuation~$\alpha$ for which $\sys$ satisfies $\varphi$.%

\begin{theorem}
\label{thm_mc}
The $\pldl$ model checking problem is $\pspace$-complete.
\end{theorem}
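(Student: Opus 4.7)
The plan is a standard two-sided argument: $\pspace$-hardness by embedding, and $\pspace$-membership by combining the alternating color technique with an on-the-fly emptiness test for an exponentially large non-deterministic Büchi automaton.

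For hardness, every $\ltl$ formula can be translated in linear time into an equivalent variable-free $\pldl$ formula (see Example~\ref{ex_pldl}). Since $\ltl$ model checking of transition systems is $\pspace$-hard, and variable-free $\pldl$ model checking coincides with checking a fixed $\ldl$ formula (so the valuation quantifier is vacuous), $\pldl$ model checking inherits $\pspace$-hardness.

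For membership, the plan is the following chain of reductions. First, apply Lemma~\ref{lemma_removeboxes} to replace $\varphi$ by a size-preserving $\pldldiamond$ formula~$\varphi'$; by the two implications of that lemma, \textit{some} $\alpha$ witnesses $\sys \models \varphi$ iff some $\alpha'$ witnesses $\sys \models \varphi'$. Next, form the linearly-larger $\ldlt$ formula~$c(\varphi') = \rel{\varphi'} \wedge \chi_{\infty p} \wedge \chi_{\infty \neg p}$ using the alternating color technique. By Lemma~\ref{lemma_altcolor}, $\sys$ satisfies $\varphi'$ under some valuation iff there exists a bound~$k$ such that for every initial path~$\pi$ of $\sys$ and every $k$-bounded coloring~$w'$ of $\trace(\pi)$ we have $w' \models c(\varphi')$; the forward direction uses that every trace admits a $k$-spaced coloring for the witnessing $k = \max_x \alpha(x)$, and the backward direction uses the doubling of parameters in item~\ref{lemma_alternatingcolor_ldltopldl} of Lemma~\ref{lemma_altcolor}. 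Equivalently, this is the negation of the statement ``for every~$k$, the product of $\sys$ with a fresh coloring proposition $p$ admits a path whose coloring is $k$-bounded and whose associated word satisfies $\neg c(\varphi')$.''

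To decide this in $\pspace$, dualize $c(\varphi')$ using (the $\ldlt$-extension of) Lemma~\ref{lemma_pldlnegation} and apply Theorem~\ref{theorem_autconstruction} to obtain an alternating Büchi automaton~$\aut_{\neg c(\varphi')}$ with linearly many states (but possibly exponentially many transitions). Via the standard Miyano--Hayashi breakpoint construction this translates to a non-deterministic Büchi automaton~$\aut$ of exponential size, whose states and transitions can be generated on-the-fly in polynomial space; the same holds for the product with $\sys$ and the coloring component that tracks the changepoint proposition~$p$. The decision problem becomes: is there some~$k$ such that no accepting run of $\sys \times \aut$ corresponds to a $k$-bounded coloring? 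Following the adaptation of the KPV emptiness test, we instead search existentially for an accepting path whose $p$-blocks are \emph{arbitrarily long}, i.e., an accepting lasso in which the loop contains a block longer than the total state space of the product; by a pumping argument this witnesses failure for every bound~$k$, and its absence entails success for the pumping length as~$k$. Guessing such a lasso state by state, together with a marker recording the currently observed block length counted up to the state-space size, is a standard $\nlogspace$-in-the-(exponential)-product and hence $\pspace$-in-the-input procedure.

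The main obstacle, and the only step that is not routine bookkeeping, is justifying the pumping argument that links the existential quantification over~$k$ to the search for a single unboundedly-repeating block in the product; once this is in place, the complexity bound falls out from the linear-size alternating automaton of Theorem~\ref{theorem_autconstruction} and the fact that an exponential-size NBA can be traversed in polynomial space on-the-fly.
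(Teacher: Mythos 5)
Your overall architecture (reduce to $\pldldiamond$ via Lemma~\ref{lemma_removeboxes}, relativize with the alternating color technique, build the product with an exponential-size NBA on-the-fly, and search for a witness in $\nlogspace$-of-the-product) matches the paper. But the combinatorial condition you search for in the product is wrong, and this is exactly the step you yourself flag as the crux. You look for a fair lasso whose loop contains \emph{one} block longer than the product's state space. The paper instead searches for a \emph{pumpable} fair path in the sense of \cite{KupfermanPitermanVardi09}: a path in which \emph{every} block contains a vertex repetition. The difference is essential. To conclude from a coloring~$w'$ falsifying $\rel{\varphi'}$ that $(\trace(\pi),\alpha)\not\models\varphi'$, Item~\ref{lemma_alternatingcolor_pldltoldl} of Lemma~\ref{lemma_altcolor} requires $w'$ to be $k$-spaced, i.e., \emph{all} blocks must have length at least $k=\max_x\alpha(x)$. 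Pumping a single long block leaves the other blocks short, the resulting coloring is not $k$-spaced, and a changepoint-bounded diamond can then fail ``spuriously'' on it even though the underlying trace satisfies $\varphi'$ for a small valuation. Concretely, take $\varphi'=\ddiamondle{\ttrue^*}{x}q$ and $\sys$ a cycle of length ten with $q$ holding only at the fifth state: $\sys$ satisfies $\varphi'$ with $\alpha(x)=5$, yet the coloring with singleton blocks at positions $0$ and $1$ followed by an arbitrarily long block satisfies $\bboxcp{\ttrue^*}\neg q \wedge \chi_{\infty p}\wedge\chi_{\infty\neg p}$ (only positions $0$ and $1$ lie within one changepoint of the origin), and this yields a fair lasso in the product whose loop contains a block exceeding the state space. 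Your test would wrongly report non-satisfaction. The same confusion already surfaces in your intermediate equivalence: the forward direction of ``$\sys$ satisfies $\varphi'$ under some valuation iff there is a $k$ such that \emph{every} $k$-bounded coloring of every trace satisfies $c(\varphi')$'' is false, since a $k$-bounded coloring with very short blocks need not satisfy $\rel{\varphi'}$; the correct statement must go through spaced colorings (or the single canonical $k$-bounded \emph{and} $k$-spaced coloring).

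The fix is the paper's Lemma~\ref{lemma_pumppath}: $\sys$ fails $\varphi'$ for every valuation iff the product (with an NBA for $\neg\rel{\varphi'}\wedge\chi_{\infty p}\wedge\chi_{\infty\neg p}$ --- note the changepoint conjuncts are kept positive rather than negated along with $\rel{\varphi'}$, so that every fair path has well-defined infinite block structure) has a fair path in which every block contains a repeated vertex. Then, given any $\alpha$, one pumps the repetition in \emph{each} block $k$ times to obtain a $k$-spaced coloring falsifying $\rel{\varphi'}$, and conversely the $(|Q|\cdot|S|+1)$-spaced canonical coloring forces a repetition in every block by pigeonhole. Deciding this ``pumpable non-emptiness'' condition is $\nlogspace$-complete by the cited result of Kupferman et al., which is what yields the $\pspace$ bound on-the-fly; your single-long-block test is not a correct substitute for it.
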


To solve the $\pldl$ model checking problem, we first notice that we can restrict ourselves to $\pldldiamond$ formulas. Let $\varphi$ and $\varphi'$ be  defined as in Lemma~\ref{lemma_removeboxes}. Then, $\sys$ satisfies $\varphi$ with respect to some $\alpha$ if and only if $\sys$ satisfies $\varphi'$ with respect to some $\alpha'$. 

Our algorithm is similar to the one presented for $\prompt$ in \cite{KupfermanPitermanVardi09} and uses the alternating color technique. Recall that $p \notin P$ is the fresh atomic proposition used to specify the coloring and induces the blocks, maximal infixes with its unique changepoint at the first position. Let $G = (V, E, v_0, \ell, F)$  denote a colored Büchi graph consisting of a finite directed graph $(V, E)$, an initial vertex~$v_0$, a coloring function~$\ell \colon V \rightarrow \pow{\set{p}}$ labeling vertices by $p$ or not, and a set $F \subseteq V$ of accepting vertices. A path $v_0 v_1 v_2 \cdots $ through $G$ is pumpable, if all its blocks have at least one vertex that appears twice in this block. Furthermore, the path is fair, if it visits $F$ infinitely often. The pumpable non-emptiness problem asks, given a colored Büchi graph~$G$, whether it has a pumpable fair path starting in the initial vertex.

\begin{theorem}[\cite{KupfermanPitermanVardi09}]
The pumpable non-emptiness problem for colored Büchi graphs is $\nlogspace$-complete.
\end{theorem}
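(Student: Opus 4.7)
The plan is to establish both NLogSpace-membership and NLogSpace-hardness separately. For the lower bound, I would reduce from ordinary (uncolored) Büchi non-emptiness, which is well-known to be NLogSpace-complete by the standard lasso-guessing argument. Given a Büchi graph, assign the color $p$ uniformly to every vertex; then the entire infinite path forms a single block (the only changepoint is at position $0$), and because the graph is finite any infinite path must revisit some vertex, so the pumpability requirement is automatically met. Consequently, the colored Büchi graph admits a pumpable fair path if and only if the original graph admits a fair path.

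For the upper bound I would follow the classical NLogSpace cycle-detection pattern and guess an ultimately periodic path in lasso form on the fly: a stem from $v_0$ to some nondeterministically chosen vertex $u$, followed by a cycle from $u$ through a state in $F$ back to $u$. The usual algorithm for Büchi non-emptiness handles fairness with only a constant number of vertex pointers and a flag recording whether $F$ has been visited since the last occurrence of $u$. The new ingredient is pumpability.

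To enforce pumpability in logarithmic space, I would augment this search with block-local bookkeeping. At the first position of each block (that is, at each changepoint) the algorithm nondeterministically guesses a vertex $w$ promised to reappear later in the same block, and maintains a single bit tracking whether the second occurrence of $w$ has already been observed. When the next changepoint is about to be crossed, the algorithm rejects unless this bit is set, and then commits to a fresh guess for the next block. Since a lasso of polynomial length induces only finitely many distinct block shapes, namely those entirely inside the stem, those entirely inside one cycle iteration, those straddling the stem–cycle boundary, and those straddling two consecutive cycle iterations, it suffices to traverse the cycle a bounded number of times to certify pumpability of every block-representative. All pointers and counters fit in $O(\log |V|)$ space.

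The main obstacle is handling blocks that cross the stem–cycle boundary or the boundary between successive cycle iterations, since the promised witness $w$ must remain valid across such a transition. I would address this by structuring the algorithm into a fixed sequence of phases, each essentially a log-space reachability search: traverse the stem while verifying its blocks, then traverse the cycle once to fix its shape and to verify blocks internal to a single iteration, and finally traverse a second copy of the cycle to verify blocks that begin in the previous iteration (or in the stem) and conclude in the next. Because each phase individually is a standard NLogSpace reachability computation and their composition only needs a constant-size finite-state controller in addition to the vertex pointers, the whole procedure runs in NLogSpace.
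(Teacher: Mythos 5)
The paper does not actually prove this theorem; it imports it from \cite{KupfermanPitermanVardi09}, so your attempt has to be measured against that external proof. Your lower bound is correct and is the standard one: coloring every vertex identically makes any infinite path a single infinite block, which trivially contains a repeated vertex since the graph is finite, so pumpable non-emptiness degenerates to ordinary Büchi graph non-emptiness, which is $\nlogspace$-hard; the reduction is obviously log-space computable.

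The membership direction has the right skeleton (on-the-fly lasso guessing with one guessed repetition witness per block), but the mechanism you propose for blocks that straddle cycle iterations does not work. A log-space machine cannot ``traverse a second copy of the cycle'': it cannot store the cycle, so the path guessed in your last phase is in general a \emph{different} cycle $\pi_3$, and the infinite witness you implicitly build (say $\pi_1\pi_2\pi_3^\omega$) has seams between consecutive copies of $\pi_3$ whose blocks were never inspected --- you only certified the $\pi_2$-to-$\pi_3$ seam, and certifying one seam does not propagate to later ones without an extra condition. The standard fix, used in \cite{KupfermanPitermanVardi09}, is to dispense with phases and close the loop only when the \emph{entire} bookkeeping configuration repeats: current vertex, the witness guessed for the block currently in progress, and the number of occurrences of that witness seen so far in that block, capped at two (a single ``seen twice'' bit is not enough), with an accepting vertex in between. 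One then shows that the hybrid block created at the seam still contains two occurrences of the witness, because the occurrences counted before the seam and those still owed after it sum to at least two, so the periodic continuation remains pumpable. This same argument is also what is missing from your completeness direction, namely the unproved assertion that every pumpable fair path yields a lasso-shaped witness that your machine accepts. In short: correct approach and correct lower bound, but the upper bound as written has a genuine gap at the loop-closure step.
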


The following lemma reduces the $\pldldiamond$ model checking problem to the pumpable non-emptiness problem for colored Büchi graphs of exponential size. Given a non-deterministic Büchi automaton~$\aut = (Q, \pow{P'}, q_0, \delta, F)$ recognizing the models of $\neg \rel{ \varphi} \wedge \chi_{\infty p} \wedge \chi_{\infty \neg p}$ (note that $\rel{\varphi}$ is negated) and a transition system~$\sys = (S, s_0, E, \ell)$, define the product~$\aut \times \sys$ to be the colored Büchi~graph 
\[ \aut \times \sys = (Q \times S \times \pow{\set{p}}, E', (q_0, s_0, \emptyset), \ell', F \times S \times \pow{\set{p}})\]
 where
 \begin{itemize}
 	\item $((q, s, C),(q', s', C')) \in E'$ if and only if $(s,s')\in E$ and $q' \in \delta(q, \ell(s) \cup C)$, and
 	\item  $\ell'(q,s,C) = C$.
 \end{itemize}

Each initial path $(q_0, s_0, C_0)(q_1, s_1, C_1)(q_2, s_2, C_2)\cdots$ through $\aut \times \sys$ induces a coloring $(\ell (s_0) \cup C_0)(\ell (s_1) \cup C_1)(\ell (s_2) \cup C_2)\cdots$ of the trace of the path $s_0 s_1 s_2 \cdots$ through $\sys$. Furthermore, $q_0 q_1 q_2 \cdots$ is a run of $\aut$ on the coloring.

\begin{lemma}[cf.\ Theorem 4.2 of \cite{KupfermanPitermanVardi09}]
\label{lemma_pumppath}
$\sys$ does not satisfy $\varphi$ with respect to any $\alpha$ if and only if $\aut \times \sys$ has a pumpable fair path.
\end{lemma}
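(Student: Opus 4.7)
By Lemma~\ref{lemma_removeboxes} I may assume $\varphi$ is a $\pldldiamond$ formula, so that the parameters appear only on diamond-operators. The overall plan follows the alternating-color strategy used for $\prompt$ in~\cite{KupfermanPitermanVardi09}: fair paths in $\aut \times \sys$ correspond to pairs consisting of a trace of $\sys$ and a coloring of that trace on which $\aut$ accepts---i.e., which witnesses the failure of $c(\varphi)$---while pumpability additionally allows stretching this coloring so that every block becomes arbitrarily long. The two directions of the claim then correspond to the two parts of Lemma~\ref{lemma_altcolor}.

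For the $(\Leftarrow)$ direction, given a pumpable fair path $\rho$ I fix any valuation $\alpha$, set $k = \max_{x \in \var(\varphi)} \alpha(x)$, and use pumpability block-by-block: inside each block there are positions $i < j$ carrying the same vertex, so inserting the $i$-to-$j$ cycle enough times lengthens the block past $k$ without disturbing fairness or path-validity. The projections of the pumped path deliver a trace $w$ of $\sys$ and a $k$-spaced coloring $w'$ of $w$ accepted by $\aut$; thus $w' \models \neg\rel{\varphi} \wedge \chi_{\infty p} \wedge \chi_{\infty \neg p}$, so $w' \not\models c(\varphi)$, and the contrapositive of Lemma~\ref{lemma_altcolor}(\ref{lemma_alternatingcolor_pldltoldl}) yields $(w, \alpha) \not\models \varphi$. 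Since $\alpha$ was arbitrary, $\sys$ does not satisfy $\varphi$ for any valuation.

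For the $(\Rightarrow)$ direction I let $N$ be the number of vertices of $\aut \times \sys$ and choose $\alpha(x) = 2(N+1)$ for every $x \in \var(\varphi)$, using the hypothesis to obtain an initial path $\pi$ of $\sys$ with $(\trace(\pi), \alpha) \not\models \varphi$. I construct a concrete $(N+1)$-bounded coloring $w'$ of $\trace(\pi)$ by toggling the truth of $p$ every $N+1$ positions, so that blocks have length exactly $N+1$ and changepoints occur infinitely often. The contrapositive of Lemma~\ref{lemma_altcolor}(\ref{lemma_alternatingcolor_ldltopldl}) then gives $w' \not\models c(\varphi)$, and combined with the guaranteed changepoints this forces $w' \models \neg\rel{\varphi} \wedge \chi_{\infty p} \wedge \chi_{\infty \neg p}$. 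Hence $\aut$ admits an accepting run on $w'$, which paired with $\pi$ via the product construction yields a fair path $\rho$ in $\aut \times \sys$; every block of $\rho$ consists of $N+1$ consecutive vertices drawn from a graph of $N$ vertices, so pigeonhole plants a repeated vertex in every block, making $\rho$ pumpable.

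The main obstacle is precisely this pumpability claim in the forward direction, since Lemma~\ref{lemma_altcolor}(\ref{lemma_alternatingcolor_ldltopldl}) naturally delivers fair paths with \emph{short} blocks, whereas pumpability wants blocks \emph{long} enough to force a vertex repetition. Choosing the uniform block length to be exactly $N+1$ resolves both tensions at once: it is small enough to keep $w'$ both $(N+1)$-bounded---so that Lemma~\ref{lemma_altcolor}(\ref{lemma_alternatingcolor_ldltopldl}) applies---and rich in changepoints, yet just large enough for pigeonhole over the finite product graph to plant a repeated vertex in every single block, sidestepping any more delicate compactness or König-style argument.
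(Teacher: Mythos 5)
Your proof is correct and follows essentially the same route as the paper: both directions rest on the two items of Lemma~\ref{lemma_altcolor}, with a fixed uniformly-blocked coloring plus pigeonhole giving pumpability in the forward direction, and block-wise pumping to $k$-spacedness in the backward direction. The only (harmless) difference is the constant: the paper uses blocks of length $\card{Q}\cdot\card{S}+1$, exploiting that all vertices in a block share the same color component, whereas you use $N+1$ with $N$ the total number of product vertices.
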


\begin{proof}
Let $\varphi$ not be satisfied by $\sys$ with respect to any $\alpha$, i.e., for every $\alpha$ there exists an initial path~$\pi$ through $\sys$ such that $(\trace(\pi), \alpha) \not\models \varphi$. Pick $\alpha^*$ such that $\alpha^*(z) = 2\cdot|Q| \cdot |S| +2$ for every $z$ and let $\pi^*$ be the corresponding path. Applying Item~\ref{lemma_alternatingcolor_ldltopldl} of Lemma~\ref{lemma_altcolor} yields $w \not \models c(\varphi)$ for every $|Q|\cdot|S|+1$-bounded coloring~$w$ of $\trace(\pi^*)$. Now, consider the unique $|Q|\cdot|S|+1$-bounded and $|Q|\cdot|S|+1$-spaced coloring~$w$ of $\trace(\pi^*)$ that starts with $p$ not holding true in the first position. As argued above, $w \not \models c(\varphi)$, and we have $w \models \chi_{\infty p} \wedge \chi_{\infty \neg p}$, as $w$ is bounded. Hence, $w \models \neg \rel{ \varphi} \wedge \chi_{\infty p} \wedge \chi_{\infty \neg p}$, i.e., there is an accepting run~$q_0 q_1 q_2 \cdots$ of $\aut$ in $w$. This suffices to show that $(q_0, \pi_0^*, w_0\cap \set{p}) (q_1, \pi_1^*, w_1\cap \set{p}) (q_2, \pi_2^*, w_2\cap \set{p}) \cdots$ is a pumpable fair path through $\aut \times \sys$, since every block has length~$|Q|\cdot|S|+1$. This implies the existence of a repeated vertex in every block, since there are exactly $|Q|\cdot|S|$ vertices of each color.

We now consider the other direction. Thus, assume $\aut \times \sys$ contains a pumpable fair path $(q_0, s_0, C_0)(q_1, s_1, C_1)(q_2, s_2, C_2)\cdots$, fix some arbitrary $\alpha$, and define $k = \max_{x \in \vardiamond(\varphi)}\alpha(x)$. There is a repetition of a vertex of $\aut \times \sys$ in every block, each of which can be pumped $k$ times. This path is still fair and induces a coloring~$w_k'$ of a trace~$w_k$ of an initial path of $\sys$. Since the run encoded in the first components is an accepting one on $w_k'$, we conclude that the coloring~$w_k'$ satisfies $\neg \rel{\varphi}$. Furthermore, $w_k'$ is $k$-spaced, since we pumped each repetition $k$ times.

Towards a contradiction assume we have $(w_k, \alpha) \models \varphi$. 
Applying Item~\ref{lemma_alternatingcolor_pldltoldl} of Lemma~\ref{lemma_altcolor} yields $w'_k \models c(\varphi)$, which contradicts $w'_k \models \neg \rel{\varphi}$. Hence, for every $\alpha$ we have constructed a path of $\sys$ whose trace does not satisfy $\varphi$ with respect to $\alpha$, i.e., $\sys$ does not satisfy $\varphi$ with respect to any $\alpha$.
\end{proof}

We can deduce an upper bound on valuations that satisfy a formula in a given transition system.

\begin{corollary}
	\label{cor_mcub}
If there is a valuation such that $\sys$ satisfies a $\pldldiamond$ formula $\varphi$, then there is also one that is bounded exponentially in $\size{\varphi}$ and linearly in  $\size{\sys}$.	
\end{corollary}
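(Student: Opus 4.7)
The plan is to derive the corollary by re-inspecting the first direction of the proof of Lemma~\ref{lemma_pumppath}, so that a concrete valuation~$\alpha^*$ of the right size can simply be read off. First, I would observe that the construction in that direction does not actually use the full assumption \quot{$\sys$ does not satisfy $\varphi$ with respect to any $\alpha$}: the proof fixes $\alpha^*(z) = 2\cdot\size{Q} \cdot \size{S} + 2$ and then only invokes the existence of one initial path~$\pi^*$ with $(\trace(\pi^*),\alpha^*) \not\models \varphi$, combined with Item~\ref{lemma_alternatingcolor_ldltopldl} of Lemma~\ref{lemma_altcolor} instantiated at $k = \size{Q}\cdot\size{S}+1$. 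Contrapositively, this yields the sharper statement: if $\aut \times \sys$ has no pumpable fair path, then $\sys$ already satisfies $\varphi$ with respect to this concrete $\alpha^*$.

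Next, I would feed this sharpening into the \quot{only if} direction of Lemma~\ref{lemma_pumppath}. If any valuation witnesses that $\sys$ satisfies $\varphi$, then Lemma~\ref{lemma_pumppath} rules out a pumpable fair path in $\aut \times \sys$, which by the sharpening forces $\alpha^*$ itself to be such a witness.

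Finally, I would bound the value of $\alpha^*$. Here $\size{S}$ is linear in $\size{\sys}$, and $\size{Q}$ is the state count of the non-deterministic Büchi automaton~$\aut$ for $\neg \rel{\varphi} \wedge \chi_{\infty p} \wedge \chi_{\infty \neg p}$. That formula has size linear in $\size{\varphi}$ (the rewriting~$\rel{\cdot}$ is local, negation preserves size by Lemma~\ref{lemma_pldlnegation}, and the two $\chi_\infty$-conjuncts contribute only a constant), so Theorem~\ref{theorem_autconstruction} supplies a linear-size alternating Büchi automaton, and the standard alternation-to-nondeterminism translation yields an NBA with $\size{Q}$ exponential in $\size{\varphi}$. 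Hence $\alpha^*(z) = 2\size{Q}\size{S}+2$ is exponential in $\size{\varphi}$ and linear in $\size{\sys}$, as required.

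I do not foresee a real obstacle; the argument is essentially a bookkeeping exercise on an already-proven lemma. The one step that deserves scrutiny is verifying that the first direction of the proof of Lemma~\ref{lemma_pumppath} uses the non-satisfaction hypothesis only once and only at the specific valuation $\alpha^*$, so that its contrapositive yields a genuine positive statement about $\alpha^*$ alone rather than requiring a quantifier over all $\alpha$.
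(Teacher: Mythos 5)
Your proposal is correct and follows essentially the same route as the paper: the paper also proves the corollary by noting that the first direction of the proof of Lemma~\ref{lemma_pumppath} only uses non-satisfaction at the specific valuation~$\alpha^*$ with $\alpha^*(z)=2\cdot\size{Q}\cdot\size{S}+2$, and then combines this with the lemma itself (the paper phrases it as a contradiction, you as a contrapositive chain, which is the same argument). Your added bookkeeping on why $\size{Q}$ is exponential in $\size{\varphi}$ matches what the paper leaves implicit.
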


\begin{proof}
Let $\sys$ satisfy $\varphi$ with respect to $\alpha$, but not with the valuation~$\alpha^*$ with $\alpha^*(x) = 2\cdot|Q| \cdot |S| +2$ for all $x$. In the preceding proof, we constructed a pumpable fair path in $\aut \times \sys$ starting from this assumption. This contradicts Lemma~\ref{lemma_pumppath}, since $\sys$ satisfying $\varphi$ with respect to $\alpha$ is equivalent to $\aut \times \sys$ not having a pumpable fair path. Since $2\cdot|Q| \cdot |S| +2$ is exponential in $\size{\varphi}$ and linear in $\card{S}$, the result follows.
\end{proof}

A matching lower bound of $2^n$ can be proven by implementing a binary counter with $n$ bits using a formula of polynomial size in $n$. This already holds true for $\prompt$, as noted in~\cite{KupfermanPitermanVardi09}.

Now, we are able to prove the main result of this section: $\pldl$ model checking is $\pspace$-complete.

\begin{proof}[Proof of Theorem~\ref{thm_mc}]
$\pspace$-hardness follows directly from the $\pspace$-hard\-ness of the $\ltl$ model checking problem~\cite{SistlaClarke85}, as $\ltl$ is a fragment of $\pldl$.

The following is a polynomial space algorithm, which is correct due to Lemma~\ref{lemma_pumppath}: construct $\aut \times \sys$ on-the-fly and check whether it contains a pumpable fair path. Since the search for such a path can be implemented on-the-fly without having to construct the full product~\cite{KupfermanPitermanVardi09}, it can be implemented using polynomial space.
\end{proof}

To conclude, we prove the dual of Corollary~\ref{cor_mcub} for $\pldlbox$ formulas, which will be useful when we consider the model checking optimization problem. 

\begin{lemma}
	\label{lem_mcubbox}
Let $\varphi$ be a $\pldlbox$ formula and let $\sys$ be a transition system. There is a variable valuation~$\alpha^*$ that is bounded exponentially in $\size{\varphi}$ and linearly in $\size{\sys}$ such that if $\sys$ satisfies $\varphi$ with respect to $\alpha^*$, then $\sys$ satisfies $\varphi$ with respect to every valuation. 
\end{lemma}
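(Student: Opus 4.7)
Note that $\psi \coloneq \neg\varphi$ is a $\pldldiamond$ formula by definition of $\pldlbox$, so the machinery of Section~\ref{sec_mc} applies directly. Let $\aut'$ be the non-deterministic Büchi automaton for $c(\psi)$ with state set $Q$, obtained by composing Theorem~\ref{theorem_autconstruction} with the standard Miyano--Hayashi translation, so $|Q|$ is exponential in $|\varphi|$. Set $\alpha^*(z) \coloneq 2|Q||S|+2$ for every variable~$z$; this meets the claimed bounds. I argue the contrapositive: whenever some $\beta$ witnesses that $\sys$ does not satisfy $\varphi$, the valuation $\alpha^*$ also does.

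Fix such a $\beta$ and an initial path~$\pi$ of $\sys$ with $(\trace(\pi), \beta) \models \psi$. By Item~\ref{lemma_alternatingcolor_pldltoldl} of Lemma~\ref{lemma_altcolor}, every $k$-spaced coloring of $\trace(\pi)$ is a model of $c(\psi)$, where $k = \max_z \beta(z)$. If $k \le |Q||S|+1$, I use the unique $(|Q||S|+1)$-spaced and $(|Q||S|+1)$-bounded coloring of $\trace(\pi)$---which is also $k$-spaced and hence accepted by $\aut'$---and Item~\ref{lemma_alternatingcolor_ldltopldl} of Lemma~\ref{lemma_altcolor} immediately yields $(\trace(\pi), \alpha^*) \models \psi$.

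If instead $k > |Q||S|+1$, the product $\aut'\times\sys$ (defined as in Section~\ref{sec_mc}, but with $\aut'$ in place of the automaton used there) has a fair path that eventually stays within a reachable accepting SCC~$C$; $C$ must contain vertices of both colors, otherwise the coloring would violate $\chi_{\infty p}\wedge\chi_{\infty\neg p}$, and monochromatic stretches of length $k > |Q||S|$ inside $C$ force each of the two colored subgraphs of $C$ to contain a cycle of length at most $|Q||S|$. Chaining these cycles, in the manner of the proof of Lemma~\ref{lemma_pumppath}, yields a lasso fair path whose colored blocks each have length exactly $|Q||S|+1$, which is pumpable by pigeonhole over the $|Q||S|$ vertices per color. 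Applying Item~\ref{lemma_alternatingcolor_ldltopldl} of Lemma~\ref{lemma_altcolor} to the induced coloring of the corresponding trace $\trace(\pi^*)$ then yields $(\trace(\pi^*), \alpha^*) \models \psi$, and the contrapositive is complete. The main obstacle lies in this chaining construction: tuning the iteration counts of the colored cycles and inserting appropriate connecting edges to produce colored blocks of length exactly $|Q||S|+1$ while preserving infinitely many visits to an accepting state of~$C$ and respecting the joint transition structure of $\aut'$ and $\sys$ is the most intricate step, although the strong connectivity of $C$ combined with the presence of both colors makes it feasible.
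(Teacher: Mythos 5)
Your overall strategy---pass to $\neg\varphi$, which is a $\pldldiamond$ formula, argue the contrapositive, and combine the alternating color technique with a pigeonhole argument over the $\size{Q}\cdot\size{S}$ configurations of each color---matches the paper's. Your Case~1 is correct (in fact it follows immediately from Lemma~\ref{lemma_monotonicity}, since $\neg\varphi$ is a $\pldldiamond$ formula and $\alpha^*$ dominates $\beta$ there, with no automata needed). The divergence, and the gap, lies in Case~2. The paper does not rebuild a lasso from monochromatic cycles inside an accepting SCC; it takes the accepting run of $\aut$ on the $k$-bounded, $k$-spaced coloring of $\trace(\pi)$ and \emph{excises} infixes of each block between positions at which both the automaton state and the system state repeat, taking care to preserve one accepting-state visit per block when there is one. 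This deflation argument handles the entire path uniformly and directly yields a $(2\cdot\size{Q}\cdot\size{S}+1)$-bounded coloring of a trace of $\sys$ together with an accepting run on it, whence the paper's constant $k^*=4\cdot\size{Q}\cdot\size{S}+2$.

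Your reconstruction-by-chaining, as sketched, has three concrete failure points. First, the prefix of the path before it enters the set of infinitely visited vertices can itself contain blocks of length $k$, and no cycle inside the SCC helps you there; those blocks must be shortened by excision anyway. Second, one iteration of a $p$-cycle must be joined to one iteration of a $\neg p$-cycle by connecting paths inside the SCC, and a detour through an accepting vertex must be inserted periodically to keep the path fair; these connectors contribute additional monochromatic vertices to the adjacent blocks, so the resulting blocks cannot be guaranteed to have length $\size{Q}\cdot\size{S}+1$, only some larger multiple of $\size{Q}\cdot\size{S}$. Consequently Item~\ref{lemma_alternatingcolor_ldltopldl} of Lemma~\ref{lemma_altcolor} yields satisfaction of $\neg\varphi$ only with respect to a valuation strictly larger than your $\alpha^*=2\cdot\size{Q}\cdot\size{S}+2$, and since $\neg\varphi$ is a $\pldldiamond$ formula, monotonicity does not let you descend back to $\alpha^*$. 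This is repairable by enlarging the constant (the lemma only requires an exponential bound), but the proof as written is inconsistent on this point, and the chaining itself---which you explicitly flag as the main obstacle---is precisely the step that the paper's excision technique replaces and that your sketch leaves unresolved.
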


\begin{proof}
We begin by defining $\alpha^*$: let $\aut$ be a Büchi automaton recognizing the models of $c(\neg \varphi)$, which is of exponential size in $\size{\varphi}$. Define $k^* = 4 \cdot \size{\aut} \cdot \size{\sys}+2$ and let $\alpha^*$ be the variable valuation mapping every variable to $k^*$. Now, we consider the contrapositive and show: if there is an $\alpha$ such that $\sys$ does not satisfy $\varphi$ with respect to $\alpha$, then $\sys$ does not satisfy $\varphi$ with respect to $\alpha^*$. 

Thus, assume there is an $\alpha$ and a path~$\pi$ such that $(\trace(\pi), \alpha) \models \neg \varphi$. Note that $\neg \varphi$ is a $\pldldiamond$-formula. Due to monotonicity, we can assume w.l.o.g.\ that $\alpha$ maps all variables to the same value, call it $k$.

We denote by $\trace(\pi)'$ the unique $k$-bounded and $k$-spaced $p$-coloring of $\trace(\pi)$ that starts with $p$ not holding true in the first position. Applying Item~\ref{lemma_alternatingcolor_pldltoldl} of Lemma~\ref{lemma_altcolor} shows that $\trace(\pi)'$ satisfies $c(\neg \varphi)$. Fix some accepting run of $\aut$ on $\trace(\pi)'$ and consider an arbitrary block of $\trace(\pi)'$: if the run does not visit an accepting state during the block, we remove infixes of the block and the run where the run reaches the same state before and after the infix and where the state of $\sys$ at the beginning and the end of the infix are the same, until the block has length at most $\size{\aut} \cdot \size{\sys}$. 

On the other hand, assume the run visits at least one accepting state during the block. Fix one such position. Then, we can remove infixes as above between the beginning of the block and the position before the accepting state is visited and between the position after the accepting state is reached and before the end of the block. What remains is a block of length at most $2\cdot\size{\aut}\cdot\size{\sys}+1$, at most $\size{\aut}\cdot \size{\sys}$ many positions before the designated position, this position itself, and at most $\size{\aut}\cdot \size{\sys}$ many after the designated position. 

Thus, we have constructed a $2 \cdot\size{\aut}\cdot \size{\sys}+1$-bounded coloring~$\trace(\hat{\pi})'$ of a trace~$\trace(\hat{\pi})$ for some path~$\hat{\pi}$ of $\sys$, as well as an accepting run of $\aut$ on $\trace(\hat{\pi})'$. Hence, $\trace(\hat{\pi})'$ is a model of $c(\neg \varphi)$ and applying Item~\ref{lemma_alternatingcolor_ldltopldl} of Lemma~\ref{lemma_altcolor} shows that $\trace(\hat{\pi})$ is a model of $\neg \varphi$ with respect to the variable valuation mapping every variable to $2\cdot(2\cdot\size{\aut}\cdot \size{\sys}+1)=k ^*$. Therefore, $\sys$ does not satisfy $\varphi$ with respect to $\alpha^*$.
\end{proof}

\section{Assume-guarantee Model Checking}
\label{sec_agmc}
After having solved the $\pldl$ model checking problem, we turn our attention to the assume-guarantee model checking problem. An instance of this problem consists of a transition system~$\sys$ and two specifications, an assumption~$\phiass$ and a guarantee~$\phigua$. Intuitively, whenever the assumption~$\phiass$ is satisfied, then also the guarantee~$\phigua$ should be satisfied.

More formally, given two transition systems~$\sys = (S, s_0, E, \ell)$ and $\sys' = (S', s_0', E', \ell')$ with $\ell(s_0) = \ell'(s_0')$, we define their parallel composition
\[\sys \parcomp \sys' =(S'', s_0'', E'', \ell'')\] where
\begin{itemize}
	\item $S'' = \set{ (s,s') \in S\times S' \mid \ell(s) = \ell'(s') }$,
	\item $s_0'' = (s_0, s_0')$,
	\item $((s, s'),(t,t')) \in E''$ if and only if $(s,t) \in E$ and $(s',t') \in E'$, and 
	\item 	$\ell''(s,s') = \ell(s) = \ell'(s')$. 
\end{itemize}  
Note that parallel composition as defined here amounts to taking the intersection of the trace languages of $\sys$ and $\sys'$. In particular, we have the following property.

\begin{remark}
\label{rem_parcomp}
Let $(v_0, v_0') (v_1, v_1') (v_2, v_2') \cdots $ be a path through a parallel composition~$\sys\parcomp\sys'$. Then, $v_0 v_1 v_2 \cdots$ is a path through $\sys$ that has the same trace as $(v_0, v_0') (v_1, v_1') (v_2, v_2') \cdots $.
\end{remark}

An assume-guarantee specification~$(\phiass, \phigua)$ consists of two $\pldl$ formulas, an assumption~$\phiass$ and a guarantee~$\phigua$. We say that a finite transition system~$\sys$ satisfies the specification, denoted by $\agmcdflt$, if for every countably infinite\footnote{This is the only place where we allow infinite transition systems (see the discussion below the proof of Lemma~\ref{lemma_agmc_charac}).}
transition system~$\sys'$, if $\sys \parcomp \sys'$ is a model of $\phiass$ with respect to some $\alpha$, then $\sys \parcomp \sys'$ is also a  model of $\phigua$ with respect to some $\beta$~\cite{Pnueli85}. For $\ltl$ specifications, this boils down to model checking the implication~$\phiass \rightarrow \phigua$, but the problem is more complex in the presence of parameterized operators, as already noticed by Kupferman et al.\ in the case of $\prompt$~\cite{KupfermanPitermanVardi09}. This is due to the fact that the variable valuation $\beta$ in the problem statement above may depend on $S'$.  In the following, we extend Kupferman et al.'s algorithm for the $\prompt$ assume-guarantee model checking problem to the $\pldl$ one. 

The main theorem of this section reads as follows.

\begin{theorem}
\label{thm_agmc}
The $\pldl$ assume-guarantee model checking problem is $\pspace$-complete.
\end{theorem}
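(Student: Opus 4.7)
\textbf{Hardness}: PSPACE-hardness follows from Theorem~\ref{thm_mc}. The special case $\phiass = \ttrue$ reduces PLDL model checking to the assume-guarantee problem, and PLDL model checking inherits PSPACE-hardness from LTL.

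For \textbf{membership in PSPACE}, the plan is to adapt the pumpable-path algorithm underlying Theorem~\ref{thm_mc} to the two-formula setting, applying the alternating color technique independently to $\phiass$ and $\phigua$ via two fresh coloring propositions $p_a, p_g \notin P$. Unrolling the definition, $\sys$ violates $\agmcdflt$ iff there exists a (countable) transition system $\sys'$ such that $\sys \parcomp \sys' \models_\alpha \phiass$ for some $\alpha$ while $\sys \parcomp \sys' \not\models_\beta \phigua$ for every $\beta$. I would use Lemma~\ref{lemma_removeboxes} to replace $\phiass$ by an equivalent $\pldldiamond$ formula, so that the existential quantifier on $\alpha$ can be handled by the exponential bound of Corollary~\ref{cor_mcub}. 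The universal quantifier on $\beta$ for the guarantee is handled dually, using Lemma~\ref{lem_mcubbox} to bound the relevant valuation exponentially in $\size{\phigua}$ and linearly in $\size{\sys \parcomp \sys'}$.

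Applying Lemma~\ref{lemma_altcolor} separately with $p_a$ to the modified assumption and with $p_g$ to the modified guarantee yields $\ldlt$ formulas whose models over appropriately pumpable/bounded colorings correspond precisely to the two original satisfaction conditions. Theorem~\ref{theorem_autconstruction} then produces alternating Büchi automata of linear size, which convert to nondeterministic Büchi automata of exponential size. Their product with $\sys$ forms a doubly-colored Büchi graph of exponential size whose fair paths encode traces of $\sys \parcomp \sys'$ for some $\sys'$, with the nondeterminism of the product implicitly absorbing the quantification over $\sys'$. The characterization I aim for is an analog of Lemma~\ref{lemma_pumppath}: $\sys$ violates the specification iff this graph has a fair path that is pumpable in its $p_a$-coloring yet whose $p_g$-coloring remains block-bounded. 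This generalized pumpable-non-emptiness problem should still reduce to an $\nlogspace$ check on the product graph and be solvable on-the-fly in polynomial space, in the spirit of the PROMPT-LTL algorithm.

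The \textbf{main obstacle} will be the correctness proof for the characterization: the asymmetric quantifier structure on $\alpha$ and $\beta$ forces one to combine Items~\ref{lemma_alternatingcolor_pldltoldl} and \ref{lemma_alternatingcolor_ldltopldl} of Lemma~\ref{lemma_altcolor} with Corollary~\ref{cor_mcub} and Lemma~\ref{lem_mcubbox} simultaneously, while the arbitrary (possibly infinite) $\sys'$ has to be recovered purely from the pumping argument. In particular, establishing the direction ``violation $\Rightarrow$ pumpable fair path with bounded $p_g$-blocks'' is where Lemma~\ref{lem_mcubbox} does the real work, since without the uniform exponential bound on $\beta$ one could not constrain the $p_g$-coloring to finitely many block-lengths visible in the exponential product.
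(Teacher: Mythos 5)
Your overall skeleton matches the paper's: restrict to $\pldldiamond$ formulas via Lemma~\ref{lemma_removeboxes}, introduce two coloring propositions, build the product of two Büchi automata with $\sys$, and reduce to a pumpable non-emptiness test for colored Büchi graphs of degree two that runs on-the-fly in polynomial space; the hardness argument is also fine. However, your treatment of the universal quantification over $\beta$ has a genuine gap. Lemma~\ref{lem_mcubbox} cannot do the work you assign to it, for two reasons. First, its bound is linear in the size of the transition system, and the system in question is $\sys \parcomp \sys'$ where $\sys'$ ranges over \emph{countably infinite} systems, so the bound degenerates. Second, the quantifier shape does not match: after Lemma~\ref{lemma_removeboxes} the guarantee is a $\pldldiamond$ formula, and the violation condition is ``for every $\beta$ there exists a trace failing $\phigua$ with respect to $\beta$,'' where the witnessing trace may depend on $\beta$; Lemma~\ref{lem_mcubbox} speaks about a single valuation under which \emph{all} traces satisfy a $\pldlbox$ formula. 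By monotonicity, failure of a $\pldldiamond$ formula becomes strictly harder as $\beta$ grows, so no single bounded $\beta^*$ witnessed by a single path can certify failure for all $\beta$. This is exactly why the paper first proves a characterization lemma (Lemma~\ref{lemma_agmc_charac}) extracting, for each $\beta$, a single path of $\sys$ itself, and then replaces the uniform bound you seek by a \emph{pumping} argument: in one direction a specific $\beta^*$ of size roughly $2\size{Q}\size{Q'}\size{S}\cdot k_\alpha + 1$ forces, by pigeonhole, a vertex repetition with an assumption-color changepoint inside every guarantee-color block; in the other direction those repetitions are pumped $k_\beta$ times to manufacture a failing path for each $\beta$.

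Relatedly, your pumpability condition has the two colors swapped. For the argument to go through, the blocks of the \emph{guarantee} color must be the ones that can be pumped to arbitrary length (so that $\neg\rel{\phigua}$ on $k_\beta$-spaced colorings yields failure of $\phigua$ for every $\beta$), while the blocks of the \emph{assumption} color must remain bounded under pumping (so that $\rel{\phiass}$ on a bounded coloring yields satisfaction of $\phiass$ for some $\alpha$); requiring an assumption-color changepoint between the repeated vertices is precisely what keeps the assumption blocks bounded while the guarantee blocks grow. Your use of Corollary~\ref{cor_mcub} for $\alpha$ is likewise unnecessary: $\alpha$ is read off from the (w.l.o.g.\ ultimately periodic) pumpable path as twice its maximal assumption-block length.
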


To begin with, we show that we can refute such an assume-guarantee specification using a single trace per valuation~$\beta$, just like in the model checking problem where we are looking for a single counterexample. However, as we consider the satisfaction of two formulas, we have to deal with two variable valuations. 

\begin{lemma}
\label{lemma_agmc_charac}
Let $\sys$ be a transition system and let $(\phiass, \phigua)$ be a pair of $\pldl$ formulas. Then, $\agmcdflt$ does not hold if and only if there is a variable valuation~$\alpha$ such that for every variable valuation~$\beta$ there is a path~$\pi_\beta$ through~$\sys$ with $(\trace(\pi_\beta), \alpha) \models \phiass$, but $(\trace(\pi_\beta), \beta) \not\models \phigua$. 
\end{lemma}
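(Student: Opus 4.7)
The plan is to unfold the definition of $\agmcdflt$ directly. By the definition, $\agmcdflt$ fails iff there exist a countably infinite transition system $\sys'$ (with $\ell(s_0)=\ell'(s_0')$) and a valuation $\alpha$ such that every path through $\sys\parcomp\sys'$ satisfies $\phiass$ w.r.t.\ $\alpha$, but for every valuation $\beta$ some path through $\sys\parcomp\sys'$ falsifies $\phigua$ w.r.t.\ $\beta$. The two directions of the lemma then correspond to extracting and constructing such a $\sys'$.

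For the $(\Rightarrow)$ direction I would fix such witnesses $\sys'$ and $\alpha$. For each valuation $\beta$, pick a path $\hat\pi_\beta$ through $\sys\parcomp\sys'$ with $(\trace(\hat\pi_\beta),\beta)\not\models\phigua$, and project it onto its first component to obtain, via Remark~\ref{rem_parcomp}, a path $\pi_\beta$ through $\sys$ with $\trace(\pi_\beta)=\trace(\hat\pi_\beta)$. Since every path of $\sys\parcomp\sys'$ satisfies $\phiass$ w.r.t.\ $\alpha$, in particular $(\trace(\pi_\beta),\alpha)\models\phiass$, while $(\trace(\pi_\beta),\beta)\not\models\phigua$ by choice, as required.

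For the $(\Leftarrow)$ direction I would take the given family $(\pi_\beta)_\beta$ and build $\sys'$ as a ``bouquet of lines'', one branch per valuation of $\var(\phigua)$: let $\sys' = (S',s_0',E',\ell')$ with $S' = \{s_0'\}\cup\{(\beta,n)\mid \beta\in\nats^{\var(\phigua)},\ n\ge 1\}$, edges $(s_0',(\beta,1))$ and $((\beta,n),(\beta,n+1))$, and labels $\ell'(s_0')=\ell(s_0)$ and $\ell'((\beta,n))=w^\beta_n$, where $w^\beta = \trace(\pi_\beta)$. Because $\var(\phigua)$ is finite, $\nats^{\var(\phigua)}$ is countable, so $\sys'$ is countably infinite and left-total by construction. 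The key step is then to show that the set of traces of paths through $\sys\parcomp\sys'$ is exactly $\{\trace(\pi_\beta)\mid \beta\in\nats^{\var(\phigua)}\}$: once the second component enters the $\beta$-branch the labels of all subsequent positions are forced to be $w^\beta_n$, and for the ``$\supseteq$'' inclusion the pair consisting of $\pi_\beta$ and the corresponding line through $\sys'$ is a well-defined path since the labels agree by construction. Given this, $\sys\parcomp\sys'\models\phiass$ w.r.t.\ $\alpha$ follows immediately from $(\trace(\pi_\beta),\alpha)\models\phiass$ for every $\beta$, while for any $\beta$ the path corresponding to $\pi_\beta$ witnesses $\sys\parcomp\sys' \not\models\phigua$ w.r.t.\ $\beta$.

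The main obstacle is the $(\Leftarrow)$ direction: ensuring that $\sys'$ is both countably infinite and realizes precisely the prescribed family of traces without introducing spurious ones. The finiteness of $\var(\phigua)$ is essential for countability (since $\beta$ only matters modulo its values on $\var(\phigua)$), and the tree-like structure of $\sys'$ — with a unique branch per $\beta$ — is what forbids mixing traces from different $\pi_\beta$'s inside the product.
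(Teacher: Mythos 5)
Your proposal is correct and follows essentially the same route as the paper: the forward direction is the identical projection argument via Remark~\ref{rem_parcomp}, and the backward direction builds a (countably infinite) $\sys'$ whose traces are exactly the $\trace(\pi_\beta)$. The only difference is that the paper simply asserts the existence of such a $\sys'$, whereas you spell out an explicit bouquet-of-lines construction and justify its countability via the finiteness of $\var(\phigua)$ --- a detail the paper leaves implicit but which your version handles correctly.
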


\begin{proof}
Let $\agmcdflt$ not hold, i.e., there is a transition system~$\sys'$ such that $\sys \parcomp \sys'$ is a model of $\phiass$ with respect to some fixed $\alpha$, but $\sys \parcomp \sys'$ is not a  model of $\phigua$ with respect to any $\beta$. Thus, for every $\beta$, we find an initial path~$\pi'_\beta$ through $\sys \parcomp \sys'$ with $(\trace(\pi'_\beta), \beta) \not \models \phigua$. Furthermore, $\trace(\pi'_\beta)$ satisfies~$\phiass$ with respect to $\alpha$, as does every trace of $\sys \parcomp \sys'$. To conclude this direction, we apply Remark~\ref{rem_parcomp} to show that there exists a path $\pi_\beta$ over $\sys$ such that  $\trace(\pi_\beta) = \trace(\pi'_\beta)$ is also a trace of $\sys$. 

Now, assume there is a variable valuation~$\alpha$ such that for every variable valuation~$\beta$ there is a path~$\pi_{\beta}$ through~$\sys$ with $(\trace(\pi_\beta), \alpha) \models \phiass$, but $(\trace(\pi_\beta), \beta) \not\models \phigua$. Let $\sys'$ be a possibly infinite transition system whose traces are exactly the traces of the paths~$\pi_\beta$. By construction, the set of traces of $\sys \parcomp \sys'$ is equal to the set of traces of $\sys'$. Furthermore, every trace of $\sys \parcomp \sys'$ satisfies $\phiass$ with respect to $\alpha$. However, for every $\beta$ the trace~$\trace(\pi_\beta)$ of $\sys \parcomp \sys'$ does not satisfy $\phigua$ with respect to $\beta$. Hence, there is no $\beta$ such that $\sys \parcomp \sys'$ satisfies $\phigua$ with respect to $\beta$, i.e., $\sys'$ witnesses that $\agmcdflt$ does not hold.
\end{proof}

The $\prompt$ assume-guarantee model-checking problem as introduced in~\cite{KupfermanPitermanVardi09} only considers the product of the given transition system~$\sys$ with finite transition systems~$\sys'$. However, in this setting, one can easily construct counterexamples to the analogue of Lemma~\ref{lemma_agmc_charac}. Indeed the transition system~$\sys'$ we construct while proving the second implication above is necessarily infinite for the counterexamples. If one allows infinite systems~$\sys'$, then the analogue is still correct, using the same proof as above. The decidability of assume-guarantee model checking restricted to finite systems~$\sys'$ is an open problem.

Next, we observe that we again can restrict ourselves to considering  $\pldldiamond$ formulas, both as the assumption and as the guarantee. This follows from Lemma~\ref{lemma_removeboxes} and the fact that the variable valuations are quantified existentially in the problem statement. 

As we have to deal with two variable valuations, we have to extend the alternating-color technique to two colors, one color~$p$ for $\alpha$ and one color~$q$ for $\beta$. We say that $w' \in (\pow{P \cup \set{p,q}})^\omega$ is a coloring of $w \in (\pow{P})^\omega$, if $w_n' \cap P = w_n$ for every $n$. Furthermore, the notions of $p$-changepoints, $p$-blocks, and the analogues for $q$ are defined as expected (cf.\ Subsection~\ref{subsec_altcolor}). Consequently, the notions of $k$-boundedness and $k$-spacedness have to explicitly refer to the color under consideration. Lemma~\ref{lemma_altcolor} still holds for each color separately.

The following proof extends the one for the model checking problem using colored Büchi graphs. To this end, we have to adapt the definition of such a graph to two colors.  Formally, a colored Büchi graph of degree two is a tuple~$(V, E, v_0, \ell, F_0, F_1)$ where $(V, E)$ is a finite directed graph, $v_0 \in V$ is the initial vertex, $\ell \colon V \rightarrow \pow{\set{p,q}}$ is a vertex labeling by $p$ and $q$, and $F_0, F_1 \subseteq V$ are two sets of accepting vertices.

 A path $v_0 v_1 v_2 \cdots$ through $G$ is pumpable, if every $q$-block contains a vertex repetition such that there is a $p$-changepoint in between these vertices. More formally, we require the following condition to be satisfied: if $i$ and $i'$ are two adjacent $q$-changepoints, then there exist $j,j',j''$ with $i  \le j < j' < j'' < i'$ such that $v_j = v_{j''}$ and $\ell(v_j)$ and $\ell(v_{j'})$ differ in their $p$-label. Furthermore, the path is fair, if both $F_0$ and $F_1$ are visited infinitely often. 
 
The pumpable non-emptiness problem for $G$ asks whether there exists a pumpable fair path that starts in the initial vertex. 

\begin{theorem}[\cite{KupfermanPitermanVardi09}]
The pumpable non-emptiness problem for colored Büchi graphs of degree two is $\nlogspace$-complete.
\end{theorem}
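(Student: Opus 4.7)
My plan is to prove both $\nlogspace$-membership and $\nlogspace$-hardness.

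For the upper bound I would adapt the standard nondeterministic lasso search that establishes $\nlogspace$-membership for Büchi non-emptiness. Since the graph is finite and both fairness and pumpability are $\omega$-regular, I would first argue that, whenever a pumpable fair path exists from $v_0$, one of lasso shape does as well: a finite prefix ending in some vertex $s$, followed by iterating a cycle $\sigma$ from $s$ back to $s$. The algorithm then guesses such a lasso step by step, storing only: the current vertex, a guessed loop-start $s$, two flags tracking whether $F_0$ and $F_1$ have been seen since $s$ was first reached, and data about the currently open $q$-block (the vertex at its first position, the $p$-colour at that position, and a flag recording whether, within the current $q$-block, some vertex has already been seen twice with a $p$-changepoint occurring between the two occurrences). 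Each piece of data fits in $O(\log \size{V})$ bits, so the search runs in $\nlogspace$. When the loop closes, the algorithm verifies that both fairness flags are set and that every $q$-block opened during the cycle (including the wrap-around block that is closed only on the second traversal) has fired its pumpability flag.

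For the lower bound I would reduce from $s$-$t$-reachability in directed graphs. Given an instance $(G,s,t)$, build a colored Büchi graph of degree two by adding a self-loop at $t$, taking $F_0 = F_1 = \{t\}$ and choosing a trivial colouring (e.g.\ assigning the same label to all vertices) so that there are no $q$-changepoints at all and the pumpability requirement is vacuous on the single infinite $q$-block once $t$ is reached and looped — or, alternatively, alternating both colours at every step so that every $q$-block has length one. Either way, pumpable fairness reduces to reachability of $t$, and the reduction is clearly logspace.

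The main obstacle is the soundness of the lasso-reduction step. A priori the pumpability requirement — every $q$-block must contain a vertex repetition spanning a $p$-changepoint — appears to force structure that does not obviously fold into a cycle of the finite graph. The point I would need to nail down is: given a pumpable fair path $\pi$, extract a lasso whose infinite unrolling inherits the pumpability witnesses block by block. The argument is a pigeonhole on triples (vertex, $p$-colour, $q$-colour) together with the fact that each $q$-block of $\pi$ already contains a repetition with an intervening $p$-changepoint; by choosing the loop-start inside a suitable block and aligning both changepoint parities when the cycle closes, every $q$-block of the lasso's unrolling can be matched with such a witness. Once this structural lemma is in place the $\nlogspace$ verification described above applies verbatim.
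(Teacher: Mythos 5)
The paper does not actually prove this statement: it is imported as a black box from the cited work of Kupferman, Piterman, and Vardi, so there is no in-paper proof to compare against. Your reconstruction follows what is indeed the standard route for such results (reduce to the existence of a pumpable fair lasso, guess it on the fly in logarithmic space; hardness via graph reachability), so the overall strategy is sound. However, two concrete points need repair.

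First, your fallback hardness construction --- alternating both colours at every step so that every $q$-block has length one --- is broken: the pumpability condition requires three positions $i \le j < j' < j'' < i'$ inside a block, so any block of length at most two can never satisfy it, and every instance produced this way is a no-instance. Your primary construction (a single trivial colour everywhere) does work, but only because the formal definition constrains pairs of \emph{adjacent} $q$-changepoints, so a path whose only $q$-changepoint is at position $0$ is vacuously pumpable; if one requires infinitely many $q$-changepoints (as the intended application effectively does via $\chi_{\infty q} \wedge \chi_{\infty \neg q}$), you would need a small terminal gadget at $t$ --- a cycle alternating the $q$-colour blockwise in which each block revisits a vertex with an intervening $p$-change --- to keep the reduction sound. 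Second, in the upper bound the per-block check cannot be keyed to \myquot{the vertex at the first position of the block}: the witness repetition may begin anywhere in the block, so the verifier must nondeterministically guess the position $j$, store $v_j$ and its $p$-label, flag a later position with a differing $p$-label, and then wait for a recurrence of $v_j$; this still fits in $O(\log \size{V})$ bits, but it is the guessed witness vertex that must be stored. Relatedly, the wrap-around block cannot be \myquot{closed on the second traversal} of a cycle that was guessed but never stored; the clean fix is to argue, by pigeonhole over the pairs of vertex and colour occurring at $q$-changepoint positions of an ultimately periodic witness, that the loop may be chosen to start and end at a $q$-changepoint, so that no block straddles the seam (the case of finitely many $q$-changepoints is handled separately, since the final infinite block carries no obligation). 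With these repairs your argument goes through.
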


Next, we show how to reduce the $\pldl$ assume-guarantee model checking problem to the pumpable non-emptiness problem for colored Büchi graphs of degree two. Fix an instance~$\agmcdflt$ of the problem with $\sys = (S, s_0, E, \ell)$ and two $\pldldiamond$ formulas $\phiass$ and $\phigua$. 

Now, let $\aut_A = (Q, \pow{P \cup \set{p,q}}, q_0, \delta, F)$ be a Büchi automaton recognizing the models of $\chi_{\infty p} \wedge\chi_{\infty \neg p} \wedge \rel{\phiass}$, and let $\aut_G = (Q', \pow{P \cup \set{p,q}}, q_0', \delta', F')$ be a Büchi automaton recognizing the models of $\chi_{\infty q} \wedge \chi_{\infty \neg q} \wedge \neg\rel{\phigua}$. Note that we need to slightly adapt the construction of $\aut_G$, as we interpret the changepoint-bounded operators in $\phigua$ w.r.t.\ color changes of $q$, not $p$. Hence, instead of using the automaton~$\aut_{cp}$ as depicted in Figure~\ref{fig_atomicaut}(c) with transition labels~$p$ and $\neg p$, we use the one with labels~$q$ and $\neg q$ to construct $\aut_G$.  

Next, we define the colored Büchi graph of degree two 
\[\aut_A \times \aut_G \times \sys = (Q \times Q' \times S \times \pow{\set{p,q}}, E', (q_0, q_0', s_0, \emptyset), \ell', F_0, F_1)\] where
\begin{itemize}
	\item $((q_1,q_2, s, C),(q_1', q_2', s', C')) \in E'$ if and only if $(s, s') \in E$, $q_1' \in \delta(q_1, \ell(s) \cup C)$, and $q_2' \in \delta'(q_2, \ell(s) \cup C )$,
	\item $\ell' (q_1,q_2, s, C) = C$,
	\item $F_0 = F \times Q' \times S \times \pow{\set{p,q}}$, and
	\item $F_1 = Q \times F' \times S \times \pow{\set{p,q}}$.
\end{itemize}

\begin{lemma}[cf.\ Lemma~6.2 of \cite{KupfermanPitermanVardi09}]
\label{lemma_pumpcharac}
Let $\agmcdflt$ and $\aut_A \times \aut_G \times \sys$ be defined as above. Then, $\agmcdflt$ does not hold if and only if $\aut_A \times \aut_G \times \sys$ has a pumpable fair path.
\end{lemma}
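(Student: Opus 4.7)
The plan is to extend the proof of Lemma~\ref{lemma_pumppath} to the two-color setting, following Kupferman et al.'s approach for the analogous $\prompt$ problem.

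For the forward direction, assume $\agmcdflt$ does not hold. I would first apply Lemma~\ref{lemma_agmc_charac} to fix a valuation~$\alpha$ such that for every $\beta$ there is a path~$\pi_\beta$ of $\sys$ with $(\trace(\pi_\beta), \alpha) \models \phiass$ and $(\trace(\pi_\beta), \beta) \not\models \phigua$. By Lemma~\ref{lemma_monotonicity} I may assume $\alpha$ is the constant function with value~$K$ for some $K$ chosen large enough. I would then pick $N$ sufficiently large relative to $K$ and $|Q|\cdot|Q'|\cdot|S|$, set $\beta$ to be the constant function with value $2N$, and let $\pi = \pi_\beta$ be the corresponding path. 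Next, I would construct a coloring~$w'$ of $\trace(\pi)$ whose $q$-component alternates in blocks of length exactly~$N$ (making $w'$ $N$-bounded and $N$-spaced in $q$) and whose $p$-component alternates in blocks of length exactly~$K$ aligned so that each $q$-block contains an integer number of $p$-blocks (making $w'$ $K$-spaced in $p$).

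Applying Lemma~\ref{lemma_altcolor}(\ref{lemma_alternatingcolor_pldltoldl}) to $\phiass$ and $\alpha$ gives $w' \models c(\phiass) = \rel{\phiass}\wedge \chi_{\infty p}\wedge \chi_{\infty \neg p}$, providing an accepting run of $\aut_A$ on $w'$. Applying the $q$-analogue of Lemma~\ref{lemma_altcolor}(\ref{lemma_alternatingcolor_ldltopldl}) contrapositively to $\phigua$ and $\beta$, combined with $N$-spacedness in $q$, yields $w' \models \neg \rel{\phigua}\wedge \chi_{\infty q}\wedge \chi_{\infty \neg q}$, providing an accepting run of $\aut_G$ on $w'$. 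Combined with the state sequence of $\pi$, this produces a fair path of $\aut_A \times \aut_G \times \sys$. For pumpability, within each $q$-block I would restrict attention to the starting positions of internal $p$-blocks of a fixed $p$-value; for $N$ large enough ($N > 2K\cdot |Q|\cdot|Q'|\cdot|S|$ suffices) there are more than $|Q|\cdot|Q'|\cdot|S|$ such positions, so pigeonhole forces two of them, $j < j''$, to carry the same product vertex. They lie at distance at least~$2K$, so any position $j'$ in an intervening $p$-block of opposite $p$-label strictly between them witnesses the pumpability condition.

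For the backward direction, suppose $\aut_A \times \aut_G \times \sys$ admits a pumpable fair path. I may assume the path is lasso-shaped, so the induced coloring~$w'$ of its $\sys$-projection~$\pi$ has $p$-blocks bounded by some~$k$ polynomial in $|Q|, |Q'|, |S|$. Acceptance of $\aut_A$ and $\aut_G$ yields $w' \models c(\phiass) \wedge \neg \rel{\phigua} \wedge \chi_{\infty q}\wedge \chi_{\infty \neg q}$, so Lemma~\ref{lemma_altcolor}(\ref{lemma_alternatingcolor_ldltopldl}) gives $(\trace(\pi), \alpha^*) \models \phiass$ with $\alpha^*$ the constant~$2k$. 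For each $\beta$, I would use the pumpability witnesses to pump every $q$-block's loop enough times to extend its length beyond~$\max\beta$, rendering the resulting coloring $\max\beta$-spaced in $q$; the pumped path is still fair (pumping a product loop preserves both accepting runs) and its $p$-coloring stays $k$-bounded since each inserted copy of the loop has a periodic $p$-structure of bounded block length. Applying the $q$-analogue of Lemma~\ref{lemma_altcolor}(\ref{lemma_alternatingcolor_pldltoldl}) contrapositively then gives $(\trace(\pi_\beta), \beta) \not\models \phigua$, while $(\trace(\pi_\beta), \alpha^*) \models \phiass$ holds as before. Lemma~\ref{lemma_agmc_charac} then yields that $\agmcdflt$ does not hold.

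The main obstacle I expect is the pumpability verification in the forward direction: the coloring $w'$ must simultaneously be $K$-spaced in $p$, $N$-bounded and $N$-spaced in $q$, and structured so that pigeonhole inside every $q$-block produces a vertex repetition straddling a $p$-changepoint rather than a within-$p$-block repetition, the latter failing the stronger degree-two pumpability condition. Restricting the pigeonhole to positions of like $p$-phase handles this, but the quantitative accounting requires care.
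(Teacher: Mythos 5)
Your proposal is correct and follows essentially the same route as the paper's proof: both directions use Lemma~\ref{lemma_agmc_charac}, the forward direction builds the doubly-periodic coloring with $p$-blocks of length $\max\alpha$ and much longer $q$-blocks and applies the two items of Lemma~\ref{lemma_altcolor} (one per color) plus a pigeonhole over $p$-changepoints within each $q$-block to get pumpability, and the backward direction pumps the per-$q$-block repetitions $\max\beta$ times and applies both items of Lemma~\ref{lemma_altcolor} in reverse. Your extra care about restricting the pigeonhole to $p$-changepoints of a fixed phase is a slightly more explicit version of the paper's counting argument, but not a different method.
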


\begin{proof}
Recall that changepoint-bounded operators in $\phiass$ are evaluated with respect to the color~$p$ while the ones in $\phigua$ are evaluated with respect to $q$.

Let $\agmcdflt$ not hold. Then, due to Lemma~\ref{lemma_agmc_charac}, there is a variable valuation~$\alpha$ such that for every valuation~$\beta$ there is an initial path~$\pi_\beta$ of $\sys$ such that $(\trace(\pi_\beta), \alpha) \models \phiass$, but $(\trace(\pi_\beta), \beta) \not\models \phigua$. 	

Define $k_\alpha = \max_{x\in \var(\phiass)}\alpha(x)$, $k_{\beta^*} = 2\cdot \size{Q} \cdot \size{Q'} \cdot \size{S} \cdot  k_\alpha+1$, and let $\beta^*$ be such that $\beta^*(x) = 2k_{\beta^*}$ for every $x$. Finally,  let $w^* = \trace(\pi_{\beta^*})$ be the corresponding trace as above.

 Then, $(w^*, \alpha) \models \phiass$ and Item \ref{lemma_alternatingcolor_pldltoldl} of Lemma~\ref{lemma_altcolor} imply that every $k_{\alpha}$-bounded (with respect to $p$) coloring~${w^*}'$ of $w^*$ satisfies $  \chi_{\infty p} \wedge \chi_{\infty \neg p} \wedge \rel{\phiass}$.
Similarly, $(w^*, \beta^*) \not\models \phigua$ and Item~\ref{lemma_alternatingcolor_ldltopldl} of Lemma~\ref{lemma_altcolor} imply that every $k_{\beta^*}$-spaced (with respect to $q$) coloring~${w^*}'$ of $w^*$ does not satisfy $ \rel{\phigua}$. Hence, every such ${w^*}'$ satisfies $\chi_{\infty q} \wedge \chi_{\infty \neg q}  \wedge \neg \rel{\phigua}$.

Now, consider the unique coloring~${w^*}'$ of $w^*$ that is $k_\alpha$-bounded and $k_\alpha$-spaced with respect to $p$, $k_{\beta^*}$-bounded and $k_{\beta^*}$-spaced with respect to $q$, and begins with $p$ and $q$ not holding true. We have ${w^*}' \models \chi_{\infty p} \wedge \chi_{\infty \neg p} \wedge \rel{\phiass}$ and ${w^*}' \models \chi_{\infty q} \wedge \chi_{\infty \neg q} \wedge \neg \rel{\phigua}$. Hence, there are accepting runs~$q_0q_1q_2\cdots$ of $\aut_A$ and $q_0'q_1'q_2'\cdots$ of $\aut_G$ on ${w^*}'$. 

Consider the path
\[
(q_0, q_0', v_0, {w^*_0}'\cap \set{p,q})\,
(q_1, q_1', v_1, {w^*_1}'\cap \set{p,q})\,
(q_2, q_2', v_2, {w^*_2}'\cap \set{p,q})
\cdots
\]
through $\aut_A \times \aut_G \times \sys$. Here, ${w^*_n}'$ is the $n$-th letter of ${w^*}'$ and $v_0 v_1 v_2 \cdots$ is the path through $\sys$ inducing the trace~${w^*}'$.

The path is fair, as the runs are both accepting. Furthermore, it is pumpable, as the $p$-blocks are of size $k_\alpha$, but the $q$-blocks are of length~$k_{\beta^*} = 2\cdot \size{Q} \cdot \size{Q'} \cdot \size{S} \cdot  k_\alpha+1$ and there are only $2\cdot \size{Q} \cdot \size{Q'} \cdot \size{S}$ many vertices with (and without) color~$q$.

Now, we consider the converse: assume there is a pumpable fair path
\[
(q_0, q_0', v_0, C_0)\,
(q_1, q_1', v_1, C_1)\,
(q_2, q_2', v_2, C_2)
\cdots
\]
in $\aut_A \times \aut_G \times \sys$. W.l.o.g., we can assume the path to be ultimately periodic~\cite{KupfermanPitermanVardi09}. Hence, the maximal length of a $p$-block in this path, call it $k_\alpha$, is well-defined. Define $\alpha$ via $\alpha(x) = 2k_\alpha$ for every $x$, fix some arbitrary $\beta$, and let $k_\beta = \max_{x \in \var(\phigua)}\beta(x)$. 

Every $q$-block of the pumpable path contains a vertex repetition with a $p$-changepoint in between. Pumping each of these repetitions $k_\beta$ times yields a new path through $\aut_A \times \aut_G \times \sys$ and thereby also a path~$\pi_\beta$ through $\sys$ as well as accepting runs of $\aut_A$ and $\aut_G$ on a coloring~$w'$ of $\trace(\pi_\beta)$. Hence, $w' \models \chi_{\infty p} \wedge \chi_{\infty \neg p} \wedge \rel{\phiass}$ and $w' \models \chi_{\infty q} \wedge \chi_{\infty \neg q} \wedge \neg \rel{\phigua}$. 

By construction, $w'$ is $k_\alpha$-bounded and $k_\beta$-spaced. Thus, applying both directions of Lemma~\ref{lemma_altcolor} yields $(\trace(\pi_\beta), \alpha) \models \phiass$ and $(\trace(\pi_\beta), \beta) \not \models \phigua$. Hence, for every $\beta$ we have constructed a path with the desired properties. Thus, due to Lemma~\ref{lemma_agmc_charac}, $\agmcdflt$ does not hold.
\end{proof}

Now, we are able prove the main result of this section: $\pldl$ assume-guar\-antee model checking is as hard as $\ltl$ assume-guarantee model checking, i.e., $\pspace$-complete. 

\begin{proof}[Proof of Theorem~\ref{thm_agmc}]
Membership is obtained by solving the pumpable non-emptiness problem for the product~$\aut_A \times \aut_G \times \sys $, which can be done in polynomial space on-the-fly, as the product is of exponential size and the algorithm checking for pumpable non-emptiness runs in logarithmic space. 

For the lower bound we use a reduction from the $\ltl$ model checking problem, which is $\pspace$-complete: given a transition system~$\sys$ and an $\ltl$ formula~$\varphi$, we have $\sys \models \varphi$ if and only if $\agmc{\ttrue}{\sys}{\varphi}$.
\end{proof}

The solution to the assume-guarantee model checking problem also solves the implication problem for $\pldl$: given two $\pldl$ formulas $\varphi$ and $\psi$, decide whether for every, possibly countably infinite, transition system~$\sys$ the following holds: if $\sys$ satisfies $\varphi$ with respect to some $\alpha$, then $\sys$ satisfies $\psi$ with respect to some $\beta$. 

\begin{theorem}
The $\pldl$ implication problem is $\pspace$-complete.	
\end{theorem}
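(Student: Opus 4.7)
The strategy is to reduce the implication problem to the assume-guarantee model checking problem of Theorem~\ref{thm_agmc} by using a universal transition system. For each $A_0 \in \pow{P}$, define $\mathcal{U}_{A_0} = (\pow{P}, A_0, \pow{P}\times\pow{P}, \ell)$ with $\ell(A) = A$ for every state $A \in \pow{P}$. Then for any countably infinite transition system $\sys'$ whose initial state carries label~$A_0$, the parallel composition $\mathcal{U}_{A_0}\parcomp\sys'$ is isomorphic (via the projection onto the second component) to $\sys'$; conversely, every countably infinite transition system $\sys$ is of this form, for $A_0$ being its initial label. Consequently, the implication $\varphi \Rightarrow \psi$ holds if and only if $\agmc{\varphi}{\mathcal{U}_{A_0}}{\psi}$ holds for every $A_0 \in \pow{P}$.

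For $\pspace$ membership, I would iterate over the exponentially many labels $A_0 \in \pow{P}$ using a counter of polynomial size and, for each, invoke the $\pspace$ algorithm of Theorem~\ref{thm_agmc}. The exponentially large system $\mathcal{U}_{A_0}$ is never stored explicitly: in the on-the-fly construction of the product with the two alternating-color automata, the successors of the current $\mathcal{U}_{A_0}$-state are simply all subsets of $P$ and its labeling is the identity, so $\mathcal{U}_{A_0}$ is fully specified by $A_0$. Thus the overall algorithm runs in polynomial space.

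For $\pspace$-hardness, I would reduce from LTL validity, which is $\pspace$-complete. Given an LTL formula $\psi$, its validity is equivalent to the PLDL implication $\ttrue \Rightarrow \psi$: since LTL has no parameters, the existential quantifiers over $\alpha$ and $\beta$ are vacuous, and the implication reduces to ``every countably infinite transition system~$\sys$ satisfies $\psi$'', which is precisely validity of $\psi$. As LTL is a fragment of PLDL (Example~\ref{ex_pldl}), the reduction is efficient.

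The only subtlety I anticipate is verifying that the two quantification patterns coincide: quantification over arbitrary transition systems in the implication problem versus the nested quantification over $A_0 \in \pow{P}$ and over $\sys'$ with initial label $A_0$ in the iterated assume-guarantee checks. Both directions, however, follow directly from the trace-equivalence $\mathcal{U}_{A_0}\parcomp\sys' \cong \sys'$ established via Remark~\ref{rem_parcomp}, so no genuinely new technical obstacle arises beyond what Theorem~\ref{thm_agmc} already provides.
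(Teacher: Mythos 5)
Your proposal is correct and follows essentially the same route as the paper: reduce membership to assume-guarantee model checking against a universal transition system that is handled on-the-fly, and obtain hardness from a \pspace-complete \ltl{} problem (you use validity, the paper uses satisfiability). Your indexing of the universal system by the initial label~$A_0$ is a small refinement that the paper glosses over (its single~$\mathcal{U}$ cannot literally contain all traces, since a transition system has one initial state with a fixed label), but it does not change the argument in substance.
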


\begin{proof}
Hardness follows from hardness of the $\ltl$ satisfiability problem~\cite{SistlaClarke85}. 

To prove membership, we reduce the problem to the assume-guarantee model checking problem: let $\mathcal{U}$ be a universal transition system in the sense that it contains every trace over the propositions that appear in $\varphi$ and $\psi$. It is straightforward to show that the implication between $\varphi$ and $\psi$ is satisfied, if and only if $\agmc{\varphi}{\mathcal{U}}{\psi}$ is satisfied, as $\mathcal{U} \parcomp \sys$ has exactly the traces of $\sys$. The latter problem can be solved in $\pspace$, although  $\mathcal{U}$ is of exponential size, since it can be constructed on-the-fly. 
\end{proof}

\section{Realizability}
\label{sec_real}
In this section, we consider the realizability problem for $\pldl$. Throughout the section, we fix a partition~$(I, O)$ of the set of atomic propositions~$P$. An instance of the $\pldl$ realizability problem is given by a $\pldl$ formula~$\varphi$ (over $P$) and the problem is to decide whether Player~$O$ has a winning strategy in the following game, played in rounds~$n \in \nats$: in each round~$n$, Player~$I$ picks a subset $i_n \subseteq I$ and then Player~$O$ picks a subset~$o_n \subseteq O$. Player~$O$ wins the play with respect to a variable valuation~$\alpha$, if 
$((i_0 \cup o_0)(i_1 \cup o_1)(i_2 \cup o_2) \cdots, \alpha) \models \varphi$.

Formally, a strategy for Player~$O$ is a mapping~$\sigma\colon (\pow{I})^+ \rightarrow \pow{O}$ and a play~$\rho = i_0 o_0 i_1 o_1 i_2 o_2 \cdots $ is consistent with $\sigma$, if we have $o_n = \sigma(i_0 \cdots i_n)$ for every $n$. We call $(i_0 \cup o_0)(i_1 \cup o_1)(i_2 \cup o_2) \cdots$ the outcome of $\rho$, denoted by $\outcome(\rho)$. We say that a strategy~$\sigma$ for Player~$O$ is winning with respect to a variable valuation~$\alpha$, if we have $(\outcome(\rho), \alpha) \models \varphi$ for every play~$\rho$ that is consistent with $\sigma$. The $\pldl$ realizability problem asks for a given $\pldl$ formula~$\varphi$, whether Player~$O$ has a winning strategy with respect to some variable valuation, i.e., whether there is a single $\alpha$ such that every outcome satisfies $\varphi$ with respect to $\alpha$. If this is the case, then we say that $\sigma$ realizes $\varphi$ and thus that $\varphi$ is realizable (over $(I,O)$).

It is well-known that $\omega$-regular specifications, and thus all $\ldlt$ specifications, are realizable by finite-state transducers (if they are realizable at all)~\cite{BuechiLandweber69}. A transducer~$\trans = (Q, \Sigma, \Gamma, q_0, \delta, \tau )$ consists of a finite set~$Q$ of states, an input alphabet~$\Sigma$, an output alphabet~$\Gamma$, an initial state~$q_0$, a transition function~$\delta \colon Q \times \Sigma \rightarrow Q$, and an output function~$\tau \colon Q \rightarrow \Gamma$. The function~$f_{\trans} \colon \Sigma^* \rightarrow \Gamma$ implemented by $\trans$ is defined as $f_{\trans}(w) = \tau(\delta^*(w))$, where $\delta^*$ is defined as usual: $\delta^*(\epsilon) = q_0$ and $\delta^*(wv) = \delta(\delta^*(w),v)$. To implement a strategy by a transducer, we use $\Sigma = \pow{I}$ and $\Gamma = \pow{O}$. Then, we say that the strategy $\sigma = f_\trans$ is finite-state. The size of $\sigma$ is the number of states of $\trans$. The following proof is analogous to the one for $\prompt$~\cite{KupfermanPitermanVardi09}.

\begin{theorem}
\label{thm_real}
The $\pldl$ realizability problem is $\twoexp$-complete.
\end{theorem}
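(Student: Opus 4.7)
The plan is to follow the automata-theoretic blueprint used for $\prompt$ realizability in~\cite{KupfermanPitermanVardi09}, replacing the $\ltl$-to-Büchi translation by our construction from Theorem~\ref{theorem_autconstruction}. Hardness is immediate, since $\ltl$ realizability is $\twoexp$-hard and $\ltl$ embeds into $\pldl$ while preserving the input/output partition~$(I,O)$.

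For the upper bound, first apply Lemma~\ref{lemma_removeboxes} to reduce to the realizability of a $\pldldiamond$ formula, which I continue to call $\varphi$. Form the $\ldlt$ formula~$c(\varphi)$, whose size is linear in $\size{\varphi}$. Theorem~\ref{theorem_autconstruction} yields an alternating Büchi automaton for $c(\varphi)$ with linearly many states; the Miyano--Hayashi breakpoint construction turns it into a non-deterministic Büchi automaton of exponential size, and Safra/Piterman determinization then produces a deterministic parity automaton~$\mathcal{D}$ of doubly-exponential size with exponentially many colors. Next, I would construct a two-player parity game~$\game$ on the state space of $\mathcal{D}$ in which in each round Player~$I$ picks an input in~$\pow{I}$ and Player~$O$ picks an output in~$\pow{O\cup\set{p}}$, i.e., Player~$O$ additionally chooses the changepoint bit~$p$ at every position; transitions follow~$\mathcal{D}$ and the winning condition for Player~$O$ is the parity condition of~$\mathcal{D}$.

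The correctness equivalence is then obtained via Lemma~\ref{lemma_altcolor}. If $\varphi$ is realized by some $\sigma$ with valuation~$\alpha$, then extending $\sigma$ so as to colour each outcome in a $(\max_{x}\alpha(x))$-spaced manner gives a Player~$O$ strategy in~$\game$ whose outcomes all satisfy $c(\varphi)$ by Item~\ref{lemma_alternatingcolor_pldltoldl} of Lemma~\ref{lemma_altcolor}. Conversely, finite-memory determinacy of parity games yields a winning Player~$O$ strategy implementable by a transducer~$\trans$ whose size is polynomial in~$\size{\game}$; the conjunct~$\chi_{\infty p}\wedge\chi_{\infty \neg p}$ in~$c(\varphi)$ forces infinitely many changepoints on every outcome, and a reachability argument in~$\trans$ bounds block lengths uniformly by~$\size{\trans}$ (from any transducer state, Player~$I$ cannot postpone a changepoint for more than $\size{\trans}$ rounds, as otherwise some outcome would be changepoint-free). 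Applying Item~\ref{lemma_alternatingcolor_ldltopldl} of Lemma~\ref{lemma_altcolor} then turns this strategy into a realizing strategy for $\varphi$ with valuation $\alpha(z) = 2\size{\trans}$ for every $z$, which is doubly-exponential in $\size{\varphi}$ and yields the tight upper bound on optimal parameter values announced in the introduction. Complexity-wise, $\game$ has doubly-exponentially many states and exponentially many colors and is thus solvable within $\twoexp$.

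The main obstacle is the bounded-block step in the converse direction: rather than attempting to synthesize a Player~$O$ strategy with short blocks by construction, one must show that \emph{any} finite-memory winning strategy in $\game$ automatically yields uniformly bounded blocks. This is precisely what the $\chi_{\infty p}\wedge \chi_{\infty\neg p}$ conjunct accomplishes, exactly as in the $\prompt$ setting, and it is the reason the alternating-color technique integrates cleanly with parity-game determinacy.
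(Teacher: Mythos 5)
Your proposal is correct and follows essentially the same route as the paper: the paper factors the reduction to $\ldlt$ realizability into Lemma~\ref{lemma_realred} (same $k$-spaced coloring argument forward, same pigeonhole-on-transducer-states argument backward using the $\chi_{\infty p}\wedge\chi_{\infty\neg p}$ conjunct) and then builds and solves the same parity game on the deterministic parity automaton for $c(\varphi)$. The only slight imprecision is attributing the postponement of changepoints to Player~$I$ --- the color~$p$ is chosen by Player~$O$, and the pumping argument produces an outcome with \emph{finitely many} (not zero) changepoints --- but this does not affect the argument.
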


When proving membership in $\twoexp$, we restrict ourselves w.l.o.g.\ to $\pldldiamond$ formulas, as this special case is sufficient as shown in Lemma~\ref{lemma_removeboxes}. First, we use the alternating color technique to show that the $\pldldiamond$ realizability problem is reducible to the realizability problem for specifications in $\ldlt$. When considering the $\ldlt$ realizability problem, we add the fresh proposition~$p$ used to specify the coloring to $O$, i.e., Player~$O$ is in charge of determining the color of each position.

\begin{lemma}[cf.\ Lemma 3.1 of \cite{KupfermanPitermanVardi09}]
	\label{lemma_realred}
A $\pldldiamond$ formula~$\varphi$ is realizable over $(I,O)$ if and only if the $\ldlt$ formula~$c(\varphi)$ is realizable over $(I,O \cup \set{p})$.
\end{lemma}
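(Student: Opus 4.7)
The approach is to use the alternating color technique of Lemma~\ref{lemma_altcolor} in both directions, translating between realizing strategies for $\varphi$ and for $c(\varphi)$ by adding or removing the auxiliary coloring proposition $p$. In the forward direction, Player~$O$ augments her original winning strategy with a deterministic, periodic coloring of $p$; in the backward direction, she projects away the $p$-component of a finite-state winning strategy for $c(\varphi)$.

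For the ``only if'' direction, suppose $\sigma$ realizes $\varphi$ with respect to some valuation $\alpha$, and set $k = \max_{x \in \var(\varphi)} \alpha(x)$. Define a strategy $\sigma'$ over $(I, O \cup \{p\})$ whose $O$-part agrees with $\sigma$ and whose $p$-part is governed by a simple counter that flips the truth value of $p$ every $k+1$ steps, so that in every play every $p$-block has length $k+1$, which is at least $k$, and there are infinitely many changepoints. Any outcome $w'$ of a play consistent with $\sigma'$ is thus a $k$-spaced coloring of the outcome $w$ of the corresponding $\sigma$-play, and by assumption $(w, \alpha) \models \varphi$. Item~\ref{lemma_alternatingcolor_pldltoldl} of Lemma~\ref{lemma_altcolor} then gives $w' \models c(\varphi)$, so $\sigma'$ realizes $c(\varphi)$ over $(I, O \cup \{p\})$.

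For the ``if'' direction, fix a winning strategy $\sigma'$ for $c(\varphi)$ over $(I, O \cup \{p\})$. Since $c(\varphi)$ defines an $\omega$-regular language, the Büchi--Landweber theorem yields a finite-state implementation $\trans$ of $\sigma'$ with, say, $m$ states. The main obstacle is to argue a uniform upper bound on block lengths of every outcome of $\sigma'$. The key observation is that since $\sigma'$ enforces $\chi_{\infty p} \wedge \chi_{\infty \neg p}$ against every input sequence, no reachable cycle of $\trans$ can consist entirely of states whose output agrees on $p$: otherwise Player~$I$ could drive $\trans$ around such a monochromatic cycle forever, producing an outcome with only finitely many changepoints and thus violating the winning condition. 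Consequently, from any reachable configuration a changepoint must occur within at most $m$ input steps, so every outcome of $\sigma'$ is $m$-bounded.

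Let $\sigma$ be the projection of $\sigma'$ onto $O$, i.e., $\sigma(i_0 \cdots i_n) = \sigma'(i_0 \cdots i_n) \cap O$, and define the valuation $\alpha$ by $\alpha(x) = 2m$ for every $x \in \var(\varphi)$. Every outcome $w$ of a play consistent with $\sigma$ lifts to an outcome $w'$ of the same play under $\sigma'$, and $w'$ is an $m$-bounded coloring of $w$ satisfying $w' \models c(\varphi)$. Item~\ref{lemma_alternatingcolor_ldltopldl} of Lemma~\ref{lemma_altcolor} then yields $(w, \alpha) \models \varphi$, so $\sigma$ realizes $\varphi$ with respect to $\alpha$. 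The bounding argument in the third paragraph is the only nontrivial step; everything else is a direct invocation of the alternating color equivalence.
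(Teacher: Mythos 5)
Your proposal is correct and follows essentially the same route as the paper: a periodic $p$-coloring added to the winning strategy in the forward direction (with Item~\ref{lemma_alternatingcolor_pldltoldl} of Lemma~\ref{lemma_altcolor}), and in the backward direction a finite-state implementation whose block lengths are bounded by a cycle/pumping argument against $\chi_{\infty p}\wedge\chi_{\infty\neg p}$, followed by projection onto $O$ and Item~\ref{lemma_alternatingcolor_ldltopldl}. The minor off-by-one differences in the block-length bound and the resulting valuation are immaterial.
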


\begin{proof}
Let $\varphi$ be realizable, i.e., there is a winning strategy~$\sigma\colon (\pow{I})^+ \rightarrow \pow{O} $ with respect to some~$\alpha$. Now, consider the strategy~$\sigma'\colon (\pow{I})^+ \rightarrow \pow{O \cup \set{p}}$ defined by
\[\sigma'(i_0 \cdots i_{n-1}) = 
\begin{cases}
\sigma(i_0 \cdots i_{n-1})					&\text{if $n \bmod 2k < k$,}\\
\sigma(i_0 \cdots i_{n-1}) \cup \set{p}		&\text{otherwise,}
\end{cases}\]
where $k = \max_{x \in \vardiamond(\varphi)} \alpha(x)$. We show that $\sigma'$ realizes $c(\varphi)$. To this end, let $\rho' = i_0 o_0 i_1 o_1 i_2 o_2 \cdots$ be a play that is consistent with $\sigma'$. Then, 
$ \rho = i_0 (o_0 \setminus \set{p}) i_1 (o_1 \setminus \set{p}) i_2 (o_2 \setminus \set{p}) \cdots$ is by construction consistent with $\sigma$, i.e., $(\outcome(\rho), \alpha)\models \varphi$. As $\outcome(\rho')$ is a $k$-spaced $p$-coloring of $\outcome(\rho)$, we deduce $\outcome(\rho') \models c(\varphi)$ by applying Item~\ref{lemma_alternatingcolor_pldltoldl} of Lemma~\ref{lemma_altcolor}. Hence, $\sigma'$ realizes $c(\varphi)$.

Now, assume $c(\varphi)$ is realized by $\sigma'\colon (\pow{I})^+ \rightarrow \pow{O \cup \set{p}}$, which we can assume to be finite-state, say it is implemented by $\trans$ with $n$ states. We first show that every outcome that is consistent with $\sigma'$ is $n+1$-bounded. Such an outcome satisfies $c(\varphi)$ and has therefore infinitely many changepoints. Now, assume it has a block of length strictly greater than $n+1$, e.g., between changepoints at positions~$i$ and $j$. Let $q_0 q_1 q_2 \cdots$ be the states reached during the run of $\trans$ on the projection of $\rho$ to $\pow{I}$. Then, there are two positions~$i'$ and $j'$ satisfying $i \le i' < j' < j$ in the block such that $q_{i'} = q_{j'}$. Hence, $q_0 \cdots q_{i'-1}(q_{i'} \cdots q_{j'-1})^\omega$ is also a run of $\trans$. However, the output generated by this run has only finitely many changepoints, since the output at the states~$q_{i'}, \ldots, q_{j'-1}$ coincides when restricted to $\set{p}$. This contradicts the fact that $\trans$ implements a winning strategy, which requires in particular that every output has infinitely many changepoints. Hence, $\rho$ is $(n+1)$-bounded.

Let $\sigma \colon (\pow{I})^+ \rightarrow \pow{O}$ be defined as $\sigma(i_0 \cdots i_{n-1}) = \sigma'(i_0 \cdots i_{n-1}) \cap O$.
By definition, for every play~$\rho$ consistent with $\sigma$, there is an $(n+1)$-bounded $p$-coloring of $\outcome(\rho)$ that is the outcome of a play that is consistent with $\sigma'$. Hence, applying Item~\ref{lemma_alternatingcolor_ldltopldl} of Lemma~\ref{lemma_altcolor} yields $(\outcome(\rho), \beta) \models \varphi$, where $\beta(x) = 2n+2$ for every $x$. Hence, $\sigma$ realizes $\varphi$ with respect to $\beta$. Note that $\sigma$ is also finite-state and of the same size as $\sigma'$.
\end{proof}

Now, we are able to prove the main result of this section.

\begin{proof}[Proof of Theorem~\ref{thm_real}]
$\twoexp$-hardness of the $\pldl$ realizability problem follows immediately from the $\twoexp$-hardness of the $\ltl$ realizability problem~\cite{PnueliRosner89a}, as $\ltl$ is a fragment of $\pldl$.

Now, consider membership and recall that we have argued that it is sufficient to consider $\pldldiamond$ formulas. Thus, let $\varphi$ be a $\pldldiamond$ formula. By Lemma~\ref{lemma_realred}, we know that it is sufficient to consider the realizability of $c(\varphi)$. Let $\aut = (Q, \pow{I \cup O \cup \set{p}}, q_0, \delta, \col)$ be a deterministic parity automaton recognizing the models of $c(\varphi)$. We turn $\aut$ into a parity game~$\game$ such that Player~$1$ wins $\game$ from some dedicated initial vertex if and only if $c(\varphi)$ is realizable.
To this end, we define the arena $\arena = (V, V_0, V_1, E)$ with 
\begin{itemize}
	\item $V = Q \cup (Q \times \pow{I})$,
	\item $V_0 = Q$, 
	\item $V_1 = Q \times \pow{I}$, and
	\item $E = \set{(q, (q,i)) \mid i \subseteq I} \cup \set{((q, i), \delta(q, i \cup o)) \mid o \subseteq O \cup \set{p}}$,
i.e., Player~$0$ picks a subset $i \subseteq I$ and Player~$O$ picks a subset~$o \subseteq O \cup \set{p}$, which in turn triggers the (deterministic) update of the state stored in the vertices. 
\end{itemize}
Finally, we define the coloring~$\col_\arena$ of the arena via $\col_\arena(q) = \col_\arena(q,i) =\col(q)$. 

It is straightforward to show that Player~$O$ has a winning strategy from $q_0$ in the parity game~$(\arena, \col_\arena)$ if and only if $c(\varphi)$ (and thus $\varphi$) is realizable. Furthermore, if Player~$1$ has a winning strategy, then $\arena$ can be turned into a transducer implementing a strategy that realizes $c(\varphi)$ using $V$ as set of states. Note that $\card{V}$ is doubly-exponential in $\card{\varphi}$, if we assume that $I$ and $O$ are restricted to propositions appearing in $\varphi$. As the parity game is of doubly-exponential size and has exponentially many colors, we can solve it in doubly-exponential time in the size of $\varphi$. \end{proof}

Also, we obtain a doubly-exponential upper bound on a valuation that allows to realize a given formula. A matching lower bound already holds for $\pltl$~\cite{Zimmermann13}.

\begin{corollary}
	\label{cor_realub}
If a $\pldldiamond$ formula~$\varphi$ is realizable with respect to some $\alpha$, then it is realizable with respect to an $\alpha$ that is bounded doubly-exponentially in $\card{\varphi}$. 
\end{corollary}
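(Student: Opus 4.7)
The plan is to extract the bound directly from the constructive proof of Theorem~\ref{thm_real}. The key observation is that the proof already produces a finite-state strategy realizing $\varphi$ whose size controls the valuation needed, via Lemma~\ref{lemma_realred}. So I only need to bound the size of that strategy.

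First, I would invoke Lemma~\ref{lemma_realred} to pass from $\varphi$ to the $\ldlt$ formula~$c(\varphi)$: if $\varphi$ is realizable with respect to some $\alpha$, then $c(\varphi)$ is realizable over $(I, O \cup \set{p})$. Then, following the proof of Theorem~\ref{thm_real}, I would construct a deterministic parity automaton~$\aut$ of doubly-exponential size (in $\card{\varphi}$) for $c(\varphi)$ and the associated parity game~$\game = (\arena, \col_\arena)$, whose vertex set~$V$ is also of doubly-exponential size in $\card{\varphi}$. Since $c(\varphi)$ is realizable, Player~$O$ wins $\game$, and since parity games admit positional winning strategies, there is a finite-state transducer~$\trans$ implementing a winning strategy~$\sigma'$ for $c(\varphi)$ with $\card{V}$ states, i.e., doubly-exponentially many states in $\card{\varphi}$.

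Now I would apply the second direction of Lemma~\ref{lemma_realred} to $\sigma'$: since $\sigma'$ is finite-state and realizes~$c(\varphi)$, the lemma yields a strategy~$\sigma$ that realizes $\varphi$ with respect to the valuation~$\beta$ defined by $\beta(x) = 2n+2$ for every $x$, where $n$ is the number of states of $\trans$. As $n$ is doubly-exponential in $\card{\varphi}$, so is $\beta$, which is the desired bound.

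I do not expect any real obstacle here: every ingredient has already been established in the paper, and the corollary is simply a bookkeeping consequence of combining the size bound of the parity game from the proof of Theorem~\ref{thm_real} with the translation of strategy size into parameter size from Lemma~\ref{lemma_realred}. The only care needed is to note the use of positional determinacy of parity games to justify that the winning strategy's transducer has at most $\card{V}$ states, so that the valuation inherits exactly a doubly-exponential bound rather than a larger one.
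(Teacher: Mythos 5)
Your proposal is correct and follows essentially the same route as the paper: both pass to $c(\varphi)$ via Lemma~\ref{lemma_realred}, bound the size of a transducer realizing $c(\varphi)$ by the doubly-exponential parity game from the proof of Theorem~\ref{thm_real} (the paper leaves the appeal to positional determinacy implicit, while you make it explicit), and then read off the valuation $\beta(x) = 2n+2$ from the right-to-left direction of Lemma~\ref{lemma_realred}. No gaps.
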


\begin{proof}
If $\varphi$ is realizable, then so is $c(\varphi)$. Using the construction proving the right-to-left implication of Lemma~\ref{lemma_realred}, we obtain that $\varphi$ is realizable with respect to some $\alpha$ that is bounded by $2n+2$, where $n$ is the size of a transducer implementing the strategy that realizes $c(\varphi)$. We have seen in the proof of Theorem~\ref{thm_real} that the size of such a transducer is at most doubly-exponential in $\card{c(\varphi)}$, which is only linearly larger than $\card{\varphi}$. The result follows. 
\end{proof}

\section{Optimal Variable Valuations for Model Checking and Realizability}
\label{sec_optimal}
In this section, we turn the model checking and the realizability problem into optimization problems, e.g., the model checking optimization problem asks for the \emph{optimal} variable valuation such that a given system satisfies the specification with respect to this valuation. Similarly, the realizability optimization problem asks for an \emph{optimal} variable valuation such that $\varphi$ is realizable with respect to this valuation. Furthermore, we are interested in computing a winning strategy for Player~$O$ witnessing realizability with respect to an optimal valuation.
The definition of optimality depends on the type of formula under consideration: for $\pldldiamond$ formulas, we want to minimize the waiting times while for $\pldlbox$ formulas, we want to maximize satisfaction times. For formulas having both types of parameterized operators, the optimization problems are undefined.

In Subsection~\ref{subsec_optmc}, we show how to solve the model checking optimization problem in polynomial space. Then, in Subsection~\ref{subsec_optrealizability}, we explain how to adapt the approach to solve the realizability optimization problem in triply-exponential time. Thus, the model checking optimization problem is in polynomial space, just as the decision problem, but there is an exponential gap between the realizability optimization problem and its decision variant. Note that this gap already exists for $\pltl$~\cite{Zimmermann13}.

Both our results rely on the existence of automata of a certain size that recognize the models of a given $\pldl$ formula with respect to a fixed variable valuation. On the one hand, it suffices to translate formulas with a single variable; on the other hand, due to some technicalities, we have to consider formulas that might additionally contain changepoint-bounded operators. The semantics of such formulas are defined as expected.  

\begin{theorem}
\label{thm_smallautomata}
Let $\varphi$ be a $\pldl$ formula with $\var(\varphi) = \set{z}$ possibly having change\-point-bounded operators and let $\alpha$ be a variable valuation. Then, there exists a natural number $n \in (3 \cdot(\alpha(z)+1))^{\mathcal{O}(\size{\varphi})}$ and there exist
\begin{enumerate}
	
	\item\label{thm_smallautomata_nondet}
	 a non-deterministic Büchi automaton of size $n$ and

	\item\label{thm_smallautomata_det} a deterministic parity automaton of size $(n!)^2$ with $2n$ many colors
		
\end{enumerate}
that recognize the language~$L(\varphi, \alpha) = \set{ w \in (2^{P'})^\omega \mid (w, \alpha) \models \varphi }$, which are both effectively constructible.

\end{theorem}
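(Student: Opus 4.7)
The plan is to build the non-deterministic Büchi automaton in Part~\ref{thm_smallautomata_nondet} by combining the alternating automaton construction from Theorem~\ref{theorem_autconstruction} with a counter mechanism, and then to determinize this NBA to obtain the parity automaton for Part~\ref{thm_smallautomata_det}. Throughout, write $k=\alpha(z)$.

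For Part~\ref{thm_smallautomata_nondet}, I would first form the skeleton $\varphi^\circ$ of $\varphi$ by replacing every parameterized subformula~$\ddiamondle{r}{z}\psi$ by $\ddiamond{r}\psi$ and every~$\bboxle{r}{z}\psi$ by $\bbox{r}\psi$; the changepoint-bounded operators already present in $\varphi$ are left untouched, and Theorem~\ref{theorem_autconstruction} then applies essentially verbatim to produce an alternating Büchi automaton $\aut$ of size $O(\size{\varphi})$ for $\varphi^\circ$. A Miyano-Hayashi style construction~\cite{MiyanoH84} converts $\aut$ into a non-deterministic Büchi automaton $\mathcal{B}$ whose states are pairs of subsets of $\aut$'s state set, giving $\size{\mathcal{B}} \in 3^{O(\size{\varphi})}$ and recognizing the language of $\varphi^\circ$.

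To enforce the actual parameter bound, I would take a synchronous product of $\mathcal{B}$ with a counter component. For each parameterized subformula~$\ddiamondle{r}{z}\psi$ (respectively $\bboxle{r}{z}\psi$) of $\varphi$ whose corresponding obligation is present in the Miyano-Hayashi subset, the product state additionally records a counter value in~$\set{0,1,\ldots,k}$. The counter is initialized to $k$ when the obligation first appears, decreases by one upon each input letter, and forces the obligation to be discharged before reaching $0$ (for diamonds) or is simply retired upon reaching $0$ (for boxes, matching their semantics). Whenever the subset construction merges several copies of the same obligation the counter is kept at the minimum (diamond case) or maximum (box case) of the merged values. Since $\varphi$ has $O(\size{\varphi})$ parameterized subformulas and each contributes at most a factor $k+1$, the counter component multiplies the state space by at most $(k+1)^{O(\size{\varphi})}$, so that the resulting NBA has size $3^{O(\size{\varphi})}\cdot (k+1)^{O(\size{\varphi})} = (3(k+1))^{O(\size{\varphi})}$, as required.

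For Part~\ref{thm_smallautomata_det}, I would feed this NBA into Piterman's determinization, which turns an $n$-state NBA into a deterministic parity automaton with at most $(n!)^2$ states and $2n$ colors; this step is off-the-shelf and contributes no new ideas. The main obstacle of the plan lies in Part~\ref{thm_smallautomata_nondet}: rigorously establishing correctness of the product construction, in particular that a \emph{single} counter per parameterized subformula (rather than per "obligation copy" spawned by alternation) is sufficient, and that the minimum/maximum merging rule is sound in both directions. This hinges on a monotonicity argument analogous to Lemma~\ref{lemma_monotonicity} together with the fact that the Miyano-Hayashi construction preserves the obligation-discharging structure of $\aut$, so that at any moment the strictest still-pending counter is the only one that can constrain future behavior.
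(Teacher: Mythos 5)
Your overall architecture coincides with the paper's: strip the parameter from $\varphi$ to get an $\ldlt$ skeleton, build the linear-size alternating Büchi automaton of Theorem~\ref{theorem_autconstruction} for it, graft a counter mechanism onto a Miyano--Hayashi subset construction to enforce the bound $\alpha(z)$, and then determinize (the paper uses Schewe's construction for the $(n!)^2$/$2n$ bounds; that step is indeed off-the-shelf). However, the step you yourself flag as the main obstacle is a genuine gap, not just a missing verification: one counter \emph{per parameterized subformula} with min-merging is not correct. A subformula such as $\ddiamondle{r}{x}\psi$ sitting under an outer box spawns a fresh match attempt of $\aut_r$ at every position, and at any given level the Miyano--Hayashi macro-state contains several states of $Q^{r}$ belonging to attempts that have already consumed different numbers of letters. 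These attempts have \emph{different} absolute deadlines, so it is false that ``the strictest still-pending counter is the only one that can constrain future behavior'': min-merging forces a freshly spawned attempt to meet the nearly-expired deadline of the oldest one (losing completeness), and once the oldest attempt is discharged a single scalar cannot recover the correct remaining budget of the survivors; max-merging fails soundness symmetrically.

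The paper's construction avoids this by choosing a finer granularity: the counter is a function $\gamma$ defined on $T \cap \bigcup_j Q^{r_j}$, i.e., one value in $\set{0,\ldots,\alpha(z)}$ \emph{per state of the match-checking automata present in the macro-state}, with $\update{\gamma}{G}(q') = \min\set{\alpha(z), \gamma(q)-1 \mid (q,q')\in E}$. At this granularity min-merging is exact, because a run of an alternating automaton is a DAG in which two attempts occupying the same state at the same level share all future behavior, so only the longest incoming path matters. This costs nothing asymptotically: the alternating automaton has $\mathcal{O}(\size{\varphi})$ states, so the state space is still $(3\cdot(\alpha(z)+1))^{\mathcal{O}(\size{\varphi})}$. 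The paper also interposes an explicit correctness interface you would need in any case: a run-level characterization (the \emph{bounded-match property}) stating that $(w,\alpha)\models\varphi$ iff the de-parameterized automaton has an accepting run in which every path inside some $Q^{r_j}$ has length at most $\alpha(z)$; this is proved by strengthening the induction of Theorem~\ref{theorem_autconstruction} (including a version of Lemma~\ref{lemma_autrcorrectness} for parameterized tests), and it reduces the correctness of the counter mechanism to a purely structural property of runs rather than an appeal to monotonicity.
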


The existence of such automata is proven in Subsection~\ref{subsec_smallautomata} by adapting the Breakpoint construction of Miyano and Hayashi~\cite{MiyanoH84}.

\subsection{The Model Checking Optimization Problem}
\label{subsec_optmc}
In this subsection, we prove that the model checking optimization problem can be solved in polynomial space. As already mentioned above, we only consider $\pldldiamond$ and $\pldlbox$ formulas.
For $\pldldiamond$ formulas, optimal variable valuations are as small as possible. 
To abstract a variable valuation to a single value, we can either take the minimal element among the variables, i.e. the shortest waiting time, or the maximal element among the variables, i.e. the longest waiting time. Both options will provide a total ordering among variable valuations.
For $\pldlbox$ formulas, optimal variable valuations are as large as possible. For abstraction purposes, we may again either take the maximal element, i.e. the longest guarantee, or the minimal element among the variable, i.e. the shortest guarantee. Again, this results in a total order.

\begin{theorem}
	\label{thm_optmc}
Let $\varphi_\Diamond$ be a $\pldldiamond$ formula, let $\varphi_\Box$ be a $\pldlbox$ formula, and let $\sys$ be a transition system. The following values are computable in polynomial space:
\begin{enumerate}

	\item $\min_{\set{\alpha \mid \sys \text{ satisfies }\varphi_\Diamond \text{ w.r.t.\ } \alpha}} \min_{x \in \var(\varphi_\Diamond)} \alpha(x)$.
	
	\item $\min_{\set{\alpha \mid \sys \text{ satisfies } \varphi_\Diamond \text{ w.r.t.\ } \alpha}} \max_{x \in \var(\varphi_\Diamond)} \alpha(x)$.
	
	\item $\max_{\set{\alpha \mid \sys \text{ satisfies } \varphi_\Box \text{ w.r.t.\ } \alpha}} \max_{y \in \var(\varphi_\Box)} \alpha(y)$.

	\item $\max_{\set{\alpha \mid \sys \text{ satisfies } \varphi_\Box \text{ w.r.t.\ } \alpha}} \min_{y \in \var(\varphi_\Box)} \alpha(y)$.
	
\end{enumerate}
\end{theorem}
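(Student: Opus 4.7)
The plan is to reduce each of the four optimization problems to a sequence of polynomially many model-checking queries against a fixed variable valuation, using monotonicity (Lemma~\ref{lemma_monotonicity}) together with the a priori bounds from Corollary~\ref{cor_mcub} and Lemma~\ref{lem_mcubbox}; each individual query will then be executed in polynomial space by combining Theorem~\ref{thm_smallautomata} with the on-the-fly product construction used in the proof of Theorem~\ref{thm_mc}.

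For the $\pldldiamond$ cases, the set of satisfying valuations is upward-closed. For item~2, the optimum equals the least $k$ such that the constant valuation $\alpha_k(x)=k$ satisfies $\varphi$; Corollary~\ref{cor_mcub} bounds this $k$ exponentially, so a binary search over polynomially many bits suffices. For item~1, I would iterate over the polynomially many candidate variables $x_0 \in \var(\varphi)$ and, for each, binary-search on $k$ using the valuation $\alpha$ with $\alpha(x_0)=k$ and $\alpha(x)=N$ for $x\neq x_0$, where $N$ is the exponential bound of Corollary~\ref{cor_mcub}. Upward closedness ensures that this canonical shape of valuation is without loss of generality for determining whether some satisfying valuation has $\min_x \alpha(x) \le k$.

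For the $\pldlbox$ cases, the set of satisfying valuations is downward-closed. For item~4 the optimum equals the largest $k$ such that $\alpha_k(y)=k$ satisfies $\varphi$; the threshold $\alpha^*$ from Lemma~\ref{lem_mcubbox} decides in a single query whether the answer is $+\infty$ (unbounded) or at most $\max_y \alpha^*(y)$, after which binary search applies. For item~3, for each candidate $y_0 \in \var(\varphi)$ I would construct a single-variable $\pldlbox$ formula $\varphi_{y_0}$ by syntactically replacing, for every $y\neq y_0$, each subformula $\bboxle{r}{y}\psi$ by $\bbox{\hat{r}}\psi$, where $\hat{r}$ is the empty-match test obtained as in Lemma~\ref{lemma_removeboxes}; this faithfully encodes $\alpha(y)=0$, and by downward monotonicity setting the other variables to zero is optimal for maximizing $\alpha(y_0)$. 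Applying Lemma~\ref{lem_mcubbox} to $\varphi_{y_0}$ either certifies $+\infty$ or yields an exponential upper bound on $\alpha(y_0)$, after which a binary search determines the maximum for $y_0$. The overall optimum for item~3 is the maximum over all $y_0$.

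Each fixed-valuation model-checking query is executed in polynomial space: Theorem~\ref{thm_smallautomata}, applied iteratively to eliminate the variables one at a time (or, equivalently, by a direct extension of the breakpoint construction that carries one bounded counter per variable), yields a non-deterministic Büchi automaton for $L(\varphi,\alpha)$ whose size is exponential in $\size{\varphi}$ and polynomial in $\max_z \alpha(z)$; its product with $\sys$ can be checked for non-emptiness on-the-fly, as in Theorem~\ref{thm_mc}. Since we issue only polynomially many such queries (polynomially many candidate variables, each requiring a polynomial number of binary-search steps), the total space usage is polynomial. The main obstacle I anticipate is the item-3 reduction: one must verify that the ``zero-substitution'' via $\hat{r}$ preserves the $\pldlbox$ fragment so that Lemma~\ref{lem_mcubbox} remains applicable, and that this reduction soundly captures the supremum of $\alpha(y_0)$ over all satisfying valuations. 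Once that is in place, the four items follow the uniform pattern of monotonicity-driven canonicalisation plus binary search against a fixed valuation.
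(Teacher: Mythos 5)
Your overall architecture --- reduce to per-variable questions, bound the search space exponentially, binary-search, and answer each fixed-valuation query in polynomial space via Theorem~\ref{thm_smallautomata} and an on-the-fly product as in Theorem~\ref{thm_mc} --- is exactly the paper's, and your treatment of items~2, 3 and~4 essentially coincides with the paper's: items~2 and~4 collapse to a constant valuation by monotonicity, and item~3 uses the same zero-substitution $\bboxle{r}{y}\psi \mapsto \bbox{\hat{r}}\psi$ from Lemma~\ref{lemma_removeboxes}, which is genuinely without loss of generality because for box-variables monotonicity goes \emph{downward}.

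The gap is in item~1. You claim that ``upward closedness ensures'' that the canonical valuation $\beta$ with $\beta(x_0)=k$ and $\beta(x)=N$ for $x\neq x_0$ faithfully tests whether some satisfying valuation has $\alpha(x_0)\le k$. Monotonicity (Lemma~\ref{lemma_monotonicity}) only lets you \emph{increase} diamond-parameters, so from a witness $\alpha$ with $\alpha(x_0)\le k$ you may pass to $\beta$ only if $\alpha(x)\le N$ for every $x\neq x_0$ --- and nothing guarantees this. Corollary~\ref{cor_mcub} says that \emph{some} satisfying valuation is bounded by $N$; it does not say that a satisfying valuation subject to the side constraint $\alpha(x_0)\le k$ can be chosen with the remaining variables bounded by $N$. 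What you would need is a ``capping'' lemma (if $\alpha$ is satisfying, then so is $x\mapsto\min(\alpha(x),N)$), which does not follow from upward-closedness and would itself require a per-variable pumping argument in the style of Lemma~\ref{lemma_pumppath}. The paper sidesteps this entirely: for each $x_0$ it applies the alternating-color rewriting to every variable \emph{except} $x_0$, obtaining a genuinely single-variable formula $\varphi_{x_0}$ (with changepoint-bounded operators) for which both directions of Lemma~\ref{lemma_altcolor}, applied partially, establish that the optimum is preserved; the single-variable bound and binary search then apply directly. Either adopt that reduction or prove the capping/relativized bound; as written, the equivalence underlying your item-1 binary search is unproven. (A secondary point: Theorem~\ref{thm_smallautomata} is stated for single-variable formulas, so your fixed-valuation queries with several live parameters also need the multi-counter extension you only sketch; the paper's reduction makes this unnecessary, since after the rewriting exactly one variable remains.)
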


Note that all other combinations are trivial due to the monotonicity properties of $\pldl$. Furthermore, we can restrict our attention to formulas with at least one variable, as the optimization problem is trivial otherwise. 

As a first step, we show that we can reduce all problems to ones with exactly one variable, but possibly with changepoint-bounded operators.

\begin{enumerate}
	\item Fix some $x \in \var(\varphi_\Diamond)$ and apply the rewriting introduced for the alter\-nating-color technique to every variable but $x$ to obtain the formula~$\varphi_x$, which has changepoint-bounded diamond-operators as well as diamond-operators parameterized by $x$.  Applying both directions of Lemma~\ref{lemma_altcolor} (which also holds if we do not replace all parameterized operators) yields
\begin{align*}
&\min\nolimits_{\set{\alpha \mid \sys \text{ satisfies } \varphi_\Diamond \text{ w.r.t.\ } \alpha}} \min\nolimits_{x \in \var(\varphi_\Diamond)} \alpha(x) =\\
 &\min\nolimits_{x \in \var(\varphi_\Diamond)} \min\nolimits_{\set{\alpha \mid  \sys \text{ satisfies } \varphi_x \text{ w.r.t.\ } \alpha}} \alpha(x).
\end{align*}
Thus, we have reduced the problem to $\size{\var(\varphi_\Diamond)}$ many optimization problems for formulas~$\varphi_x$ with a single variable.
	
\item Rename every variable in $\varphi_\Diamond$ to $z$ and call the resulting formula $\varphi_\Diamond'$. Due to monotonicity, minimizing the maximal parameter value for $\varphi_\Diamond$ yields the same value as minimizing the value of $z$ for $\varphi_\Diamond'$.	

\item Fix some $y \in \var(\varphi_\Box)$ and denote by $\varphi_y$ the formula obtained from $\varphi_\Box$ by replacing every subformula~$\bboxle{r}{y'}\psi$ with $y' \neq y$ by $\bbox{\hat{r}}\psi$, where $\hat{r}$ is defined as in the proof of Lemma~\ref{lemma_removeboxes}.
Intuitively, this sets the value for every $y' \neq y $ to zero. Due to monotonicity, we have
\begin{align*}
&\max\nolimits_{\set{\alpha \mid \sys \text{ satisfies } \varphi_\Box \text{ w.r.t.\ } \alpha}} \max\nolimits_{y \in \var(\varphi_\Box)} \alpha(y) =\\ &\max\nolimits_{y \in \var(\varphi_\Box)} \max\nolimits_{\set{\alpha \mid \sys \text{ satisfies } \varphi_y \text{ w.r.t.\ } \alpha}} \alpha(y),
\end{align*}
i.e., we have reduced the problem to $\size{\var(\varphi_\Box)}$ many optimization problems for formulas~$\varphi_y$ with a single variable.

\item Rename every variable in $\varphi_\Box$ to $z$ and call the resulting formula $\varphi_\Box'$. Due to monotonicity, maximizing the minimal parameter value for $\varphi_\Box$ yields the same value as maximizing the value of $z$ for $\varphi_\Box'$.	

\end{enumerate}

First, we consider the minimization problem for a formula~$\varphi_\Diamond$ with a single variable~$x \in \vardiamond(\varphi_\Diamond)$ and possibly with 
changepoint-bounded operators. From Corollary~\ref{cor_mcub}, which can easily be shown to hold for such formulas, too, we obtain an upper bound (that is exponential in $\size{\varphi_\Diamond}$ and linear in $\size{\sys}$) on the value
\[ \min\nolimits_{\set{\alpha \mid \sys \text{ satisfies } \varphi_\Diamond \text{ w.r.t.\ } \alpha}} \alpha(x). \]
Dually, for a formula~$\varphi_\Box$ with a single variable~$y \in \varbox(\varphi_\Box)$ and possibly with changepoint-bounded operators, Lemma~\ref{lem_mcubbox}, which holds for such formulas, too, yields a bound~$k_{\max}$ (that is exponential in $\size{\varphi_\Box}$ and linear in $\size{\sys}$) such that either
\[ \max\nolimits_{\set{\alpha \mid \sys \text{ satisfies } \varphi_\Box \text{ w.r.t.\ } \alpha}} \alpha(y) \le k_{\max} \]
or the maximum is equal to $\infty$. Thus, in both cases, we have an exponential search space for the optimal value.

Therefore, binary search yields the optimal value, if we can solve each query \myquot{does $\sys$ satisfy $\varphi$ with respect to $\alpha$} in polynomial space, provided $\alpha$ is exponential in $\size{\varphi}$ and linear in $\size{\sys}$. To this end, we use the non-deterministic Büchi automaton~$\aut$ recognizing $L(\neg \varphi, \alpha)$ as given by Item~\ref{thm_smallautomata_nondet} of Theorem~\ref{thm_smallautomata}, whose size is exponential in $\size{\varphi}$ and linear in $\size{\sys}$. Model checking $\sys$ against $\aut$ answers the query and is possible in polynomial space by executing the emptiness test on-the-fly without constructing $\aut$ completely~\cite{VardiWolper94}.

\subsection{The Realizability Optimization Problem}
\label{subsec_optrealizability}
In this subsection, we show how to adapt the reasoning of the model checking case to give an algorithm for the realizability optimization problem with triply-exponential running time.

\begin{theorem}	
	\label{thm_optreal}
Let $\varphi_\Diamond$ be a $\pldldiamond$ formula and let $\varphi_\Box$ be a $\pldlbox$ formula. The following values (and winning strategies witnessing them) can be computed in triply-exponential time:
\begin{enumerate}

	\item $\min_{\set{\alpha \mid \varphi_\Diamond \text{ realizable w.r.t.\ } \alpha}} \min_{x \in \var(\varphi_\Diamond)} \alpha(x)$.
	
	\item $\min_{\set{\alpha \mid \varphi_\Diamond \text{ realizable w.r.t.\ } \alpha}} \max_{x \in \var(\varphi_\Diamond)} \alpha(x)$.
	
	\item $\max_{\set{\alpha \mid \varphi_\Box \text{ realizable w.r.t.\ } \alpha}} \max_{y \in \var(\varphi_\Box)} \alpha(y)$.

	\item $\max_{\set{\alpha \mid \varphi_\Box \text{ realizable w.r.t.\ } \alpha}} \min_{y \in \var(\varphi_\Box)} \alpha(y)$.
	
\end{enumerate}
\end{theorem}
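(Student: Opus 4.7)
The plan is to mirror the proof of Theorem~\ref{thm_optmc}, replacing its model-checking decision routine by a realizability decision routine built from the deterministic parity automaton furnished by Theorem~\ref{thm_smallautomata}(\ref{thm_smallautomata_det}), and to confine a binary search using the doubly-exponential bound on optimal valuations given (directly) by Corollary~\ref{cor_realub} and (dually) for $\pldlbox$ formulas.

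First, I would reduce each of the four cases to an optimization in a single parameter~$\alpha(z)$ using exactly the four reductions of Section~\ref{subsec_optmc}: they are justified only by Lemma~\ref{lemma_monotonicity}, Lemma~\ref{lemma_removeboxes}, and both directions of Lemma~\ref{lemma_altcolor}, and therefore transfer verbatim from satisfaction along a trace to satisfaction along every outcome of a strategy. This produces at most $|\var(\varphi)|$ one-parameter instances, some of which may contain changepoint-bounded operators; this is precisely the setting supported by Theorem~\ref{thm_smallautomata}. Next, I would bound the search range. For a $\pldldiamond$ instance, Corollary~\ref{cor_realub} (whose proof goes through in the presence of extra changepoint-bounded operators) yields a doubly-exponential bound. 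For a $\pldlbox$ instance, the set of realizing valuations is downward-closed by monotonicity, so the optimum is either $\infty$ or a finite threshold; to bound the latter I dualize via B\"uchi--Landweber determinacy---non-realizability of $\varphi_\Box$ at $\alpha$ means the opposing player wins the dual game whose winning condition is the $\pldldiamond$ formula $\neg\varphi_\Box$---and apply the symmetric analogue of Corollary~\ref{cor_realub} to that dual, obtaining a doubly-exponential bound on the smallest non-realizing valuation. Whether the optimum is $\infty$ or a finite threshold has been reached is then decided by a single realizability query at that bound.

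The binary search over the resulting doubly-exponential range makes only exponentially many decision queries. For each query at a fixed $\alpha(z)\le 2^{2^{O(|\varphi|)}}$, I invoke Theorem~\ref{thm_smallautomata}(\ref{thm_smallautomata_det}) to obtain a deterministic parity automaton $\aut$ for $L(\varphi,\alpha)$: the parameter $n$ of the theorem is then at most doubly-exponential in $|\varphi|$, so $\aut$ has at most triply-exponentially many states and doubly-exponentially many colors. Following the game construction of the proof of Theorem~\ref{thm_real}, I build the corresponding two-player parity game (also of triply-exponential size) and solve it; from its winning region I extract a finite-state transducer that realizes $\varphi$ with respect to the guessed valuation. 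Multiplying by the exponentially many binary-search queries and by the at most $|\var(\varphi)|$ choices of distinguished variable keeps the overall runtime triply-exponential.

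The hard part is the complexity bookkeeping in the last step: since the deterministic parity automaton already reflects an $(n!)^2$ blow-up applied to a doubly-exponential $n$, both its state count and its color count must be tracked carefully through the parity-game solver. Using, e.g., Zielonka's $N^{O(d)}$ algorithm, the quantity $N^d = 2^{d\log N}$ simplifies---because both $d$ and $\log N$ are dominated by $2^{2^{O(|\varphi|)}}$---to $2^{2^{2^{O(|\varphi|)}}}$, that is, triply-exponential in $|\varphi|$. A secondary subtlety is that the $\pldlbox$ dualization yields a realizability problem with the roles of the two players swapped, and one must verify that the proof of Corollary~\ref{cor_realub} transports across this role swap.
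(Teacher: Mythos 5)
Your proposal matches the paper's proof essentially step for step: the same reduction to single-variable instances, the same use of Corollary~\ref{cor_realub} for the $\pldldiamond$ bound, the same dualization of the $\pldlbox$ case to a $\pldldiamond$ realizability problem with the players' roles swapped (the paper packages this as Lemma~\ref{lemma_dualrealability}, noting that the dual formula must prefix input propositions with $\ddiamond{\ttrue}$ to account for Player~$I$ moving first --- the one technicality your sketch glosses over but correctly flags as needing verification), and the same binary search over a doubly-exponential range with each query answered by solving a triply-exponential parity game built from the deterministic automaton of Theorem~\ref{thm_smallautomata}. The complexity bookkeeping you describe is also the paper's.
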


The reductions to optimization problems for formulas with a single variable remain valid in the realizability case. However, instead of proving bounds on the search space for both the $\pldldiamond$ case and the $\pldlbox$ case, we rely on Corollary~\ref{cor_realub}, which proves an upper bound for the former case, and on duality: given a $\pldl$ formula~$\varphi$ over $P = I \cup O$ and its negation~$\neg \varphi$ as defined in Lemma~\ref{lemma_pldlnegation}, define $\dual{\varphi}$ to be the formula obtained from $\neg\varphi$ by replacing each atomic proposition~$p \in I$ by $\ddiamond{\ttrue}p$ and each negated proposition~$\neg p$ with $p \in I$ by $\ddiamond{\ttrue}\neg p$. Here, $\ddiamond{\ttrue}\!$ can be understood as the $\pldl$-equivalent of $\ltl$'s next-operator. The realizability problems for $\varphi$ and $\dual{\varphi}$ are dual, i.e., we have swapped the roles of the players and negated the specification (and used the next-operator to account for the fact that Player~$I$ is always the first to move). The following lemma formalizes this fact and relies on determinacy of parity games~\cite{EmersonJutla91,Mostowski91}, to which the realizability problem is reduced to, as shown in Section~\ref{sec_real}.

\begin{lemma}
	\label{lemma_dualrealability}
Let $\varphi$ be a $\pldl$ formula and let $\alpha$ be a variable valuation. Then, $\varphi$ is not realizable over $(I, O)$ with respect to $\alpha$ if and only if $\dual{\varphi}$ is realizable over $(O, I)$ with respect to $\alpha$.
\end{lemma}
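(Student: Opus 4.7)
The plan is to use determinacy of parity games, as foreshadowed in the statement. By Theorem~\ref{thm_real} combined with Theorem~\ref{thm_smallautomata}, realizability of a $\pldl$ formula with respect to a fixed variable valuation~$\alpha$ reduces to the winner problem of a parity game, and parity games are determined. Consequently, either Player~$O$ wins the $(I,O)$-game for $\varphi$ (meaning $\varphi$ is realizable w.r.t.\ $\alpha$) or Player~$I$ has a strategy in that game enforcing $\neg\varphi$; the analogous dichotomy holds for the $(O,I)$-game with specification $\dual{\varphi}$. The lemma thus reduces to showing that Player~$I$ has a winning strategy in the $(I,O)$-game for $\neg\varphi$ if and only if Player~$I$ (now as the responder) has one in the $(O,I)$-game for $\dual{\varphi}$.

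The bridge between the two conditions is a shift-based semantic correspondence that formalizes the intuition given right after the lemma statement. For $w \in (\pow{P})^\omega$, define $\mathrm{shift}(w)_n = (w_n \cap O) \cup (w_{n+1} \cap I)$, so that $\mathrm{shift}(w)$ inherits the $O$-components of $w$ at the current position and the $I$-components one step later. A structural induction on $\varphi$ yields the semantic lemma $(w,\alpha) \models \dual{\varphi}$ iff $(\mathrm{shift}(w),\alpha) \models \neg\varphi$: it is immediate on atoms from the definition of $\dual{\cdot}$ and propagates through the Boolean and temporal operators because $\dual{\cdot}$ is a uniform syntactic substitution of $I$-atoms.

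With the semantic lemma in hand, the two directions reduce to strategy conversions. For the forward direction, from a winning strategy $\sigma_I\colon (\pow{O})^* \to \pow{I}$ for Player~$I$ in the $(I,O)$-game for $\neg\varphi$, I would define $\tau_I(o_0 \cdots o_n) = \sigma_I(o_0 \cdots o_{n-2})$ for $n \ge 2$, set $\tau_I(o_0 o_1) = \sigma_I(\epsilon)$, and pick $\tau_I(o_0)$ arbitrarily. A round-by-round induction shows that for any Player~$O$ strategy $\sigma_O^{new}$ in the $(O,I)$-game, the resulting outcome $w$ satisfies $\mathrm{shift}(w) = u$, where $u$ is the outcome of the $(I,O)$-play of $\sigma_I$ against the derived strategy $\sigma_O^{old}$ obtained by mirror-shifting $\sigma_O^{new}$. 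Since $\sigma_I$ is winning, $u \models \neg\varphi$, and the semantic lemma then gives $w \models \dual{\varphi}$. The backward direction is obtained by a mirror-image strategy conversion combined with determinacy to close the remaining case.

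The main obstacle I anticipate is the semantic shift lemma inside regular expressions, because a propositional atom in a regex consumes a single letter combining $I$- and $O$-components while $\dual{\cdot}$ only shifts $I$-atoms. The induction must verify that the matching relation $\Rexp$ transforms correctly under the shift, handling tests via the induction hypothesis and propositional atoms pointwise. Once this compatibility is in place, the rest is bookkeeping on the index arithmetic of the strategy translation.
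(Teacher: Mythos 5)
The paper offers no written proof of this lemma (it only gestures at determinacy of parity games), so I can only measure your proposal against the argument the authors must have intended. Your forward direction is sound and is surely the intended one: determinacy of the parity game from Theorem~\ref{thm_real}/\ref{thm_smallautomata} yields a Player-$I$ strategy $\sigma_I$ enforcing $\neg\varphi$, your shift lemma is correct on atoms for the given definition of $\dual{\varphi}$ (an $I$-atom of $\neg\varphi$ at position $n$ becomes a reference to position $n+1$, which is exactly $\mathrm{shift}(w)_n \cap I = w_{n+1}\cap I$), and the index arithmetic $\tau_I(o_0\cdots o_n)=\sigma_I(o_0\cdots o_{n-2})$ is an information-\emph{discarding} simulation, so it goes through. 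The regular-expression issue you flag is real --- the paper's substitution cannot literally be performed inside propositional regex atoms, and an unmodified atom mentioning $I$-propositions breaks the shift lemma --- but it is repairable (split each atom into its $O$-part and a test $(\ddiamond{\ttrue}\cdot)?$ for its $I$-part) and is not the main problem.

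The genuine gap is the backward direction, which you dispose of in one sentence (``mirror-image strategy conversion combined with determinacy''). There is no mirror image: under your correspondence the $I$-picker in the $(O,I)$-game chooses the value playing the role of $i_{n-1}$ after seeing $o_0\cdots o_n$, i.e.\ with two more environment moves than Player~$I$ has in the $(I,O)$-game, so a winning strategy for $\dual{\varphi}$ may exploit lookahead that cannot be converted back, and determinacy of each game separately does not bridge the two information structures. Worse, with the definition of $\dual{\varphi}$ as literally given (shift the $I$-atoms), the backward direction is \emph{false}. Take $I=\set{a}$, $O=\set{b}$ and $\varphi = (a \wedge \ddiamond{\ttrue}b) \vee (\neg a \wedge \ddiamond{\ttrue}\neg b)$, i.e.\ ``$b$ at position $1$ iff $a$ at position $0$''. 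This is realizable over $(I,O)$, since Player~$O$ sees $i_0$ before choosing $o_1$. But $\dual{\varphi}$ is equivalent to ``exactly one of $a,b$ at position $1$'', which is also realizable over $(O,I)$, since there the $I$-picker chooses $i_1$ after seeing $o_1$. So both sides of the claimed equivalence can fail to be exclusive. The correct dualization must shift the atoms of the \emph{first mover of the dual game}, i.e.\ the $O$-atoms: then the relevant $I$-decision at each formula position again precedes the relevant $O$-decision by exactly one half-step, the two games become isomorphic up to one irrelevant move, and both your shift lemma and both strategy conversions go through without any appeal to lookahead. Your proof should either adopt that corrected $\dual{\varphi}$ or it cannot close the backward direction.
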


Thus, applying Lemma~\ref{lemma_dualrealability} and monotonicity in the case of a $\pldlbox$ formula~$\varphi_\Box$ with a single variable~$y$ yields 
\[
\max\nolimits_{\set{\alpha \mid \varphi_\Box \text{ realizable w.r.t.\ } \alpha}} \alpha(y) = \min\nolimits_{\set{\alpha \mid \dual{\varphi_\Box} \text{ realizable w.r.t.\ } \alpha}} \alpha(y) -1, \]
i.e., to solve the $\pldlbox$ optimization problem for $\varphi_\Box$ we just have to solve the problem for $\dual{\varphi_\Box}$ and subtract one.

Thus, it remains to consider a minimization problem for a formula~$\varphi_\Diamond$ with a single variable~$x \in \vardiamond(\varphi_\Diamond)$ and possibly with 
changepoint-bounded operators. From Corollary~\ref{cor_realub}, which holds for such formulas, too, we obtain a doubly-exponential (in $\size{\varphi_\Diamond}$) upper bound on
$ \min\nolimits_{\set{\alpha \mid \varphi_\Diamond \text{ realizable w.r.t.\ } \alpha}} \alpha(x)$.

Thus, we have a doubly-exponential search space for the optimal variable valuation. Recall that Item~\ref{thm_smallautomata_det} of Theorem~\ref{thm_smallautomata} gives us a deterministic parity automaton of triply-exponential size and with exponentially many colors (both in $\size{\varphi_\Diamond}$) recognizing $L(\varphi_\Diamond, \alpha)$, as $\alpha(x)$ is bounded doubly-exponentially. This allows us to construct a parity game of triply-exponential size with exponentially many colors that is won by Player~$O$ if and only if $\varphi_\Diamond$ is realizable with respect to $\alpha$. The construction is similar to the one described in the proof of Theorem~\ref{thm_real}. Such a parity game can be solved in triply-exponential time. Thus, to solve the optimization problem, we perform binary search through the doubly-exponential search space where each query can be answered in triply-exponential time by solving a parity game. Thus, the overall running time is indeed triply-exponential.

Furthermore, as already described in the aforementioned proof, a winning strategy for the parity game can be turned into a transducer witnessing realizability of $\varphi_\Diamond$. Finally, it is straightforward to show how to turn this transducer into one for the original specifications with potentially several variables. This is trivial for the cases not requiring an application of the alternating-color technique and requires the transformation described in the proof of Lemma~\ref{lemma_realred} for the other cases. This finishes the proof of Theorem~\ref{thm_optreal}, save for the construction of a deterministic parity automaton with the desired properties.

\subsection{\emph{Small} Automata for PLDL}
\label{subsec_smallautomata}
Fix a formula~$\varphi$ with a single variable~$z \in \var(\varphi)$ possibly having change\-point-bounded operators and a variable valuation~$\alpha$. We show how to adapt the Breakpoint construction of Miyano and Hayashi~\cite{MiyanoH84} to construct a non-deterministic Büchi automaton of size~$(3 \cdot(\alpha(z)+1))^{\mathcal{O}(\size{\varphi})}$ to prove Item~\ref{thm_smallautomata_nondet} of Theorem~\ref{thm_smallautomata}. The deterministic automaton for Item~\ref{thm_smallautomata_det} of Theorem~\ref{thm_smallautomata} can then be obtained by applying Schewe's determinization construction~\cite{Schewe09}, which determinizes a Büchi automaton with $n$ states into a parity automaton with $(n!)^2$ states and $2n$ colors.

Recall that $\varphi$ has a single variable. In the following, we assume that it parameterizes diamond-operators, the case of box-operators is dual and discussed below. Thus, call the variable~$x$ and let $\ddiamondle{r_1}{x}\psi_1, \ldots, \ddiamondle{r_k}{x}\psi_k \in \cl(\varphi)$ be the parameterized subformulas of $\varphi$. Furthermore, let $\varphi'$ be the $\ldlt$ formula obtained by removing the parameters, i.e., by replacing each $\ddiamondle{r_j}{x}\psi_j$ by $\ddiamond{r_j}\psi_j$, and let $\aut_{\varphi'} = (Q, \pow{P'}, q_0, \delta, F)$ be the equivalent alternating Büchi automaton given by Theorem~\ref{theorem_autconstruction}.
For $j \in \set{1, \ldots, k}$, we denote the set of states of the automaton~$\aut_{r_j}$ checking for a match with $r_j$ by $Q^{r_j}$, which is a subset of $Q$. Furthermore, we assume the $Q^{r_j}$ to be pairwise disjoint. 

\begin{lemma}
	\label{lemma_runcharac}
Let  $\varphi$ and $\aut_{\varphi'}$ as above and let $w \in (\pow{P'})^\omega$. Then, $(w, \alpha) \models \varphi$ if and only if $\aut_{\varphi'}$ has an accepting run~$\rho$ that satisfies the \emph{bounded-match property}: every path~$(q_n, n) \cdots (q_{n+ \ell}, n+\ell)$ in $\rho$ with $q_n, \ldots, q_{n+\ell} \in Q^{r_j}$ satisfies $\ell \le \alpha(x)$ for every $j\in \set{1, \ldots, k}$.
\end{lemma}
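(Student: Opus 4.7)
The plan is to proceed by structural induction on $\varphi$, refining the correctness proof of Theorem~\ref{theorem_autconstruction} so that the additional bounded-match constraint accounts precisely for the parameter bound $\alpha(x)$. The key observation is that, by Lemma~\ref{lemma_autrcorrectness} and the way $\aut_{\varphi'}$ inlines the Thompson construction, a match $(n, n+\ell) \in \Rexp(r_j, w)$ corresponds one-to-one with an $\ell$-edge path through $Q^{r_j}$ in an accepting run of $\aut_{\varphi'}$. This holds because each alternating-automaton transition from a state in $Q^{r_j}$ consumes exactly one input letter---the intervening $\epsilon$-paths of the underlying $\epsilon$-NFA are collapsed into the transition function---so the path length in $Q^{r_j}$ equals the length of the matched prefix. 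Hence the constraint $\ell \le \alpha(x)$ is the automaton-level counterpart of $j \le \alpha(x)$ in the semantics of $\ddiamondle{r_j}{x}\psi_j$.

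For the forward direction, assuming $(w, \alpha) \models \varphi$, note that $w \models \varphi'$ trivially (removing parameters only relaxes constraints), so by Theorem~\ref{theorem_autconstruction} there is an accepting run of $\aut_{\varphi'}$ on $w$. I would rebuild such a run following the inductive recipe of the proof of Theorem~\ref{theorem_autconstruction}, but at each occurrence of a subformula $\ddiamond{r_j}\psi_j$ I would select the $\aut_{r_j}$-run corresponding to a witness match of length at most $\alpha(x)$---available because the parameterized formula $\ddiamondle{r_j}{x}\psi_j$ is satisfied at the current position. The resulting run then satisfies the bounded-match property by the key observation above. For the backward direction, from an accepting run $\rho$ with the bounded-match property I would extract a semantic certificate by induction: at each subformula occurrence $\ddiamondle{r_j}{x}\psi_j$ witnessed by a sub-run starting in $Q^{r_j}$, Lemma~\ref{lemma_autrcorrectness} turns the bounded $Q^{r_j}$-path into a match $(n, n+\ell) \in \Rexp(r_j, w)$ with $\ell \le \alpha(x)$, and the inductive hypothesis applied to the portion of $\rho$ entering $\aut_{\psi_j}$ yields $(w, n+\ell, \alpha) \models \psi_j$, hence $(w, n, \alpha) \models \ddiamondle{r_j}{x}\psi_j$.

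The main technical obstacle is bookkeeping when a regular expression $r_j$ contains tests $\theta?$ that themselves embed subformulas from $\cl(\varphi)$---possibly including other parameterized subformulas $\ddiamondle{r_i}{x}\psi_i$. Along a $Q^{r_j}$-path the run universally branches into test sub-automata whose states lie outside $Q^{r_j}$, so the bounded-match property for $Q^{r_j}$ is not disturbed by these branches, and the pairwise disjointness of the state sets $Q^{r_i}$ (built into the construction) ensures that the bounded-match constraint for each $r_i$ can be handled independently by the inductive hypothesis applied to the corresponding test sub-run. The remaining cases---$\bbox{r}\psi$, $\ddiamondcp{r}\psi$, and $\bboxcp{r}\psi$---are analogous, with unparameterized box-operators imposing no constraint on the lengths of their $Q^r$-paths, which is consistent with $\varphi$ (by assumption of the section) containing only parameterized diamond-operators.
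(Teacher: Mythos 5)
Your proposal is correct and follows essentially the same route as the paper: a structural induction that refines the correctness proof of Theorem~\ref{theorem_autconstruction}, using the fact that an $\ell$-step path through $Q^{r_j}$ corresponds to a match of length $\ell$ so that the bounded-match property is exactly the automaton-level counterpart of the bound $\alpha(x)$, and handling parameterized tests via the induction hypothesis and the disjointness of the $Q^{r_j}$. The only presentational difference is that the paper makes explicit the strengthening of Lemma~\ref{lemma_autrcorrectness} to matches $(0,n)\in\Rexp(r,w,\alpha)$ with tests evaluated relative to $\alpha$, which you use implicitly when discussing tests containing parameterized subformulas.
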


To prove this lemma, we first need to strengthen Lemma~\ref{lemma_autrcorrectness} to be able to deal with parameterized formulas in the tests. Fix a regular expression~$r$ with tests~$\theta_1?, \ldots, \theta_k?$, which might contain parameterized operators and let $\aut_r$ the $\epsilon$-NFA with markings obtained from the construction described above Lemma~\ref{lemma_autrcorrectness}. Note that the markings are the formulas~$\theta_1?, \ldots, \theta_k?$. Furthermore, let $w = w_0 w_1 w_2 \cdots \in (\pow{P'})^\omega$, and let $w_0 \cdots w_{n-1}$ be a (possibly empty, if $n=0$) prefix of $w$. The following two statements are equivalent for every $\alpha$:
\begin{enumerate}
	\item $\aut_r$ has an accepting run on $w_0 \cdots w_{n-1}$ with $\epsilon$-paths $\pi_i$ such that $(w_{i}w_{i+1}w_{i+2}\cdots, \alpha)  \models \bigwedge \marking(\pi_i)$ for every $i$ in the range~$0 \le i \le n$ .

\item $(0,n) \in \Rexp(r,w, \alpha)$.

\end{enumerate}

\begin{proof}[Proof of Lemma~\ref{lemma_runcharac}]
The following proof is a strengthening of the proof of Theorem~\ref{theorem_autconstruction}. Again, we proceed by induction over the structure of $\varphi$. 

First, we consider the direction from logic to automata. The induction starts for atomic formulas and
the induction steps for disjunction and conjunction are straightforward. Hence, it remains to consider the temporal operators.

Consider $\ddiamond{r}\psi$. If $(w, \alpha) \models \ddiamond{r}\psi$, then there exists a position~$n$ such that $(w_n w_{n+1} w_{n+2} \cdots, \alpha) \models \psi$ and $(0,n) \in \Rexp(r,w,\alpha)$. Hence, due to the strengthening of Lemma~\ref{lemma_autrcorrectness}, there is an accepting run of $\aut_r$ on $w_0 \cdots w_{n-1}$ such that the tests visited during the run are satisfied with respect to $\alpha$ by the appropriate suffixes of $w$. Thus, applying the induction hypothesis yields accepting runs of the appropriate test automata~$\aut_{\theta_j'}$ on these suffixes which satisfy the bounded-match property. Also, there is an accepting run of $\aut_{\psi'}$ on $w_n w_{n+1} w_{n+2} \cdots$ which satisfies  the bounded-match property, again by induction hypothesis. These runs can be \quot{glued} together to build an accepting run of $\aut_{(\ddiamond{r}\psi)'}$ on $w$ satisfying the bounded-match property.

Now, consider $\ddiamondle{r_j}{x}\psi_j$. If $(w, \alpha) \models\ddiamondle{r_j}{x}\psi_j$, then there is a position~$n \le \alpha(x)$ such that $(w_n w_{n+1} w_{n+2}\cdots,\alpha)  \models \psi_j$ and $(0,n) \in \Rexp(r,w, \alpha)$. Recall that we removed the parameter to obtain $\varphi'$. Thus, we can argue as in the previous case and obtain runs of $\aut_{r_j}$, of the appropriate test automata~$\aut_{\theta_j'}$, and of $\aut_{\psi'}$, all satisfying the induction hypothesis. In particular, the run of $\aut_{r_j}$ has length~$n \le \alpha(x)$ and therefore satisfies  the bounded-match property. Thus, the glued run of $\aut_{(\ddiamondle{r_j}{x}\psi_j)'} = \aut_{\ddiamond{r_j}\psi_j'}$ satisfies the bounded-match property as well.

The case for $\bbox{r}\psi$ is dual to the one for $\ddiamond{r}\psi$, while the cases for the change\-point-bounded operators~$\ddiamondcp{r}\psi$ and $\bboxcp{r}\psi$ are analogous, using the fact that $\aut_{cp}$ accepts words which have at most one changepoint. 

Now, we consider the other direction, where the induction starts for atomic formulas and the induction steps for disjunction and conjunction are again straightforward. 

We continue with formulas of the form~$\ddiamond{r}\psi$. Let $\rho$ be an accepting run of $\aut_{(\ddiamond{r}\psi)'}$ on $w$. Let $n \ge 0$ be the last level of $\rho$ that contains a state from $Q^r$. Such a level has to exist since states in $Q^r$ are not accepting and they have no incoming edges from states of the automata~$\aut_{\psi'}$ and $\aut_{\theta_j'}$ (the $\theta_j'$ are the tests in $r$), but the initial state of $\aut_{(\ddiamond{r}\psi)'}$ is in $Q^r$. Furthermore, $\aut_{(\ddiamond{r}\psi)'}$ is non-deterministic when restricted to states in $Q^r \setminus C^r$. Hence, we can extract an accepting run of $\aut_r$ from $\rho$ on $w_0 \cdots w_{n-1}$ that additionally satisfies the requirements formulated in the strengthening of Lemma~\ref{lemma_autrcorrectness}, due to the transitions into the test automata and an application of the induction hypothesis. Hence, we have $(0,n) \in \Rexp(r,w,\alpha) $. Also, from the remainder of $\rho$ (levels greater or equal to $n$) we can extract an accepting run of $\aut_{\psi'}$ on $w_n w_{n+1} w_{n+2} \cdots$ satisfying the bounded-match property. Hence, $(w_n w_{n+1} w_{n+2}\cdots , \alpha)  \models \psi$ by induction hypothesis. So, we conclude $(w, \alpha) \models \ddiamond{r}\psi$.

In the case of $\ddiamondle{r_j}{x}\psi_j$, the reasoning is similar: we have removed the parameter to obtain $\varphi'$. Thus, we end up in an analogous situation as in the previous case, but the level~$n$ satisfies $n \le \alpha(x)$ due to the bounded-match property. This implies $(w, \alpha) \models \ddiamondle{r_j}{x}\psi_j$.

Again, the case for $\bbox{r}\psi$ is dual to the one for $\ddiamond{r}\psi$ and the cases for the changepoint-bounded operators~$\ddiamondcp{r}\psi$ and $\bboxcp{r}\psi$ rely on the fact that $\aut_{cp}$ only accepts words which have at most one changepoint.
\end{proof}

Next, we show that words having runs as described in Lemma~\ref{lemma_runcharac} can be recognized by a non-deterministic Büchi automaton: the following lemma concludes the proof of Item~\ref{thm_smallautomata_nondet} of Theorem~\ref{thm_smallautomata}. To this end, we extend the classical Breakpoint construction~\cite{MiyanoH84} by counters that check the bounded-match property: the original construction yields an automaton that guesses an accepting run of a given alternating Büchi automaton level by level, which are represented as the set of states they contain. We employ the counters~$\gamma$ to keep track of the length of paths in $Q^{r_j}$ in the guessed run. If the bound $\alpha(x)$ is exceeded, then the guessed run is discarded.   

\begin{lemma}
	\label{lemma_runreadingaut}
There exists a non-deterministic Büchi automaton of size~$(3 \cdot(\alpha(x)+1))^{\mathcal{O}(\size{\varphi})}$ that accepts $w \in (\pow{P'})^\omega$ if and only if $\aut_{\varphi'}$ has an accepting run on $w$ satisfying the bounded-match property.
\end{lemma}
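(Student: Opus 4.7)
The plan is to extend the classical Miyano--Hayashi breakpoint construction~\cite{MiyanoH84}, which converts an alternating Büchi automaton into an equivalent non-deterministic one, by augmenting the subset states with a counter component that enforces the bounded-match property. Recall that in Miyano--Hayashi a state is a pair $(S,O)$ with $S\subseteq Q$ the current ``level'' of the guessed run and $O\subseteq S$ the outstanding breakpoint obligation; on input $a$, a choice function selects for each $q\in S$ a minimal model $C_q$ of $\delta(q,a)$, yielding $S'=\bigcup_{q\in S}C_q$ and the corresponding $O'$, with Büchi condition ``$O=\emptyset$ infinitely often''.

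I would add a counter map $\gamma\colon S\cap\bigcup_{j=1}^k Q^{r_j}\to\{0,1,\dots,\alpha(x)\}$ intended to record, for each state $q$ in the current level that lies in some $Q^{r_j}$, the maximum length $\ell$ of a path with $\ell+1$ nodes ending at $(q,n)$ in the guessed run DAG whose states all lie in $Q^{r_j}$. Given a choice function producing $S'$, the update rule is
\[
\gamma'(q')=\max\bigl(\{0\}\cup\{\gamma(q)+1\mid q\in S\cap Q^{r_j},\ q'\in C_q\}\bigr)
\]
for every $q'\in S'\cap Q^{r_j}$, and the transition is simply undefined whenever this value would exceed $\alpha(x)$; initially $\gamma(q_0)=0$ if $q_0$ lies in some $Q^{r_j}$, and is undefined otherwise. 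Acceptance remains the Miyano--Hayashi breakpoint condition.

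Correctness follows by a routine induction on levels: the recurrence above is exactly the one governing the maximum $Q^{r_j}$-path length terminating at $(q',n+1)$ in any run DAG, so $\gamma(q)$ faithfully tracks this quantity, and the blocking rule equivalently enforces the bounded-match property at every level. Conversely, any accepting run of $\aut_{\varphi'}$ satisfying the bounded-match property projects to a valid sequence of choice functions whose counter computation stays within the bound. For the size, each $q\in Q$ contributes $3$ possibilities (absent, in $S\setminus O$, or in $O$) if $q\notin\bigcup_j Q^{r_j}$, and $1+2(\alpha(x)+1)\le 3(\alpha(x)+1)$ possibilities otherwise (absent, or present with one of $\alpha(x)+1$ counter values, times two for membership in $O$); thus the total state count is bounded by $(3(\alpha(x)+1))^{\size{Q}}=(3(\alpha(x)+1))^{\mathcal{O}(\size{\varphi})}$, using the linear state bound of Theorem~\ref{theorem_autconstruction}. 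The main obstacle is justifying that maintaining only the pointwise maximum $\gamma(q)$, rather than a full history of $Q^{r_j}$-paths through the DAG, is sufficient; this reduces to the observation that the longest $Q^{r_j}$-path ending at $(q',n+1)$ must extend the longest such path ending at one of its $Q^{r_j}$-predecessors in the DAG, so the $\max$ in the update correctly captures it and a bounded-match violation is witnessed at a finite level as soon as some $\gamma'(q')$ would reach $\alpha(x)+1$.
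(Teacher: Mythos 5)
Your proposal is correct and follows essentially the same route as the paper: augment the Miyano--Hayashi breakpoint construction with a per-state counter over $\{0,\dots,\alpha(x)\}$ on $T\cap\bigcup_j Q^{r_j}$ that is updated along the edges of the guessed level-to-level graph and blocks transitions once the bound is violated, with the standard breakpoint acceptance condition. The only (immaterial) difference is that you count path lengths upward via a pointwise maximum, whereas the paper maintains the remaining budget downward via a pointwise minimum starting from $\alpha(x)$; your justification that the pointwise extremum suffices, and your state-count estimate, match the paper's bound.
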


\begin{proof}
Let $\aut_{\varphi'} = (Q, \pow{P'}, q_0, \delta, F)$ be as above
and recall that $Q^{r_j} \subseteq Q$ for $j \in \set{1, \ldots, k}$ is the set of states which has to be left after at most $\alpha(x)$ steps in order to satisfy the bounded-match property. Define the Büchi automaton~$\aut' = (Q', \pow{P'}, q_0', \delta', F')$ with
\begin{itemize}

	\item $Q' = \set{(T, O, \gamma) \mid Q \supseteq T \supseteq O \text{ and } \gamma \in \set{0, 1, \ldots, \alpha(x)}^{T \cap \bigcup_{j =1}^k Q^{r_j}}}$,

	\item $q_0' = (\set{q_0}, \emptyset, \gamma)$, where $\gamma(q_0) = \alpha(x)$ if $q_0 \in \bigcup_{j=1}^k Q^{r_j}$,

	\item $F' = \set{ (T, \emptyset, \gamma) \mid (T, \emptyset, \gamma) \in Q' }$, and 

	\item $ \delta'((T, O, \gamma),A) $ is equal to 

\end{itemize}
 	\begin{align*}
 		 \{ (T',T' \setminus F,\update{\gamma}{G}) \mid & \text{ exists graph $G = (T \cup T', E)$ with $E \subseteq T \times T'$}\\
 		& \text{s.t.\ $\suc{G}(q) \models \delta(q, A)$ for every $q \in T$}\} \cap Q'
 	\end{align*}
if $O = \emptyset$, and equal to 
 	\begin{align*}
 		 \{ (T',O' \setminus F,\update{\gamma}{G}) \mid& \text{ $O' \subseteq T'$ and there} \\
 		 & \text{exists graph $G = (T \cup T', E)$ with $E \subseteq T \times T'$}\\
 		& \text{s.t.\ $\suc{G}(q) \models \delta(q, A)$ for every $q \in T$, and}\\
 		& \text{$\suc{G\upharpoonright(O \cup O')}(q) \models \delta(q, A)$ for every $q \in O$}\} \cap Q'
 	\end{align*}
 	if $O \neq \emptyset$.
Here, $\suc{G}(q)$ denotes the set of successors of $q$ in $G$, $G \upharpoonright (O \cup O')$ is the restriction of $G$ to $O \cup O'$, and $\update{\gamma}{G}$ is defined via
\[\update{\gamma}{G}(q') = \min \set{\alpha(x), \gamma(q)-1 \mid (q,q') \in E \text{ and $q,q' \in Q^{r_j}$ for some $j$}}.\]
Note that we might have $\update{\gamma}{G}(q') < 0$, which implies that $\update{\gamma}{G}$ is not the third component of a state of $\aut'$ and explains the intersection with $Q'$ in the definition of $\delta'$. Thus, the counter~$\gamma$ prevents the simulation of runs of $\aut_{\varphi'}$ that violate the bounded-match property by blocking transitions. 

Intuitively, the graphs used to define the transition relation~$\delta'$ are building blocks for runs of $\aut_{\varphi'}$ that contain two levels of a run as well as the edges between them that witness the satisfaction of the transition relation~$\delta$. As already explained, the counter~$\gamma$ ensures that every path through some $Q^{r_j}$ is of length $\alpha(x)$ or less.  

In the first two components of $\aut'$, we implement the Breakpoint construction while we use the third component to implement a counter that checks the bounded-match property. The correctness of this construction follows directly from the correctness of the Breakpoint construction~\cite{MiyanoH84}. 
\end{proof}

In case $\varphi$ has parameterized box-operators, e.g., with variable~$y$, Lemma~\ref{lemma_runcharac} reads as follows: 

\begin{quote}
$(w, \alpha) \models \varphi$ if and only if $\aut_{\varphi'}$ has an accepting run~$\rho$ where every path of the form~$(q_n, n) \cdots (q_{n+ \ell}, n+\ell)$ in $\rho$ with $q_n, \ldots, q_{n+\ell} \in Q^{r_j}$ for some $j \in \set{1, \ldots, k}$ satisfying $\ell > \alpha(x)$ may end in a terminal vertex~$(q_{n+ \ell}, n+\ell)$.
\end{quote}
As before, adapting the Breakpoint construction by adding a counter mapping states in ${T \cap \bigcup_{j =1}^k Q^{r_j}}$ to $\set{0, 1, \ldots, \alpha(x)}$ yields a non-deterministic Büchi automaton that accepts exactly those words having a run that satisfies the bounded-match property for formulas with parameterized box-operators.

\section{Conclusion}
\label{sec_conc}
We introduced Parametric Linear Dynamic Logic, which extends Linear Dynamic Logic by temporal operators equipped with parameters that bound their scope, similarly to Parametric Linear Temporal Logic, which extends Linear Temporal Logic by parameterized temporal operators. Here, the model checking problem asks for a valuation of the parameters such that the formula is satisfied with respect to this valuation on every path of the transition system. Realizability  is defined in the same spirit. 

We showed $\pldl$ model checking and $\pldl$ assume-guarantee model checking to be  $\pspace$-complete and  $\pldl$ realizability to be $\twoexp$-complete, just as for $\ltl$. Thus, in a sense, $\pldl$ is not harder than $\ltl$. Finally, we were able to give tight exponential respectively doubly-exponential bounds on the optimal valuations for model checking and realizability.

With respect to the computation of optimal valuations, we have shown this to be possible in polynomial space for model checking and in triply-exponential time for realizability, which is similar to the situation for $\pltl$~\cite{AlurEtessamiLaTorrePeled01,Zimmermann13}. Note that it is an open question whether optimal valuations for $\pltl$ realizability can be determined in doubly-exponential time. Recently, a step towards this goal was made by giving an $\frac{1}{2}$-approximation algorithm with doubly-exponential running time~\cite{TentrupWeinertZimmermann15}.

\bibliographystyle{splncs03}
\bibliography{biblio}

\end{document}